\theoremstyle{thmstyleone}
\newtheorem{theorem}{Theorem}
\newtheorem{corollary}{Corollary}
\theoremstyle{thmstyletwo}
\newtheorem{assumption}{Assumption}
\newtheorem{remark}{Remark}
\theoremstyle{thmstylethree}
\newtheorem{example}{Example}
\newtheorem{definition}{Definition}
\newkeycommand{\texthcancel}[hshiftstart=0pt,vshiftstart=0pt,hshiftend=0pt,vshiftend=0pt,color=black][1]{%
  \tikz[baseline=(tocancel.base)]{
        \node[inner sep=0pt,outer sep=0pt] (tocancel) {#1};
        \draw[\commandkey{color}]
            ($(tocancel.south west)+(\commandkey{hshiftstart},\commandkey{vshiftstart})$) --
            ($(tocancel.north east)+(\commandkey{hshiftend},\commandkey{vshiftend})$);
  }%
}%
\newcommand{\hcancel}[1]{\text{\texthcancel{${#1}$}}}
\def\orcidID#1{\href{http://orcid.org/#1}{\raisebox{-1.25pt}{\includegraphics{orcid_color.eps}}}}
\renewcommand{\paragraph}[1]{\par\smallskip\noindent\textbf{#1}}
\newcommand{\limpl}{\rightarrow}    
\newcommand{\Union}{\bigcup}
\renewcommand{\paragraph}[1]{\par\smallskip\noindent\textbf{#1}}
\newcommand{\vampire}{\textsc{Vampire}}
\newcommand{\bvars}{\mathcal{B}}
\newcommand{\vars}{\mathcal{V}}
\newcommand{\terms}{\mathcal{T}}
\newcommand{\positive}{\mathit{positive}}
\newcommand{\reverse}{\mathit{reverse}}
\newcommand{\PropF}{F}
\newcommand{\sideP}{S}
\newcommand{\mainP}{M}
\newcommand{\sideL}{s}
\newcommand{\mainL}{m}
\newcommand{\IncSubst}{\tilde{\Sigma}}
\newcommand{\substincompat}{\Cap}
\newcommand{\NoSubsumptionResolution}{\texttt{NoSubsumptionResolution}}
\newcommand{\NoSubsumption}{\texttt{NoSubsumption}}
\newcommand{\Subsumption}{\texttt{Subsumed}}
\newcommand{\CommentedOut}[1]{}
\newcommand{\SR}{\texttt{SR}}
\newcommand{\SEnc}{\mathcal{E}_{\texttt{S}}}
\newcommand{\SREnc}{\mathcal{E}_{\SR}}
\newcommand{\DirEnc}{\mathcal{E}_\SR^d}
\newcommand{\IndEnc}{\mathcal{E}_\SR^i}
\newcommand{\MS}{\Pi}
\newcommand{\AMO}{\mathit{AMO}}
\newcommand\upstrut{\rule{0pt}{10pt}}
\newcommand\downstrut{\rule[-6pt]{0pt}{6pt}}
\DeclareMathOperator{\E}{\mathbb{E}}
\newcommand{\PreserveBackslash}[1]{\let\temp=\\#1\let\\=\temp}
\newcolumntype{C}[1]{>{\PreserveBackslash\centering}p{#1}}
\newcommand{\gray}[1] {
    \protect\leavevmode
    \begingroup
        \color{gray}%
        #1%
    \endgroup
}
\title{SAT Solving for Variants of First-Order Subsumption}
\author*[1]{\fnm{Robin} \sur{Coutelier}}\email{robin.coutelier@tuwien.ac.at}
\author[1]{\fnm{Jakob} \sur{Rath}}\email{jakob.rath@tuwien.ac.at}
\author[1]{\fnm{Michael} \sur{Rawson}}\email{michael@rawsons.uk}
\author[2]{\fnm{Armin} \sur{Biere}}\email{biere@cs.uni-freiburg.de}
\author[1]{\fnm{Laura} \sur{Kovács}}\email{laura.kovacs@tuwien.ac.at}
\affil[1]{\orgname{TU Wien}, \orgaddress{\state{Vienna}, \country{Austria}}}
\affil[2]{\orgname{University of Freiburg}, \orgaddress{\state{Freiburg}, \country{Germany}}}
\abstract{%
    Automated reasoners, such as SAT/SMT solvers and first-order provers, are
becoming the backbones of rigorous systems engineering, being used for example
in applications of system verification, program synthesis, and cybersecurity.
Automation in these domains crucially depends on the efficiency of the
underlying reasoners towards finding proofs and/or counterexamples of the task
to be enforced. In order to gain efficiency, automated reasoners use dedicated
proof rules to keep proof search tractable. To this end, (variants of)
subsumption is one of the most important proof rules used by automated
reasoners, ranging from SAT solvers to first-order theorem provers and beyond.

It is common that millions of subsumption checks are performed during proof
search, necessitating efficient implementations.  However, in contrast to
propositional subsumption as used by SAT solvers and implemented using
sophisticated polynomial algorithms, first-order subsumption in first-order
theorem provers involves NP-complete search queries, turning the efficient use
of first-order subsumption into a huge practical burden.

In this paper we argue that the integration of a dedicated SAT solver opens up
new venues for efficient implementations of first-order subsumption and related
rules. We show that, by using a flexible learning approach to choose between
various SAT encodings of subsumption variants, we greatly improve the
scalability of first-order theorem proving.  Our experimental results
demonstrate that, by using a tailored SAT solver within first-order reasoning,
we gain a large speedup in solving state-of-the-art benchmarks.
    \begin{tikzpicture}[remember picture, overlay]%
        \node[anchor=south,align=left,text width=11cm,minimum height=3cm] at (current page.south) {
            This version of the article has been accepted for publication,
            after peer review but is not the Version of
            Record and does not reflect post-acceptance improvements, or any
            corrections. The Version of Record is available online at:
            \url{https://doi.org/10.1007/s10703-024-00454-1}
        };%
    \end{tikzpicture}%
}
\keywords{%
    First-Order Theorem Proving,
    SAT solving,
    Saturation,
    Subsumption
}
\begin{document}

\maketitle

\section{Introduction\label{sec:introduction}}
Most formal verification approaches use automated reasoners in their backend
to, for example, discharge verification conditions~\cite{DBLP:journals/software/Leino17,DBLP:journals/pacmpl/ClochardMP20,DBLP:conf/fmcad/GeorgiouGK20}, produce/block
counter-examples~\cite{SPACER16,IVY16,DBLP:conf/fmcad/AsadiBHFS20,DBLP:conf/cav/GarciaContrerasKSG23}, or enforce security and privacy
properties~\cite{DBLP:conf/fmcad/PickFG20,DBLP:conf/esop/MartinezADGHHNP19,DBLP:conf/sp/VeroneseFBTSM23,DBLP:conf/ccs/BruggerKKR023}. All these approaches crucially
depend on the efficiency of the underlying reasoning procedures,
ranging from SAT/SMT solving~\cite{Picosat,Z3,CVC5} to first-order
proving~\cite{Spass09,kovacs2013first,E19,zipperposition}.
\emph{In this paper, we focus on  effective extensions of first-order theorem proving with SAT-based reasoning, improving the state-of-the-art in  proving first-order (program)
properties.}

\paragraph{Saturation-Based Theorem Proving.} The leading algorithmic approach  in  first-order theorem proving is saturation~\cite{E19,kovacs2013first}.
While the concept of saturation is relatively unknown outside of the theorem
proving community, similar algorithms that are used in other areas, such as
Gr\"obner basis computation~\cite{Buchberger06a}, can be considered examples of
saturation algorithms. %
The key idea in saturation theorem proving is to reduce the problem of
proving the validity of a first-order formula~$A$ to the problem of establishing
unsatisfiability of~$\lnot A$ by using a sound inference system. 
That is, instead of proving~$A$, we refute~$\lnot A$, 
by selecting and applying inferences rules. 
In this paper, \emph{we focus on saturation algorithms
using the superposition calculus}, which is the 
most commonly
used  inference system for first-order logic with equality~\cite{Rubio01}.

\paragraph{Saturation with Redundancy.}
During saturation, the first-order prover keeps a set of
\emph{usable clauses} $C_1,\ldots,C_k$ with $k\geq 0$.
This is the set of clauses that the prover
considers as possible premises for inferences.
After applying an inference with one or more usable clauses
as premises, the consequence $C_{k+1}$ is added to the set of usable clauses.
The number of usable clauses is an important factor for the efficiency of proof search.
A naive saturation algorithm that keeps all derived clauses in the usable set
would not scale in practice.
One reason is that first-order formulas in general yield infinitely
many consequences. 
For example, consider the clause
\begin{equation}\label{ex:motivating}
  \lnot \positive(x) \lor \positive(\reverse(x)),
\end{equation}
where $x$ is a universally quantified variable ranging over the 
algebraic datatype {\tt list}, where list elements are integers;   $\positive$ is a unary predicate over {\tt
  list} such that $\positive(x)$ is valid iff all elements of $x$ are
nonnegative integers; and $\reverse$ is a unary function symbol
reversing a list. As such, clause~\eqref{ex:motivating}
asserts that the reverse of a list $x$ of
nonnegative integers is also a list of nonnegative integers (which
is clearly valid). Note that, when having clause~\eqref{ex:motivating} as a usable clause during proof search,
the clause $\lnot \positive(x) \lor \positive(\reverse^n(x))$ can be derived for any $n \geq
1$ from clause~\eqref{ex:motivating}. Adding $\lnot \positive(x) \lor \positive(\reverse^n(x))$ to the set of usable clauses
would, however, blow up the search space unnecessarily. This is
because $\lnot \positive(x) \lor \positive(\reverse^n(x))$  is a logical consequence of
clause~\eqref{ex:motivating}, and hence, if a formula $A$ can be proved
using $\lnot \positive(x) \lor \positive(\reverse^n(x))$, then $A$ is also provable using
clause~\eqref{ex:motivating}. Yet, storing $\lnot \positive(x) \lor \positive(\reverse^n(x))$ as usable formulas is highly inefficient as $n$ can be
arbitrarily large.

To avoid such and similar cases of unnecessarily increasing the set of
usable formulas during proof search,
first-order theorem provers implement the notion of
{\it redundancy}~\cite{Robinson65}, by extending the standard superposition
calculus with term/clause ordering and literal selection functions.
These orderings and selection functions are used to eliminate
so-called redundant clauses from the search space, where redundant
clauses are logical consequences of smaller clauses w.r.t. the considered
ordering.
In our example above, the clause
$\lnot \positive(x) \lor \positive(\reverse^n(x))$
would be a redundant clause as
it is a logical consequence of clause~\eqref{ex:motivating}, with
clause~\eqref{ex:motivating} being smaller (i.e. using fewer symbols)
than $\lnot \positive(x) \lor \positive(\reverse^n(x))$.
As such,
if clause~\eqref{ex:motivating} is already a usable
clause, saturation
algorithms implementing redundancy should ideally not store $\lnot
\positive(x) \lor \positive(\reverse^n(x))$ as usable clauses.
To detect and reason about redundant clauses, saturation
algorithms with redundancy extend the superposition inference system
with so-called {\it simplification rules}.
Simplification rules do not add new formulas to the set of (usable)
clauses in the search space,
but instead simplify and/or delete redundant formulas from the search
space,
without destroying the refutational completeness of
superposition:
if a formula $A$ is valid, then $\neg A$ can be refuted
using the superposition
calculus extended with simplification rules.
In our example above, this means that if $\neg A$ can be refuted using
$\lnot \positive(x) \lor \positive(\reverse^n(x))$, then $\neg A$ can be refuted in
the superposition
calculus extended with simplification rules, without using
$\lnot \positive(x) \lor \positive(\reverse^n(x))$ but using
clause~\eqref{ex:motivating} instead.

Ensuring that simplification rules are applied efficiently for eliminating
redundant clauses is, however, not trivial.
In this paper, we show that
\textit{%
    SAT-based approaches can effectively 
    identify the application of simplification rules during saturation},
    improving thus the efficiency of saturation algorithms with redundancy.

\paragraph{Subsumption for Effective Saturation.} While redundancy is
a powerful criterion for keeping the set of clauses used in proof
search as small as possible, establishing whether an arbitrary first-order
formula is redundant is as hard as proving whether it is valid.
For example, in order to derive that $\lnot \positive(x) \lor
\positive(\reverse^n(x))$ is redundant in our example above, the prover
should establish (among other conditions) that it is a logical consequence
of~\eqref{ex:motivating}, which essentially requires proving based on
superposition.
To reduce the burden of proving redundancy,
first-order provers implement  sufficient
conditions towards deriving redundancy, so that these conditions can
be efficiently checked (ideally using only syntactic arguments and no
proofs). One
such condition comes with the notion of {\it subsumption}, yielding
one of the most impactful simplification rules in superposition-based
theorem proving~\cite{BG94} and SAT solving~\cite{DBLP:conf/sat/Biere04}.

The intuition behind first-order subsumption is that a (potentially large) instance of a clause~$C$
does not convey any additional information over~$C$,
and thus it should be avoided to have both $C$ and its instance in
the set of usable clauses; to this end, we say that the instance of
$C$ is subsumed by $C$.
More formally, a clause $C$ subsumes another clause $D$ if there is a substitution $\sigma$
such that $\sigma(C)$ is a submultiset of $D$\footnote{we consider a clause $C$ as a multiset of its literals}.
In such a case, subsumption removes the subsumed clause $D$ from the clause set.
To continue our example above, a unit clause $\positive(\reverse^m(x))$, with $m\geq 1$,
would prevent us from deriving $\lnot \positive(x) \lor \positive(\reverse^n(x))$ for any $n \geq m$,
and hence eliminate an infinite branch of clause derivations from the search space.

To detect possible inferences of subsumption and related rules,
state-of-the-art provers use a two-step approach~\cite{handbook-indexing}:
(i)~retrieve a small set of candidate clauses, using literal filtering
methods,
and then
(ii)~check whether any of the candidate clauses represents an actual instance of the rule.
Step~(i) has been well researched over the years, leading to highly efficient
indexing solutions~\cite{voronkov-evaluation-indexing,schulz2013simple,handbook-indexing}.
Interestingly, step~(ii) has not received much attention,
even though it is known that checking subsumption relations between multi-literal clauses
is an NP-complete problem~\cite{matching-np-complete}.
Although indexing in step~(i) allows the first-order prover to skip
step (ii) in many cases,
the application of~(ii) in the remaining cases may remain  problematic
(due to  NP-hardness).
For example, while profiling subsumption in the world-leading
theorem prover \vampire{}~\cite{kovacs2013first}, we observed
subsumption applications, and in particular calls to the literal-matching
algorithm of step~(ii),  that consume more than 20 seconds of running
time.
Given that millions of such matchings are performed during a typical
first-order proof attempt,
we consider such cases highly inefficient, calling for improved
solutions towards step~(ii). In this paper we address this demand and
show that a {\it tailored SAT-based encoding can significantly improve the
literal matching, and thus subsumption}, in first-order theorem
proving.  We also advocate the flexibility of SAT solving for variants of subsumption, in particular when \emph{combining subsumption with resolution.} 

\paragraph{Our Contributions.}
    We bring the
    following main contributions.
    \smallskip

\noindent (1)
    We propose a {\it generic SAT-based encoding for capturing potential 
    applications of both subsumption and subsumption resolution} in first-order theorem proving (Sections~\ref{sec:defconstraints}-\ref{sec:satconstraints}).
    A solution to our SAT-based encoding gives a concrete application
    of subsumption and/or subsumption resolution, allowing the first-order prover to apply that
    instance of subsumption (resolution) as a simplification rule during
    saturation. Moreover, our encoding is complete in the sense that any instance of subsumption (resolution) is  a model of our SAT problem (Theorems~\ref{thm:subsumption-encoding-soundness-and-completeness}, \ref{thm:direct-encoding-soundness-and-completeness} and \ref{thm:indirect-encoding-soundness-and-completeness}). 

    \medskip

\noindent (2)
    We tailor encodings of subsumption and subsumption resolution for effective SAT-based redundancy elimination (Section~\ref{sec:implementation}). Importantly, we adjust unit propagation and
    conflict resolution in SAT solving towards efficient handling of subsumption and subsumption resolution (Section~\ref{sec:sat-solver}). Our resulting SAT-based redundancy approaches are directly integrated in saturation (Section~\ref{sec:loop-optimization}), without changing the underlying design of efficient saturation. 
    \medskip

\noindent (3)
    We establish a flexible learning approach to choose between encodings with different properties. We detail how to train decision trees to obtain the best complexity-efficiency trade-off in choosing encodings for subsumption resolution (Section~\ref{sec:heuristics}). 
    As part of an empirical study, we analyse the utility of solving subsumption and subsumption resolution problems for a large portion of our computation budget. We introduce a method to choose an appropriate cutoff threshold and stop the SAT search prematurely. We empirically show that solely solving simple instances of subsumption and subsumption resolution is not a good solution, even with an educated timeout strategy.

\medskip

    \noindent (4)     We implemented our SAT-based redundancy approach as a new SAT
    solver  in the
    \vampire{} theorem prover. 
    We empirically evaluate our approach on the standard benchmark library TPTP
    (Section~\ref{sec:experiments}).
    Our experiments demonstrate that using SAT solving for deciding and applying
    subsumption and subsumption resolution brings clear improvements in the saturation process
    of first-order proving,
    for example, improving the (time) performance of the prover
    by a factor of $1.36$ when both subsumption and subsumption resolution are enabled.

\medskip

\paragraph{Extension of Previous Works.}
This paper is an extended version of the conference papers
``First-Order Subsumption via SAT Solving''~\cite{rath2022first}
and
``SAT-Based Subsumption Resolution''~\cite{DBLP:conf/cade/CoutelierKRR23}
published at FMCAD 2022 and CADE 2023, respectively.

In Section~\ref{sec:satconstraints}, we extend the SAT-based subsumption framework  of~\cite{rath2022first} to  subsumption resolution and complemented~\cite{DBLP:conf/cade/CoutelierKRR23} with unifying support for both subsumption and subsumption resolution.
In Section~\ref{sec:implementation} we extend the SAT solving algorithms of~\cite{rath2022first,DBLP:conf/cade/CoutelierKRR23} to solve both subsumption and subsumption resolution.
As such, Sections~\ref{sec:defconstraints}-\ref{sec:implementation} unify the approaches of~\cite{rath2022first,DBLP:conf/cade/CoutelierKRR23} into a flexible technique for SAT-based redundancy checking in saturation. Our paper therefore adjusts  the texts of~\cite{rath2022first,DBLP:conf/cade/CoutelierKRR23} and extends their results with formal theoretical arguments and proofs. 

In addition, this paper brings the following new contributions when compared
to our papers~\cite{rath2022first,DBLP:conf/cade/CoutelierKRR23}. 
First, we introduce a symbolic approach to combine SAT-based encodings with learning heuristics to dynamically select the most promising encoding during run-time (Section~\ref{sec:choosing-encoding}). 
Second,  we expand preprocessing via pruning techniques and use our SAT solver only on  hard(er) problems (Section~\ref{sec:pruning}). Here, we provide a faster multilayered filter to detect unsatisfiable instances before they even reach the SAT engine.
Third, we bring in an empirically motivated approach to trade completeness of SAT-based subsumption (resolution) for computation time, by cutting off early the harder instances of subsumption and subsumption resolution (Section~\ref{sec:cutoff}).
We show that subsuming simple clauses is not enough in practice, strengthening our argument for more scalable algorithms in the context of redundancy elimination.

\section{Preliminaries\label{sec:preliminaries}}
We consider standard multi-sorted first-order logic, where we support all standard
Boolean connectives  $\land$, $\lor$, $\limpl$, $\lnot$  and quantifiers $\forall$ and $\exists$. 
We assume that
the language contains the logical constants $\top$ and $\bot$ for always true and always false
formulas, respectively.
Let~$\vars$ denote the
set of \emph{first-order} variables.  For the purpose of our work, we also use~$\bvars$ to denote a
set of \emph{Boolean} variables, where  Boolean variables (constants, atoms) are written as~$b$.
Throughout the paper,
we write $x$, $y$, $z$ for \emph{first-order} variables; 
$c$, $d$ for constants; 
$f$, $g$ for function symbols; 
and
$p$, $q$ for predicates.
The set of first-order terms~$\terms$ consists of variables, constants,
and function symbols applied to other terms; we denote terms by $t$.
First-order \emph{atoms}, or simply just atoms, are predicates applied to terms.
Atoms and negated atoms are also called first-order \emph{literals},
and denoted by $\ell$, $\sideL$, $\mainL$.
First-order \emph{clauses}, or simply just clauses, are disjunctions of literals,
denoted by $C$, $D$, $\sideP$, $\mainP$. For convenience, the literals of a clause will often be written with subscripted lower case letters, e.g., $\sideP = \sideL_1 \lor \sideL_2 \lor \dots \lor \sideL_k$.
For simplicity, the notation used
throughout this paper may
possibly use indices.

A clause that consists of a single literal is called a \emph{unit clause}.
Clauses are often viewed as multisets of literals; that is, a clause
$\sideP = \sideL_1\vee \sideL_2\vee \ldots\vee \sideL_k$ is considered to be the multiset
$\{\sideL_1, \sideL_2, \ldots, \sideL_k\}$.

An expression $E$ is a term, literal, or clause.
We denote the set of variables occurring in the expression~$E$ by~$\vars(E)$.
A \emph{substitution} is a partial function~$\sigma\colon\vars\to\terms$;
we occasionally write it as a set of mappings
$\sigma = \{ x_1 \mapsto t_1, \dots, x_n \mapsto t_n \}$.
The function $\sigma$ is extended to arbitrary expressions~$E$
by simultaneously replacing each variable~$x$ in~$E$ by~$\sigma(x)$,
for all variables~$x$ on which $\sigma$ is defined.
We say an expression~$E_1$ can be matched to an expression~$E_2$
if there exists a substitution~$\sigma$ such that $\sigma(E_1) = E_2$.
Additionally, we make the distinction between positive and negative polarity matches.
A positive polarity match $\sigma$ matches two literals $\sideL$ and $\mainL$ such that $\sigma(\sideL) = \mainL$,
whereas a negative (or opposite) polarity match would complement one of the literals (i.e., $\sigma(\sideL) = \neg \mainL$).

\paragraph{Saturation and  Subsumption.}
Most first-order theorem provers, see
e.g.~\cite{Spass09,kovacs2013first,E19}, implement saturation with redundancy, 
using the superposition calculus~\cite{BG94}.
A clause~$S$ subsumes a clause~$M$ iff there exists a substitution $\sigma$
such that $\sigma(S) \sqsubseteq M$, where~$S$ and~$M$ are treated as multisets of literals and $\sqsubseteq$ is the multiset inclusion operator.
\emph{Subsumption} is a simplification rule  that deletes subsumed
clauses from the search space during saturation.  Subsumption gives a
powerful basis for other simplification rules. For example, subsumption
resolution~\cite{kovacs2013first,E19}, also known as contextual
literal cutting  or self-subsuming resolution, is the combination of
subsumption with binary resolution. On the other hand, {subsumption demodulation}~\cite{gleiss-sd} results from combining subsumption with demodulation/rewriting.

\paragraph{SAT Solving.}
Modern SAT solvers, see e.g.~\cite{DBLP:conf/sat/EenS03,DBLP:conf/sat/BiereFW23,DBLP:journals/fmsd/FleuryB22}, are based on conflict-driven clause learning (CDCL)~\cite{Marques-SilvaLynceMalik:2021},
with the core procedures to \emph{decide}, \emph{unit-propagate}, and \emph{resolve-conflict}.
The SAT solver maintains a partial assignment of truth values to the Boolean variables.
Unit propagation (also called Boolean constraint propagation), that is
\emph{unit-propagate} in a SAT solver, 
propagates clauses w.r.t.\ the partial assignment.
If exactly one literal~$\ell$ in a clause remains unassigned in the current assignment
while all other literals are false,
the solver sets~$l$ to true to avoid a conflict.
The two-watched-literal scheme~\cite{MoskewiczMadiganZhaoZhangMalik:2001:Chaff}
is the standard approach for efficient implementation of unit propagation.

If no propagation is possible, the solver may choose a currently unassigned variable~$b$
and set it to true or false; hence, \emph{decide} in SAT solving.
The number of variables in the current assignment that have been assigned by decision
is called the \emph{decision level}.

If all literals in a clause are false in the current assignment,
the solver enters conflict resolution, via the \emph{resolve-conflict}
block of SAT solving.
If the current decision level is zero, the conflict follows unconditionally from the input clauses
and the solver returns ``unsatisfiable'' (UNSAT).
Otherwise, by analysing how the literals in the conflicting clause have been assigned,
the SAT solver may derive and learn a conflict lemma, undo some decisions, and continue solving.

\newpage
\section{Subsumption and Subsumption Resolution\label{sec:theory}}
In this section we formally define subsumption and subsumption resolution. These concepts yield important deletion/simplification rules during saturation. 

\begin{definition}[Subsumption\label{def:subsumption}]
   A clause~$\sideP$ \emph{subsumes} a  clause~$\mainP$
    iff there exists a substitution $\sigma$ such that
    \begin{equation}%
        \label{eq:subs}
        \sigma(\sideP) \sqsubseteq \mainP
        ,
    \end{equation}
    where~$\sqsubseteq$ denotes multiset inclusion.
    We call~$\sideP$ the \emph{side premise} of subsumption,
    and~$\mainP$ the \emph{main premise} of subsumption.
\end{definition}

Subsumed clauses are redundant~\cite{BG94}
and can thus be deleted from the search space
without compromising the completeness of the saturation algorithm.
Removing subsumed clauses $\mainP$ from the search space $F$ is implemented through a simplifying rule, checking condition~\eqref{eq:subs} over pairs of clauses $(\sideP,\mainP)$ from $F$.
To check condition~\eqref{eq:subs} for a clause pair $(\sideP,\mainP)$,
every literal in $\sideP$ is matched to some literal in $\mainP$;
if a compatible set of matches is found and
no literal in $\mainP$ is matched more than once,
then $\mainP$ can be removed from $F$.

\begin{example}\label{ex:subsumption}
    Consider the clause
    $\mainP \coloneqq p(g(c,d)) \lor \lnot p(f(d)) \lor \lnot q(y_1)$.
    \begin{itemize}
    \item
        $\sideP_1 \coloneqq p(g(x_1,x_2)) \lor \lnot q(x_3)$
        subsumes $\mainP$,
        as witnessed by the substitution $\sigma = \{ x_1 \mapsto c, x_2 \mapsto d, x_3 \mapsto y_1 \}$.
    \item
        $\sideP_2 \coloneqq p(g(x_1,x_2)) \lor \lnot q(x_1)$, does \emph{not} subsume $\mainP$.
        This is because the first literal of $S_2$ imposes $x_1 \mapsto c$, 
        while the second literal requires $x_1 \mapsto y_1$ in order to have $M$ subsumed. 
        Note that the substitution is applied only to the side premise; we do \emph{not} unify the clauses.
    \item
        $\sideP_3 \coloneqq p(g(x_1,d)) \lor p(g(c,x_2)) \lor \lnot q(x_3)$
        does not subsume $\mainP$,
        because only set inclusion can be satisfied,
        rather than {multi-set} inclusion.
    \end{itemize}
\end{example}

When subsumption~\eqref{eq:subs} for a clause pair $(S,M)$ fails, it might still be possible to simplify the clause $M$ by deleting one of its literals.
Subsumption resolution, referred to as \SR{} in the sequel, aims exactly to remove one redundant literal from a clause and is defined below.

\begin{definition}[Subsumption Resolution\label{def:subsumptionRes}]
    Clauses $\sideP$ and $\mainP$ are  the 
    \emph{side premise} and \emph{main premise} of \emph{subsumption resolution \SR}, respectively,
    iff there is a substitution $\sigma$,
    a set of literals $\sideP'\subseteq \sideP$, and a literal $\mainL'\in \mainP$ such that
    
    \begin{equation}%
        \label{eq:subsumption-resolution-definition}
        \sigma(\sideP') = \{\neg \mainL'\}
        \quad\text{and}\quad
        \sigma(\sideP\setminus \sideP') \subseteq \mainP\setminus \{\mainL'\}
        ,
    \end{equation}
   implying that $\mainP$ can be replaced by $\mainP\setminus \{\mainL'\}$.
    Subsumption resolution \SR{} is hence the rule
    
    \begin{prooftree}
        \AxiomC{$\sideP$}
        \AxiomC{$\cancel{\mainP}$}
        \RightLabel{(\SR).}
        \BinaryInfC{$\mainP \setminus \{\mainL'\}$}
    \end{prooftree}
\end{definition}

We indicate the deletion of a clause $\mainP$
by drawing a line through it, that is ($\cancel{\mainP}$). We refer to the literal $\mainL'$ of $\mainP$ as the \emph{resolution literal} of \SR.
Intuitively, subsumption resolution is binary resolution followed by subsumption of one of its premises by the conclusion.
However, by combining two inferences into one it can be treated as a simplifying inference,
which is advantageous from the perspective of efficient proof search.

\begin{example}\label{ex:subsumption-resolution}
    Consider  clause
    $\mainP \coloneqq p(g(c,d)) \lor \lnot p(f(d)) \lor \lnot q(y_1)$
    from Example~\ref{ex:subsumption}.
    \begin{itemize}
    \item
        $\sideP_4 \coloneqq \lnot p(g(x_1,x_2)) \lor \lnot q(x_3)$
        allows subsumption resolution with main premise~$\mainP$
        using the substitution~$\sigma = \{ x_1 \mapsto c, x_2 \mapsto d, x_3 \mapsto y_1 \}$. 
        Under this substitution, we have
        $\sigma(\sideP_4) = \lnot p(g(c,d)) \lor \lnot q(y_1)$. We resolve $\sigma(\sideP_4)$ and $\mainP$
        to obtain the conclusion $\lnot p(f(d)) \lor \lnot q(y_1)$, which subsumes $\mainP$. We thus have
        \begin{prooftree}
            \AxiomC{$\lnot p(g(x_1,x_2)) \lor \lnot q(x_3)$}
            \AxiomC{$\hcancel{p(g(c,d)) \lor \lnot p(f(d)) \lor \lnot q(y_1)}$}
            \RightLabel{(\SR).}
            \BinaryInfC{$\lnot p(f(d)) \lor \lnot q(y_1)$}
        \end{prooftree}
    \item
        $\sideP_5 \coloneqq \lnot p(g(x_1,d)) \lor \lnot p(g(c,x_2)) \lor \lnot q(x_3)$
        allows subsumption resolution with $\mainP$ with the same substitution $\sigma$
        and conclusion as used for $\sideP_4$.
        In contrast to $\sideP_4$, two literals of $\sideP_5$ are mapped to the resolution literal.
    \item
        $S_6 \coloneqq p(f(x_1)) \lor q(x_2)$
        does not allow subsumption resolution with $\mainP$,
        because at most one opposite polarity match is permitted.
    \item
        $S_7 \coloneqq p(g(c,x_1)) \lor p(f(x_1)) \lor \lnot p(f(x_2))$
        does not allow subsumption resolution with $\mainP$.
        While we can find a candidate resolution literal
        by matching $p(f(x_1))$ to $\lnot p(f(d))$,
        there is no possible match for $\lnot p(f(x_2))$
        since same-polarity matches to the resolution literal are not permitted.
    \item
        $S_8 \coloneqq p(g(c,x_1)) \lor p(f(x_1)) \lor r(x_2)$
        does not allow subsumption resolution with $\mainP$,
        because there is no possible match for $r(x_2)$.
    \end{itemize}
\end{example}

We note that subsumption and subsumption resolution are NP-complete problems~\cite{handbook-indexing,matching-np-complete}.
In this paper, we advocate the use of state-of-the-art SAT solving and provide tailored SAT encodings for subsumption and subsumption resolution, as follows.
In Section~\ref{sec:defconstraints}, we express subsumption and subsumption resolution through constraints, allowing us to encode subsumption (resolution) as a SAT problem in Section~\ref{sec:satconstraints}.

\section{Subsumption Constraints\label{sec:defconstraints}}
Throughout the remainder of the paper,
we assume that clauses do not have duplicate literals
and do not contain both a literal and its negation,
as expressed by Assumption~\ref{ass:no-duplicates} below.
Only substitutions may collapse several literals into one,
as illustrated in Example~\ref{ex:subsumption}.

\begin{assumption}[No Duplicates\label{ass:no-duplicates}]
    We assume that a clause $C = \ell_1 \lor \ell_2 \lor \dots \lor \ell_k$ does not have duplicate atoms.
    That is, $C$ does not contain duplicate literals, nor a literal and its negation.
    \begin{equation}
        \textbf{no duplicates}\quad \text{for any~} C = \ell_1 \lor \ell_2  \lor \dots \lor \ell_k:~
        \forall i \, i'.\ \bigl( i \neq i' \Rightarrow \ell_i \neq \ell_{i'} \land \ell_i \neq \neg \ell_{i'} \bigr)
        \label{eq:no-duplicates}
    \end{equation}
\end{assumption}

We first show that the application of subsumption (Theorem~\ref{thm:subsumption-constraints})
and subsumption resolution (Theorem~\ref{thm:subsumption-resolution-constraints})
can precisely be captured by substitution constraints, as follows.

\begin{theorem}[Subsumption Constraints]%
  \label{thm:subsumption-constraints}%
  Consider two clauses $\sideP = \sideL_1 \lor \sideL_2 \lor \dots \lor \sideL_k$
  and $\mainP = \mainL_1 \lor \mainL_2 \lor \dots \lor \mainL_n$,
  where~$\mainP$ does not contain duplicate literals.
  $\sideP$ subsumes~$\mainP$
  iff there exists a substitution~$\sigma$ that satisfies the following two properties:
  \begin{align}
    &\textbf{partial completeness} &
    \forall i \ldotp \exists j\ldotp \sigma(\sideL_i) = \mainL_j
    \label{eq:subsumption-completeness}\\
    &\textbf{multiplicity conservation} &
    \forall i \, i' \, j \ldotp \bigl(i \neq i' \land \sigma(\sideL_i) = \mainL_j \Rightarrow \sigma(\sideL_{i'}) \neq \mainL_{j} \bigr)
    \label{eq:subsumption-multiplicity}
  \end{align}
\end{theorem}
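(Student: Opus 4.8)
The plan is to unfold Definition~\ref{def:subsumption} and show that, for a \emph{fixed} substitution $\sigma$, the multiset inclusion $\sigma(\sideP) \sqsubseteq \mainP$ is equivalent to the conjunction of partial completeness~\eqref{eq:subsumption-completeness} and multiplicity conservation~\eqref{eq:subsumption-multiplicity}. Since ``$\sideP$ subsumes $\mainP$'' means precisely that some such $\sigma$ exists, quantifying over $\sigma$ on both sides then yields the stated biconditional. The central reformulation is to read multiset inclusion through occurrence counts: writing $\#_\ell(X)$ for the number of occurrences of a literal $\ell$ in a multiset $X$, the inclusion $\sigma(\sideP) \sqsubseteq \mainP$ holds iff $\#_\ell(\sigma(\sideP)) \le \#_\ell(\mainP)$ for every literal $\ell$. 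Here $\#_\ell(\sigma(\sideP))$ counts the indices $i$ with $\sigma(\sideL_i) = \ell$, and the hypothesis that $\mainP$ has no duplicate literals forces $\#_\ell(\mainP) \in \{0,1\}$ for every $\ell$ --- this is exactly the fact that makes the two constraints the right ones.

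For the forward direction I would assume $\sigma(\sideP) \sqsubseteq \mainP$. Partial completeness is then immediate: each $\sigma(\sideL_i)$ is an element of the multiset $\sigma(\sideP)$, hence an element of $\mainP$ by the inclusion, so $\sigma(\sideL_i) = \mainL_j$ for some $j$. For multiplicity conservation I would argue by contradiction: if $i \neq i'$ with $\sigma(\sideL_i) = \sigma(\sideL_{i'}) = \mainL_j$, then $\#_{\mainL_j}(\sigma(\sideP)) \ge 2$, whereas $\#_{\mainL_j}(\mainP) = 1$ because $\mainP$ has no duplicates, contradicting the inclusion.

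For the backward direction I would assume both constraints and verify $\#_\ell(\sigma(\sideP)) \le \#_\ell(\mainP)$ for every literal $\ell$, splitting on whether $\ell$ occurs in $\mainP$. If $\ell$ does not occur in $\mainP$, then partial completeness, which sends every $\sigma(\sideL_i)$ to some literal of $\mainP$, gives $\#_\ell(\sigma(\sideP)) = 0 \le 0$. If $\ell = \mainL_j$ occurs in $\mainP$, then no-duplicates makes this index $j$ unique and gives $\#_\ell(\mainP) = 1$, so it suffices to show that at most one index $i$ satisfies $\sigma(\sideL_i) = \mainL_j$; the existence of two such indices would violate multiplicity conservation. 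Hence the inclusion holds, completing the equivalence.

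The only delicate point --- and the main thing to get right --- is the multiset bookkeeping: because $\sigma$ may collapse several distinct side literals onto a single target literal (as in Example~\ref{ex:subsumption}), the relevant quantity is the multiplicity $\#_\ell(\sigma(\sideP))$ rather than mere set membership, and the no-duplicates hypothesis on $\mainP$ must be invoked precisely at the point where it pins every target multiplicity to at most $1$. Once this correspondence between occurrence counts and the two constraints is set up cleanly, both directions are short.
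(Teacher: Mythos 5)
Your proposal is correct and takes essentially the same route as the paper's proof: both reduce the multiset inclusion $\sigma(\sideP) \sqsubseteq \mainP$, using that the duplicate-free $\mainP$ has all literal multiplicities at most one, to elementwise membership (partial completeness) plus the condition that $\sigma$ does not collapse two distinct side literals onto the same target literal (multiplicity conservation). Your explicit occurrence-count bookkeeping is simply a more detailed rendering of the decomposition the paper states directly.
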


\begin{proof}
    Because~$\mainP$ does not contain duplicate literals,
    the subsumption condition~$\sigma(\sideP) \sqsubseteq \mainP$
    amounts to the two statements
    (i) each element of~$\sigma(\sideP)$ is an element of~$\mainP$ and
    (ii) the multiplicity of elements in~$\sigma(\sideP)$ is at most one,
    i.e., there are no duplicates in~$\sigma(\sideP)$.

    Statement (i) is equivalent to \textbf{partial completeness}~\eqref{eq:subsumption-completeness}.

    Given~\eqref{eq:subsumption-completeness},
    \textbf{multiplicity conservation}~\eqref{eq:subsumption-multiplicity}
    can be rewritten into
    \[
        \forall i \, i' \ldotp \bigl(i \neq i' \Rightarrow \sigma(\sideL_i) \neq \sigma(\sideL_{i'}) \bigr)
        ,
    \]
    which is equivalent to statement (ii).
\end{proof}

Note that the \textbf{partial completeness} property~\eqref{eq:subsumption-completeness}
ensures that all literals $\sigma(\sideL_i)$ have a literal $\mainL_j$ to which they match.
\textbf{Partial completeness}~\eqref{eq:subsumption-completeness} alone would,
however, encode a simple subset inclusion.
The \textbf{multiplicity conservation} constraint~\eqref{eq:subsumption-multiplicity}
ensures the preservation of the cardinality of the multi-set.
In fact, due to Assumption~\ref{ass:no-duplicates}, $\mainP$ is a simple set,
and
\textbf{multiplicity conservation}~\eqref{eq:subsumption-multiplicity}
prevents the substition~$\sigma$ from
collapsing several literals into one.
As a result of Theorem~\ref{thm:subsumption-constraints}, only one literal in~$\sideP$
can be matched to any literal of~$\mainP$.

Similarly to Theorem~\ref{thm:subsumption-constraints},
we show that subsumption resolution can be formalised through four constraints, as follows. 

\begin{theorem}[Subsumption Resolution Constraints]
  The clauses~$\mainP = \mainL_1 \lor \mainL_2 \lor \dots \lor\mainL_n$ and~$\sideP = \sideL_1 \lor \sideL_2 \lor \dots\lor\sideL_k$ are respectively the main and side premises of an instance of the subsumption resolution rule \SR{} 
  iff there exists a substitution $\sigma$ that satisfies the following four properties:
  \begin{align}
    &\textbf{existence} &
    \exists i\,j\ldotp\sigma(\sideL_i) = \neg \mainL_j \label{eq:subsumption-resolution-existence} \\
    &\textbf{uniqueness} &
    \exists j'\ldotp \forall i\,j\ldotp \bigl(\sigma(\sideL_{i}) = \neg \mainL_{j} \Rightarrow j = j' \bigr)
    \label{eq:subsumption-resolution-uniqueness} \\
    &\textbf{completeness} &
    \forall i \ldotp \exists j\ldotp \bigl( \sigma(\sideL_i) = \neg \mainL_j \lor \sigma(\sideL_i) = \mainL_j \bigr)
    \label{eq:subsumption-resolution-completeness}\\
    &\textbf{coherence} &
    \forall j \ldotp \bigl(\exists i \ldotp \sigma(\sideL_i) = \mainL_j\Rightarrow \forall i\ldotp \sigma(\sideL_i) \neq \neg \mainL_j\bigr)
    \label{eq:subsumption-resolution-coherence}
  \end{align}
  \label{thm:subsumption-resolution-constraints}
\end{theorem}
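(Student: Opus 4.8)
The plan is to prove the biconditional directionally, matching the four constraints against the two equations of Definition~\ref{def:subsumptionRes}. The workhorse throughout is Assumption~\ref{ass:no-duplicates} applied to $\mainP$: since $\mainP$ contains neither repeated literals nor a literal together with its complement, I can upgrade any literal identity $\mainL_j = \mainL_{j'}$ to the index identity $j = j'$, I can derive a contradiction from any complementarity $\mainL_j = \neg \mainL_{j'}$ with $j \neq j'$, and $\mainL_j = \neg \mainL_j$ is outright impossible. In both directions I take $\mainL' = \mainL_{j'}$ as the resolution literal, where $j'$ is the index singled out by \textbf{uniqueness}, and I set $\sideP' = \{\sideL_i : \sigma(\sideL_i) = \neg \mainL_{j'}\}$.

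For the forward direction I would start from a witnessing triple $(\sigma, \sideP', \mainL')$ satisfying $\sigma(\sideP') = \{\neg \mainL'\}$ and $\sigma(\sideP \setminus \sideP') \subseteq \mainP \setminus \{\mainL'\}$, write $\mainL' = \mainL_{j'}$, and read off the constraints. \textbf{Existence} is immediate since $\sigma(\sideP') = \{\neg \mainL_{j'}\}$ forces $\sideP' \neq \emptyset$. For \textbf{uniqueness} I take any negative match $\sigma(\sideL_i) = \neg \mainL_j$ and argue $\sideL_i \notin \sideP \setminus \sideP'$ (otherwise $\sigma(\sideL_i) \in \mainP \setminus \{\mainL_{j'}\}$ would be a positive literal of $\mainP$ equal to $\neg \mainL_j$, contradicting no-duplicates); hence $\sideL_i \in \sideP'$, so $\neg \mainL_j = \neg \mainL_{j'}$ and $j = j'$. \textbf{Completeness} follows by splitting each $\sideL_i$ on membership in $\sideP'$. \textbf{Coherence} follows because the second SR condition forbids $\mainL_{j'}$ from being positively matched, so any positively matched $\mainL_j$ satisfies $j \neq j'$, and \textbf{uniqueness} then excludes a simultaneous negative match to $\mainL_j$.

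For the backward direction I assume $\sigma$ satisfies the four constraints and reconstruct the SR witnesses. \textbf{Existence} together with \textbf{uniqueness} yields a well-defined index $j'$, so $\sideP' = \{\sideL_i : \sigma(\sideL_i) = \neg \mainL_{j'}\}$ is nonempty and gives $\sigma(\sideP') = \{\neg \mainL_{j'}\}$, the first equation of Definition~\ref{def:subsumptionRes}. For the second equation I apply \textbf{completeness} to each $\sideL_i \in \sideP \setminus \sideP'$: the negative-match alternative is excluded by \textbf{uniqueness} (it would place $\sideL_i$ back into $\sideP'$), so $\sigma(\sideL_i) = \mainL_j$ for some $j$, and I must check $j \neq j'$. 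If $j = j'$ then $\mainL_{j'}$ is positively matched, whence \textbf{coherence} forbids any negative match to $\mainL_{j'}$, contradicting \textbf{existence}; thus $j \neq j'$ and $\sigma(\sideL_i) \in \mainP \setminus \{\mainL_{j'}\}$.

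The main obstacle I anticipate is bookkeeping the case analysis cleanly, in particular ensuring that the resolution literal $\mainL_{j'}$ is never positively matched. This is exactly the content of \textbf{coherence} and is the one constraint whose necessity and sufficiency are least transparent; it is also the place where the full strength of Assumption~\ref{ass:no-duplicates} is used, since both ruling out $\mainL_j = \mainL_{j'}$ for $j \neq j'$ and ruling out $\mainL_j = \neg \mainL_{j'}$ rely respectively on the absence of duplicate literals and on the absence of a literal paired with its negation.
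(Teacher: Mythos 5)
Your proof is correct and takes essentially the same route as the paper's: your backward direction mirrors the paper's argument exactly (use \textbf{existence} and \textbf{uniqueness} to pin down the resolution literal $\mainL_{j'}$ and the set $\sideP' = \{\sideL_i : \sigma(\sideL_i) = \neg\mainL_{j'}\}$, then use \textbf{coherence} and \textbf{completeness} to place $\sigma(\sideP \setminus \sideP')$ inside $\mainP \setminus \{\mainL_{j'}\}$). The only difference is that you spell out the forward direction, which the paper dismisses as ``easy to see,'' and you make the repeated appeals to Assumption~\ref{ass:no-duplicates} explicit --- both sound.
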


\begin{proof}
    It is easy to see that the constraints \eqref{eq:subsumption-resolution-existence}-\eqref{eq:subsumption-resolution-coherence} hold whenever subsumption resolution applies.
    For the other direction, we assume that the four constraints \eqref{eq:subsumption-resolution-existence}-\eqref{eq:subsumption-resolution-coherence} hold, and prove that subsumption resolution applies on $(S,M)$. Let $\sideP, \mainP$ and $\sigma$ such that the four constraints hold. \textbf{Existence}~\eqref{eq:subsumption-resolution-existence} implies that there exists at least one literal $\mainL'\in \mainP$ and a non-empty set $\sideP' \subseteq \sideP$ such that $\neg \mainL' \in \sigma(\sideP')$. \textbf{Uniqueness}~\eqref{eq:subsumption-resolution-uniqueness} asserts that $\mainL'$ is unique, and therefore $\{\neg \mainL'\} = \sigma(\sideP')$. We can now divide the literals of $\sideP$ into two groups: $\sideP'$ and $\sideP^*$ such that $\sigma(\sideP') = \{\neg \mainL'\}$ and $\sideP^* = \sideP \setminus \sideP'$. \textbf{Coherence}~\eqref{eq:subsumption-resolution-coherence} ensures that $\mainL'\notin \sigma(\sideP^*)$.
    From \textbf{completeness}~\eqref{eq:subsumption-resolution-completeness}, we derive  $\sigma(\sideP^*) \subseteq \mainP$. Furthermore,  $\mainL'$ is unique and  $\mainL'\notin \sigma(\sideP^*)$. Therefore, $\sigma(\sideP^*) \subseteq \mainP \setminus \{\mainL'\}$. Putting everything together, we obtain $\sigma(\sideP') = \{\neg \mainL'\} \land \sigma(\sideP^*) \subseteq \mainP \setminus \{\mainL'\}$; hence \SR{} over $(S,M)$ applies.
\end{proof}

\section{SAT Formalization of Subsumption Constraints\label{sec:satconstraints}}
Based on the subsumption constraints of Theorems~\ref{thm:subsumption-constraints}
and~\ref{thm:subsumption-resolution-constraints},
we provide tailored SAT encodings for subsumption and subsumption resolution,
allowing us to devise custom SAT solving algorithms in Section~\ref{sec:sat-solver}
and integrate them in saturation Section~\ref{sec:loop-optimization}.
In what follows, we fix two arbitrary clauses $\sideP = \sideL_1 \lor \sideL_2 \lor \dots \lor \sideL_k$
and $\mainP = \mainL_1 \lor \mainL_2 \lor \dots\lor \mainL_n$,
and give all definitions relative to $(\sideP,\mainP)$.
Intuitively, the constraints defined in this section
encode the existence of a substitution~$\sigma$
which witnesses subsumption or subsumption resolution.

\paragraph{Variables and substitutions.}
Given the side premise $\sideP$ and main premise $\mainP$ of subsumption or subsumption resolution, we introduce two Boolean variables $b^+_{i,j}$ and $b^-_{i,j}$ for each literal pair $(\sideL_i, \mainL_j)$,  as follows:

\begin{align}
    b^+_{i,j} \Leftrightarrow \sigma(\sideL_i) = \mainL_j
    \label{eq:positive-match-def}\\
    b^-_{i,j} \Leftrightarrow \sigma(\sideL_i) = \neg \mainL_j
    \label{eq:negative-match-def}
\end{align}

We also define a set of substitutions $\Sigma^+_{i,j}$ and $\Sigma^-_{i,j}$, called \emph{substitution constraints},
such that $\Sigma^+_{i,j}(\sideL_i) = \mainL_j$, and $\Sigma^-_{i,j}(\sideL_i) = \neg \mainL_j$.
In the following, we write $\Sigma^\pm_{i,j}$ to refer to the substitution constraints of $\Sigma^+_{i,j}$ or $\Sigma^-_{i,j}$; when no such substitution exists, we write $\IncSubst$.
For example, let $\sideL_{1} = p(x, y)$ and $\mainL_{1} = \neg p(f(c), d)$. The two variables $b^+_{1,1}$, $b^-_{1,1}$ are associated to the pair $(\sideL_{1}, \mainL_{1})$ and the substitutions $\Sigma^+_{1,1} = \IncSubst$, $\Sigma^-_{1,1} = \{x\mapsto f(c), y\mapsto d\}$.

\begin{definition}[Match set]
    We define a \emph{match set $\MS(\sideP, \mainP)$} associated to clauses $\sideP$ and $\mainP$ to contain a set of Boolean variables and positive/negative polarity matches for each literal pair $(\sideL_i, \mainL_j)$ of $(\sideP,\mainP)$. That is, 
    \begin{equation}
        \MS(\sideP, \mainP) = \Bigl\{
            \left(b^+_{i,j}, \Sigma^+_{i,j}\right), \left(b^-_{i,j}, \Sigma^-_{i,j}\right)
            \Bigm| \sideL_i\in\sideP \land \mainL_j\in\mainP 
            \land \Sigma^+_{i,j}(\sideL_i) = \mainL_j
            \land \Sigma^-_{i,j}(\sideL_i) = \neg \mainL_j
        \Bigr\}
    \end{equation}
\end{definition}

\paragraph{Compatibility constraints.} 
Detecting the application of subsumption and/or subsumption resolution requires finding a substitution~$\sigma$ such that the subsumption constraints of Theorems~\ref{thm:subsumption-constraints}-\ref{thm:subsumption-resolution-constraints} are satisfied. We achieve this by imposing that the substitution constraints $\Sigma^\pm_{i,j} \subseteq \sigma$  are true iff $\Sigma^\pm_{i,j}$ are \emph{compatible} with a global substitution $\sigma$, in the following sense. 

\begin{definition}[Substitution Compatibility\label{def:Substitution:Compatibility}]
    A substitution $\Sigma$ is \emph{compatible} with another substitution $\Sigma'$ if they do not map the same variable to different terms. Formally, $\Sigma$ is compatible with $\Sigma'$ iff
    \begin{equation}
        \forall x. (\Sigma(x)=t \land \Sigma'(x)=t' \land t\neq x \land t'\neq x) \implies t = t'.
    \end{equation}
\end{definition}
\newpage
The compatibility of the substitution constraints  $\Sigma^\pm_{i,j} \subseteq \sigma$ with  $\sigma$ is encoded using the Boolean variables $b^\pm_{i,j}$,  as follows:
  
\begin{align}
  &\textbf{positive compatibility  } &
  \bigwedge_{i}\bigwedge_{j} \left(b_{i,j}^+ \Rightarrow \Sigma^+_{i,j} \subseteq \sigma \right)
  \label{eq:positive-compatibility}\\
  &\textbf{negative compatibility  } & 
  \bigwedge_{i}\bigwedge_{j} \left(b_{i,j}^- \Rightarrow \Sigma^-_{i,j} \subseteq \sigma \right)
  \label{eq:negative-compatibility}
\end{align}

 Note that   $\Sigma^+_{i,j}$ is a substitution constraint between $\sideL_i$ and $\mainL_j$. Further,   $\Sigma^+_{i,j} \subseteq \sigma \Leftrightarrow \sigma(\sideL_i) = \mainL_j$. Using $\Sigma^+_{i,j}$ together with~\eqref{eq:positive-compatibility}, we derive $b^+_{i,j} \Rightarrow \sigma(\sideL_i) = \mainL_j$. A similar result is obtained for compatibility of $\Sigma^-_{i,j}$. 

\subsection{SAT Encoding of Subsumption}
Note that Definition~\ref{def:subsumption} and Theorem~\ref{thm:subsumption-constraints} imply  that subsumption is restricted to only positive matches between literals of $S, M$. As such,  $b^-_{i,j}$ need not to be considered for subsumption. 

Using~\eqref{eq:positive-match-def}-\eqref{eq:negative-match-def}, we rewrite the subsumption constraints of Theorem~\ref{thm:subsumption-constraints} 
by replacing substitution constraints with the Boolean variable $b^+_{i,j}$, yielding:

\begin{align}
  &\textbf{SAT-based partial completeness  } &
  \bigwedge_{i} \bigvee_{j} b_{i,j}^+
  \label{eq:subsumption-completeness-sat}\\
  &\textbf{SAT-based multiplicity conservation  } &
  \bigwedge_j \AMO(\{b_{i,j}^+ \mid i = 1,...,k\})
  \label{eq:subsumption-multiplicity-sat}
\end{align}
where $\AMO(\{b_{i,j}^+ \mid i = 1,...,k\})$ is 
 an at-most-one constraint ensuring that  at most one variable $b_{i,j}^+$ is true at the same time.

\begin{theorem}\label{thm:subsumption-backward-compatibility}
    Assume that clause $\mainP$  does not have duplicate atoms, as in~\eqref{eq:no-duplicates}. Let  $\MS(\sideP, \mainP) = \left\{\left(b^\pm_{i,j}, \Sigma^\pm_{i,j}\right)\right\}$ be  the match set of $\sideP$ and $\mainP$.
   \textbf{Positive compatibility}~\eqref{eq:positive-compatibility} and \textbf{SAT-based partial completeness}~\eqref{eq:subsumption-completeness-sat} imply $\Sigma^+_{i,j}\subseteq \sigma  \Rightarrow  b^+_{i,j}$.
\end{theorem}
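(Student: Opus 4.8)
The plan is to derive the converse implication from the one direction already guaranteed by \textbf{positive compatibility}~\eqref{eq:positive-compatibility}, using \textbf{SAT-based partial completeness}~\eqref{eq:subsumption-completeness-sat} to supply a ``witness'' column and the no-duplicates hypothesis on $\mainP$ to pin that column down uniquely. The central observation is that, by the equivalence $\Sigma^+_{i,j} \subseteq \sigma \Leftrightarrow \sigma(\sideL_i) = \mainL_j$ established just after the compatibility constraints, both the hypothesis $\Sigma^+_{i,j} \subseteq \sigma$ and the desired conclusion can be read purely in terms of the action of the global substitution $\sigma$ on $\sideL_i$.

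Concretely, I would fix $i$ and $j$ and assume the premise $\Sigma^+_{i,j} \subseteq \sigma$, which translates into $\sigma(\sideL_i) = \mainL_j$. Then I would apply \textbf{partial completeness}~\eqref{eq:subsumption-completeness-sat} at the fixed row $i$ to obtain some column $j'$ with $b^+_{i,j'}$ true, and feed this variable into \textbf{positive compatibility}~\eqref{eq:positive-compatibility} to get $\Sigma^+_{i,j'} \subseteq \sigma$, i.e., $\sigma(\sideL_i) = \mainL_{j'}$. Chaining the two equalities yields $\mainL_j = \sigma(\sideL_i) = \mainL_{j'}$, and here the no-duplicates assumption~\eqref{eq:no-duplicates} does the decisive work: distinct indices of $\mainP$ name distinct literals, so $\mainL_j = \mainL_{j'}$ forces $j = j'$, whence $b^+_{i,j}$ inherits the truth of $b^+_{i,j'}$.

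I do not expect a serious obstacle; once the witness column $j'$ is produced, the remainder is a two-step substitution of equals. The only subtlety worth flagging is a well-definedness check: writing $\Sigma^+_{i,j} \subseteq \sigma$ presupposes that $\Sigma^+_{i,j}$ is a genuine positive-match substitution rather than the incompatible marker $\IncSubst$, so that the pair $(b^+_{i,j}, \Sigma^+_{i,j})$ indeed belongs to $\MS(\sideP,\mainP)$; this follows from the premise itself. Beyond that the argument is forced, and the reliance on no-duplicates is precisely what makes this backward direction hold where the raw compatibility constraint alone would not suffice.
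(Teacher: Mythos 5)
Your proof is correct and takes essentially the same route as the paper's: \textbf{SAT-based partial completeness}~\eqref{eq:subsumption-completeness-sat} supplies a witness column $j'$ with $b^+_{i,j'}$ true, \textbf{positive compatibility}~\eqref{eq:positive-compatibility} yields $\Sigma^+_{i,j'}\subseteq\sigma$, both constraints act on the same literal $\sideL_i$ so $\mainL_j=\mainL_{j'}$, and Assumption~\ref{ass:no-duplicates} forces $j=j'$. The only difference is presentational: the paper argues by contradiction (assuming $b^+_{i,j}=\bot$) whereas you argue directly via the equivalence $\Sigma^+_{i,j}\subseteq\sigma \Leftrightarrow \sigma(\sideL_i)=\mainL_j$ stated after the compatibility constraints, which is the same underlying reasoning.
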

\begin{proof}
    Towards a contradiction, assume there exist ~$i, j$ such that
    $\Sigma^+_{i, j} \subseteq \sigma$ and $b^+_{i, j} = \bot$.
    Condition~\eqref{eq:subsumption-completeness-sat} implies that 
    there exists $j'$ such that $b^+_{i, j'} = \top$; then, by constraint~\eqref{eq:positive-compatibility} we have  $\Sigma^+_{i, j'} \subseteq \sigma$,
    that is, $\Sigma^+_{i, j'}(\sideL_{i}) = \mainL_{j'}$.
    Since both $\Sigma^+_{i, j}$ and $\Sigma^+_{i, j'}$ impose a mapping on the same literal $\sideL_{i}$, the mappings are on the same variables.
    Therefore, for $\Sigma^+_{i, j}$ and $\Sigma^+_{i, j'}$ to be compatible with $\sigma$ simultaneously, they must be identical. 
    Hence, $\mainL_{j} = \Sigma^+_{i, j}(\sideL_{i}) = \Sigma^+_{i, j'}(\sideL_{i})  = \mainL_{j'}$,
    which contradicts Assumption~\ref{ass:no-duplicates}.
\end{proof}

We have now all the ingredients to introduce our SAT-based encoding of subsumption. 
 
\begin{definition}[SAT-Based Subsumption Encoding]
    The \emph{SAT-based subsumption encoding} $\SEnc(\sideP, \mainP)$ of the clauses $\sideP$ and $\mainP$ is the conjunction of \textbf{positive compatibility}~\eqref{eq:positive-compatibility}, \textbf{SAT-based partial completeness}~\eqref{eq:subsumption-completeness-sat}, and \textbf{SAT-based multiplicity conservation}~\eqref{eq:subsumption-multiplicity-sat}. 
\end{definition}

As a consequence of Theorem~\ref{thm:subsumption-backward-compatibility}, we obtain the following corollary.

\begin{corollary}\label{cor:subsumption-backward-compatibility}
    A model of the subsumption encoding $\SEnc(\sideP, \mainP)$
    satisfies $\forall i, j.\ b^+_{i,j} \Leftrightarrow \sigma(\sideL_i) = \mainL_j$.
\end{corollary}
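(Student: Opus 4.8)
The plan is to prove the biconditional $b^+_{i,j} \Leftrightarrow \sigma(\sideL_i) = \mainL_j$ separately in each direction, for arbitrary fixed indices $i$ and $j$. Throughout I would treat a model of $\SEnc(\sideP, \mainP)$ as a truth assignment to the variables $b^+_{i,j}$ together with the global substitution $\sigma$ that jointly satisfy the three conjuncts \eqref{eq:positive-compatibility}, \eqref{eq:subsumption-completeness-sat}, and \eqref{eq:subsumption-multiplicity-sat} defining the encoding.

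For the forward implication $b^+_{i,j} \Rightarrow \sigma(\sideL_i) = \mainL_j$, I would invoke \textbf{positive compatibility}~\eqref{eq:positive-compatibility} directly: since the model satisfies this constraint, the truth of $b^+_{i,j}$ forces $\Sigma^+_{i,j} \subseteq \sigma$. I then appeal to the already-noted equivalence $\Sigma^+_{i,j} \subseteq \sigma \Leftrightarrow \sigma(\sideL_i) = \mainL_j$ to conclude $\sigma(\sideL_i) = \mainL_j$. This direction is essentially immediate from the way the encoding was designed.

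For the backward implication $\sigma(\sideL_i) = \mainL_j \Rightarrow b^+_{i,j}$, I would rewrite the hypothesis $\sigma(\sideL_i) = \mainL_j$ as $\Sigma^+_{i,j} \subseteq \sigma$ using the same equivalence, and then apply Theorem~\ref{thm:subsumption-backward-compatibility}. That theorem states precisely that, under Assumption~\ref{ass:no-duplicates} on $\mainP$, \textbf{positive compatibility}~\eqref{eq:positive-compatibility} together with \textbf{SAT-based partial completeness}~\eqref{eq:subsumption-completeness-sat} — both of which our model satisfies, being a model of $\SEnc(\sideP, \mainP)$ — yield $\Sigma^+_{i,j} \subseteq \sigma \Rightarrow b^+_{i,j}$. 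Combining the two directions then gives the claimed biconditional for every pair $i, j$.

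Since both directions reduce to results already established — the forward one to the defining compatibility clause and the backward one to Theorem~\ref{thm:subsumption-backward-compatibility} — I do not expect a genuine obstacle; the corollary is largely a packaging step. The only point requiring care is making explicit that a model carries both the Boolean assignment and the substitution $\sigma$, and that the equivalence $\Sigma^+_{i,j} \subseteq \sigma \Leftrightarrow \sigma(\sideL_i) = \mainL_j$ is applied symmetrically in the two directions; beyond that the argument is a routine two-way implication.
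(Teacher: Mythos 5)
Your proposal is correct and follows the same route the paper intends: the paper states the corollary as an immediate consequence of Theorem~\ref{thm:subsumption-backward-compatibility} (your backward direction), having already derived $b^+_{i,j} \Rightarrow \sigma(\sideL_i) = \mainL_j$ from \textbf{positive compatibility}~\eqref{eq:positive-compatibility} and the equivalence $\Sigma^+_{i,j} \subseteq \sigma \Leftrightarrow \sigma(\sideL_i) = \mainL_j$ in the preceding prose (your forward direction). Your explicit note that a model carries both the Boolean assignment and the substitution $\sigma$ is a reasonable clarification but does not change the substance of the argument.
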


Corollary~\ref{cor:subsumption-backward-compatibility}
ensures that $\forall i, j.\  b^+_{i,j} \Leftrightarrow \Sigma^+_{i,j} \subseteq \sigma$, based on which soundness of our SAT-based subsumption encoding is derived.

\begin{theorem}[Soundness]\label{thm:subsumption-encoding-soundness-and-completeness}
Assume $\mainP$ does not contain duplicate literals. 
     Clause $\sideP$ subsumes  $\mainP$ iff the subsumption encoding $\SEnc(\sideP, \mainP)$ is satisfiable.
\end{theorem}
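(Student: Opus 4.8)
The plan is to prove the two directions of the biconditional separately, leaning heavily on the characterization of subsumption already established in Theorem~\ref{thm:subsumption-constraints} and on the correspondence between Boolean variables and the substitution, which is made precise by Corollary~\ref{cor:subsumption-backward-compatibility}. The central idea is that the SAT encoding $\SEnc(\sideP,\mainP)$ is nothing more than a faithful Boolean transcription of the two subsumption constraints \eqref{eq:subsumption-completeness} and \eqref{eq:subsumption-multiplicity}, with the variables $b^+_{i,j}$ standing for the atomic facts $\sigma(\sideL_i) = \mainL_j$, so the whole argument amounts to transporting the equivalence of Theorem~\ref{thm:subsumption-constraints} across this encoding.

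For the forward direction, suppose $\sideP$ subsumes $\mainP$. By Theorem~\ref{thm:subsumption-constraints}, there is a substitution $\sigma$ satisfying \textbf{partial completeness}~\eqref{eq:subsumption-completeness} and \textbf{multiplicity conservation}~\eqref{eq:subsumption-multiplicity}. I would build a candidate Boolean assignment by setting $b^+_{i,j} = \top$ exactly when $\Sigma^+_{i,j} \subseteq \sigma$, i.e.\ when $\sigma(\sideL_i) = \mainL_j$, and verify that this assignment (together with $\sigma$ itself) satisfies each conjunct of the encoding. \textbf{Positive compatibility}~\eqref{eq:positive-compatibility} holds by the very definition of the assignment. \textbf{SAT-based partial completeness}~\eqref{eq:subsumption-completeness-sat} follows because \eqref{eq:subsumption-completeness} guarantees that for every $i$ some $j$ with $\sigma(\sideL_i) = \mainL_j$ exists, making the clause $\bigvee_j b^+_{i,j}$ true. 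Finally, \textbf{SAT-based multiplicity conservation}~\eqref{eq:subsumption-multiplicity-sat} follows from \eqref{eq:subsumption-multiplicity}: if two distinct literals $\sideL_i, \sideL_{i'}$ both mapped to the same $\mainL_j$, the constraint~\eqref{eq:subsumption-multiplicity} would be violated, so at most one $b^+_{i,j}$ per column $j$ is true, which is exactly what $\AMO$ asserts.

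For the backward direction, suppose $\SEnc(\sideP,\mainP)$ has a model; this fixes both a satisfying Boolean assignment and a witnessing substitution $\sigma$. Here Corollary~\ref{cor:subsumption-backward-compatibility} does the heavy lifting: it promotes the one-directional implication baked into \eqref{eq:positive-compatibility} into the full equivalence $b^+_{i,j} \Leftrightarrow \sigma(\sideL_i) = \mainL_j$ (this is exactly where Assumption~\ref{ass:no-duplicates} on $\mainP$ is used, via Theorem~\ref{thm:subsumption-backward-compatibility}). Given this equivalence, I would read \eqref{eq:subsumption-completeness-sat} and \eqref{eq:subsumption-multiplicity-sat} back into the language of $\sigma$: \eqref{eq:subsumption-completeness-sat} becomes $\forall i\, \exists j.\ \sigma(\sideL_i) = \mainL_j$, which is precisely \textbf{partial completeness}~\eqref{eq:subsumption-completeness}, and the $\AMO$ constraints become \textbf{multiplicity conservation}~\eqref{eq:subsumption-multiplicity}. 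Invoking Theorem~\ref{thm:subsumption-constraints} in the reverse direction then yields that $\sigma$ witnesses $\sigma(\sideP) \sqsubseteq \mainP$, i.e.\ $\sideP$ subsumes $\mainP$.

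The only genuinely delicate point is the backward translation of the Boolean facts into statements about $\sigma$, and this is entirely absorbed by Corollary~\ref{cor:subsumption-backward-compatibility}: without the biconditional it provides, a model could in principle set some $b^+_{i,j}$ to $\bot$ even though $\sigma(\sideL_i) = \mainL_j$ holds, which would break the clean correspondence with \eqref{eq:subsumption-completeness}--\eqref{eq:subsumption-multiplicity}. Since that corollary is already in hand, the proof reduces to the two routine verifications sketched above, and I expect no further obstacle. I would keep the writeup short, explicitly citing Theorem~\ref{thm:subsumption-constraints} for the semantic content and Corollary~\ref{cor:subsumption-backward-compatibility} for the syntactic bridge in the harder direction.
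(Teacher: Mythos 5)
Your proposal is correct and follows essentially the same route as the paper's own proof: the backward direction uses Corollary~\ref{cor:subsumption-backward-compatibility} to promote $b^+_{i,j}$ into the biconditional $b^+_{i,j} \Leftrightarrow \sigma(\sideL_i) = \mainL_j$ and then reads the SAT constraints back as the constraints of Theorem~\ref{thm:subsumption-constraints}, while the forward direction builds the same witnessing assignment (your ``set $b^+_{i,j}=\top$ iff $\Sigma^+_{i,j}\subseteq\sigma$'' coincides, given that $\mainP$ has no duplicates, with the paper's one-true-variable-per-row model via the injective map $j(i)$) and verifies the three conjuncts exactly as the paper does. No gaps; the argument is sound as sketched.
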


\begin{proof}
     Corollary~\ref{cor:subsumption-backward-compatibility} implies that, if $\SEnc(\sideP, \mainP)$ is satisfied, all  propositional variables $b^+_{i,j}$ can be replaced by $\sigma(\sideL_i) = \mainL_j$.
    \textbf{SAT-based partial completeness} \eqref{eq:subsumption-completeness-sat} yields $\bigwedge_i \bigvee_j \sigma(\sideL_i) = \mainL_j$, that is $\forall i \exists j.\ \sigma(\sideL_i) = \mainL_j$.
    Using the at-most-one constraint of~\eqref{eq:subsumption-multiplicity-sat},  if $b_{i,j}^+\neq b_{i,j'}^+$ then $b_{i,j}^+\implies \neg b_{i,j'}^+$.
    Based on~\eqref{eq:positive-match-def}, we hence obtain that
    if $\SEnc(\sideP, \mainP)$ is satisfiable, then $\sigma(\sideP) \sqsubseteq \mainP$ by Theorem~\ref{thm:subsumption-constraints}.

    For the other direction, assume $\sideP$ subsumes $\mainP$; that is, $\sideP \sqsubseteq \mainP$. Based on Assumption~\ref{ass:no-duplicates}, $\mainP$ has no duplicate literals. For $\sigma(\sideP)$ to be a sub-multiset of $\mainP$, it  should not contain duplicates either. Therefore, there exists a total bijective function $j(i)$ such that $\sigma(\sideL_i) = \mainL_{j(i)}$. From this function, one can build a model such that  $b^+_{i,j(i)} = \top$ for all $i$, and all other variables are false. This model satisfies  \textbf{positive compatibility}~\eqref{eq:positive-compatibility}. Indeed, since $\sigma(\sideL_i) = \mainL_{j(i)}$, we have $\Sigma^+_{i, j(i)} \subseteq \sigma$. \textbf{SAT-based partial completeness} \eqref{eq:subsumption-completeness-sat} is also satisfied since $j(i)$ is a total function. \textbf{SAT-based multiplicity conservation} \eqref{eq:subsumption-multiplicity-sat} is ensured by the bijectivity of $j(i)$. In summary, if $\sideP \sqsubseteq \mainP$, then $\SEnc$ is satisfiable.
\end{proof}

\begin{example}[Subsumption with $\SEnc(\sideP, \mainP)$]
    Consider the following clause pair $\sideP = \sideL_1 \lor \sideL_2 \lor \sideL_3$ and $\mainP = \mainL_1 \lor \mainL_2 \lor \mainL_3$, with
    \begin{center}
        \begin{tabular}{l l}
             $\sideL_1 = q(x_1)$ & $\mainL_1 = q(c)$ \\
             $\sideL_2 = p(x_1, x_2)$ & $\mainL_2 = p(c, d)$ \\
             $\sideL_3 = p(x_2, x_1)$ & $\mainL_3= p(d, c)$
        \end{tabular}
    \end{center}
    We first construct the substitution constraints matching the different literal pairs $(\sideL_i, \mainL_j)$:
    \begin{equation*}
        \Sigma^+_{i,j} = \left(
        \begin{tabular}{C{3.3cm} C{3.3cm} C{3.3cm}}
             $\{x_1 \mapsto c\}$ & $\IncSubst$ & $\IncSubst$ \\
             $\IncSubst$ & $\{x_1 \mapsto c, x_2 \mapsto d\}$ & $\{x_1 \mapsto d, x_2 \mapsto c\}$ \\
             $\IncSubst$ & $\{x_1 \mapsto d, x_2 \mapsto c\}$ & $\{x_1 \mapsto c, x_2 \mapsto d\}$\\
        \end{tabular}
        \right)
    \end{equation*}
    The SAT encoding $\SEnc(\sideP, \mainP)$ of subsumption is given by: 
    \begin{align*}
         & b^+_{1,1} \Rightarrow \{x_1 \mapsto c\} \subseteq \sigma 
         & \textbf{positive compatibility} \\
         & b^+_{2,2} \Rightarrow \{x_1 \mapsto c, x_2 \mapsto d\} \subseteq \sigma 
         & \textbf{positive compatibility} \\
         & b^+_{2,3} \Rightarrow \{x_1 \mapsto d, x_2 \mapsto c\} \subseteq \sigma 
         & \textbf{positive compatibility} \\
         & b^+_{3,2} \Rightarrow \{x_1 \mapsto d, x_2 \mapsto c\} \subseteq \sigma 
         & \textbf{positive compatibility} \\
         & b^+_{3,3} \Rightarrow \{x_1 \mapsto c, x_2 \mapsto d\} \subseteq \sigma 
         & \textbf{positive compatibility} \\
         & b^+_{1,1}
         & \textbf{SAT-based partial completeness}\\
         & b^+_{2,2} \lor b^+_{2,3}
         & \textbf{SAT-based partial completeness}\\
         & b^+_{3,2} \lor b^+_{3,3}
         & \textbf{SAT-based partial completeness}\\
         & \AMO(\{b^+_{1,1}\})
         & \textbf{SAT-based multiplicity conservation}\\
         & \AMO(\{b^+_{2,2}, b^+_{3,2}\})
         & \textbf{SAT-based multiplicity conservation}\\
         & \AMO(\{b^+_{2,3}, b^+_{3,3}\})
         & \textbf{SAT-based multiplicity conservation}\\
    \end{align*}

    Our tailored SAT solver from Section~\ref{sec:implementation} returns a model $\{b^+_{1,1}, b^+_{2,2}, \neg b^+_{2,3}, \neg b^+_{3,2}, b^+_{3,3}\}$ that satisfies $\SEnc(\sideP, \mainP)$.
    We build the final substitution $\sigma$ witnessing that $\sigma(\sideP) \sqsubseteq \mainP$, and hence $S$ subsumes $M$,
    as the union of all the substitutions bound to variables assigned to true. This gives $\sigma = \{x_1 \mapsto c, x_2 \mapsto d\}$.
\end{example}

\subsection{Direct SAT Encoding of Subsumption Resolution }
Similarly to subsumption, we translate the constraints of Theorem~\ref{thm:subsumption-resolution-constraints} into SAT, while also
considering both \textbf{positive compatibility}~\eqref{eq:positive-compatibility} and \textbf{negative compatibility}~\eqref{eq:negative-compatibility}. The following SAT constraints are derived from Theorem~\ref{thm:subsumption-resolution-constraints}:
\begin{align}
  &\textbf{SAT-based existence  } &
  \bigvee_{i} \bigvee_{j} b_{i,j}^-
  \label{eq:subsumption-resolution-direct-existence} \\
  &\textbf{SAT-based uniqueness  } &
  \bigwedge_{j} \bigwedge_{i} \bigwedge_{i' \geq i}\bigwedge_{j'> j}\neg b_{i,j}^- \lor \neg b_{i',j'}^-
  \label{eq:subsumption-resolution-direct-uniqueness} \\
  &\textbf{SAT-based completeness  } &
  \bigwedge_{i} \bigvee_{j} b^+_{i,j} \lor b^-_{i,j}
  \label{eq:subsumption-resolution-direct-completeness}\\
  &\textbf{SAT-based coherence  } &
  \bigwedge_{j}\bigwedge_{i}\bigwedge_{i'}\neg b_{i,j}^+ \lor \neg b_{i',j}^-
  \label{eq:subsumption-resolution-direct-coherence}
\end{align}

\begin{theorem}\label{thm:subsumption-resolution-backward-compatibility}
    Assume that clause $\mainP$  does not have duplicate atoms, as in~\eqref{eq:no-duplicates}. Let $\MS(\sideP, \mainP) = \left\{\left(b^\pm_{i,j}, \Sigma^\pm_{i,j}\right)\right\}$ be the match set of $\sideP$ and $\mainP$.
   \textbf{Positive compatibility}~\eqref{eq:positive-compatibility},
    \textbf{negative compatibility}~\eqref{eq:negative-compatibility},
    and \textbf{completeness}~\eqref{eq:subsumption-resolution-direct-completeness}
     ensures that
    $\Sigma^+_{i,j} \subseteq \sigma \Rightarrow b^+_{i,j}$ and
    $\Sigma^-_{i,j} \subseteq \sigma \Rightarrow b^-_{i,j}$.
\end{theorem}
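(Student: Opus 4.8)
The plan is to mirror the argument of Theorem~\ref{thm:subsumption-backward-compatibility}, extending it to account for negative-polarity matches through a case analysis. I would establish the two implications separately, in each case reasoning by contradiction and exploiting the full strength of Assumption~\ref{ass:no-duplicates}, which forbids both duplicate literals and the simultaneous presence of a literal and its negation in $\mainP$.

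For the first implication $\Sigma^+_{i,j} \subseteq \sigma \Rightarrow b^+_{i,j}$, I would suppose towards a contradiction that $\Sigma^+_{i,j} \subseteq \sigma$, equivalently $\sigma(\sideL_i) = \mainL_j$, while $b^+_{i,j} = \bot$. By \textbf{SAT-based completeness}~\eqref{eq:subsumption-resolution-direct-completeness} there is some $j'$ with $b^+_{i,j'} = \top$ or $b^-_{i,j'} = \top$, and the argument splits accordingly. If $b^+_{i,j'}$ holds, then \textbf{positive compatibility}~\eqref{eq:positive-compatibility} gives $\sigma(\sideL_i) = \mainL_{j'}$, hence $\mainL_j = \mainL_{j'}$; since $b^+_{i,j} = \bot$ while $b^+_{i,j'} = \top$ we have $j \neq j'$, so this is a duplicate literal in $\mainP$, contradicting~\eqref{eq:no-duplicates}. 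If instead $b^-_{i,j'}$ holds, then \textbf{negative compatibility}~\eqref{eq:negative-compatibility} gives $\sigma(\sideL_i) = \neg \mainL_{j'}$, so $\mainL_j = \neg \mainL_{j'}$; here $j = j'$ is impossible because no literal equals its own negation, while $j \neq j'$ exhibits a literal together with its negation in $\mainP$, again contradicting~\eqref{eq:no-duplicates}.

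The second implication $\Sigma^-_{i,j} \subseteq \sigma \Rightarrow b^-_{i,j}$ is fully symmetric. Assuming $\sigma(\sideL_i) = \neg \mainL_j$ with $b^-_{i,j} = \bot$ and again drawing a witness $j'$ from completeness, the subcase $b^+_{i,j'} = \top$ forces $\mainL_{j'} = \neg \mainL_j$ (a literal and its negation), and the subcase $b^-_{i,j'} = \top$ forces $\mainL_j = \mainL_{j'}$ with $j \neq j'$ (a duplicate); both contradict~\eqref{eq:no-duplicates}.

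I do not expect a serious obstacle, since the skeleton transfers directly from Theorem~\ref{thm:subsumption-backward-compatibility}, with \textbf{completeness} playing the role that \textbf{SAT-based partial completeness} played there. The one point that requires care is that, in contrast to the purely positive setting, a match at the pair $(i,j)$ can now clash with a match at $(i,j')$ in two distinct ways: either as a duplicate $\mainL_j = \mainL_{j'}$ or as a complementary pair $\mainL_j = \neg \mainL_{j'}$. The argument succeeds precisely because Assumption~\ref{ass:no-duplicates} excludes both possibilities, so I would make explicit in each subcase which half of~\eqref{eq:no-duplicates} is invoked, and dispose separately of the degenerate $j = j'$ situation, where the clash would amount to a literal matching its own negation.
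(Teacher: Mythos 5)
Your proof is correct and takes essentially the same route as the paper's: contradiction, a witness $j'$ extracted from \textbf{SAT-based completeness}~\eqref{eq:subsumption-resolution-direct-completeness}, both compatibility constraints~\eqref{eq:positive-compatibility}--\eqref{eq:negative-compatibility} forcing $\mainL_j = \mainL_{j'}$ or $\mainL_j = \neg\mainL_{j'}$, and a clash with Assumption~\ref{ass:no-duplicates} (you also spell out the symmetric second implication, which the paper dismisses as ``similar''). The only, harmless, divergence is the degenerate case $j = j'$: you rule it out directly via the syntactic impossibility $\mainL_j \neq \neg\mainL_j$, whereas the paper argues $\Sigma^+_{i,j} \neq \IncSubst$ forces $\Sigma^-_{i,j} = \IncSubst$, hence $b^-_{i,j} = \bot$, and only then concludes $j \neq j'$.
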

\begin{proof}
    We use a similar argumentation as in proving Theorem~\ref{thm:subsumption-backward-compatibility}. 
    We only prove the claim for  $\sideL_{i}, \mainL_{j}$ such that $\Sigma^+_{i,j} \subseteq \sigma$ and $b^+_{i,j} = \bot$; the other case is similar.
    \textbf{SAT-based completeness}~\eqref{eq:subsumption-resolution-direct-completeness} ensures that there exists $j'$ such that $b^+_{i,j'} \lor b^-_{i,j'}$.
     Using compatibility~\eqref{eq:positive-compatibility}--\eqref{eq:negative-compatibility}, we have  $b^+_{i,j'} \lor b^-_{i,j'} \Rightarrow (\sigma(\sideL_{i}) = m_{j'} \lor \sigma(\sideL_{i}) = \neg m_{j'})$.
    Similarly as in Theorem~\ref{thm:subsumption-backward-compatibility}, we  obtain  $\Sigma^+_{i, j}(\sideL_{i}) = \Sigma^+_{i, j'}(\sideL_{i}) \lor \Sigma^+_{i, j}(\sideL_{i}) = \neg \Sigma^-_{i, j'}(\sideL_{i})$, which is equivalent to $\mainL_{j} = \mainL_{j'} \lor \mainL_{j} = \neg \mainL_{j'}$.
    Since $\Sigma^+_{i,j} \subseteq \sigma$, we have  $\Sigma^+_{i,j} \neq \IncSubst$. Therefore,  $\Sigma^-_{i,j}$ is the incompatible substitution $\IncSubst$; that is, $\Sigma^-_{i,j} = \IncSubst$.
    From $\Sigma^-_{i,j} = \IncSubst$, we infer $b^-_{i,j} = \bot$.
    In short, we have $\neg b^+_{i,j} \land \neg b^-_{i,j} \land (b^+_{i,j'} \lor b^-_{i,j'})$, therefore $j \neq j'$ and \eqref{eq:no-duplicates} of Assumption~\ref{ass:no-duplicates} is violated.
In conclusion, $\eqref{eq:no-duplicates} \land \eqref{eq:positive-compatibility} \land \eqref{eq:negative-compatibility} \land \eqref{eq:subsumption-resolution-direct-completeness}$ implies $\eqref{eq:positive-match-def} \land \eqref{eq:negative-match-def}$.
\end{proof}

Following upon Theorem~\ref{thm:subsumption-resolution-backward-compatibility}, the (direct)  SAT formalization of  subsumption resolution is given below. 
\begin{definition}[Direct SAT Encoding of Subsumption Resolution]
    The \emph{direct SAT encoding of subsumption resolution $\DirEnc(\sideP, \mainP)$} for the side and main premises $\sideP$ and $\mainP$ is the conjunction of \textbf{positive compatibility}~\eqref{eq:positive-compatibility},  \textbf{negative compatibility}~\eqref{eq:negative-compatibility},  \textbf{existence}~\eqref{eq:subsumption-resolution-direct-existence},  \textbf{uniqueness}~\eqref{eq:subsumption-resolution-direct-uniqueness},  \textbf{completeness}~\eqref{eq:subsumption-resolution-completeness} and \textbf{coherence}~\eqref{eq:subsumption-resolution-direct-coherence}.
\end{definition}

\begin{corollary}\label{cor:subsumption-resolution-backward-compatibility}
    A model of the direct subsumption resolution encoding $\DirEnc(\sideP, \mainP)$  satisfies $\forall i, j.\ b^+_{i,j} \Leftrightarrow \sigma(\sideL_i) = \mainL_j$ and $\forall i, j.\ b^-_{i,j} \Leftrightarrow \sigma(\sideL_i) = \neg \mainL_j$.
\end{corollary}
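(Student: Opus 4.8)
The plan is to treat this as a direct corollary of Theorem~\ref{thm:subsumption-resolution-backward-compatibility}, mirroring the way Corollary~\ref{cor:subsumption-backward-compatibility} follows from Theorem~\ref{thm:subsumption-backward-compatibility} in the pure subsumption case. I would fix an arbitrary model of $\DirEnc(\sideP, \mainP)$ and establish each of the two biconditionals by proving its two implications separately: the forward (``$\Rightarrow$'') directions come from the compatibility conjuncts, while the backward (``$\Leftarrow$'') directions are exactly what Theorem~\ref{thm:subsumption-resolution-backward-compatibility} supplies.

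For the forward implications I would invoke the two compatibility conjuncts that are part of $\DirEnc$. \textbf{Positive compatibility}~\eqref{eq:positive-compatibility} gives $b^+_{i,j} \Rightarrow \Sigma^+_{i,j} \subseteq \sigma$, and since by construction $\Sigma^+_{i,j} \subseteq \sigma \Leftrightarrow \sigma(\sideL_i) = \mainL_j$ (the defining property of the substitution constraint recalled just after \eqref{eq:positive-compatibility}--\eqref{eq:negative-compatibility}), this yields precisely $b^+_{i,j} \Rightarrow \sigma(\sideL_i) = \mainL_j$. The identical argument applied to \textbf{negative compatibility}~\eqref{eq:negative-compatibility} gives $b^-_{i,j} \Rightarrow \sigma(\sideL_i) = \neg\mainL_j$.

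For the backward implications I would appeal to Theorem~\ref{thm:subsumption-resolution-backward-compatibility}. Its hypotheses---\textbf{positive compatibility}~\eqref{eq:positive-compatibility}, \textbf{negative compatibility}~\eqref{eq:negative-compatibility}, and \textbf{SAT-based completeness}~\eqref{eq:subsumption-resolution-direct-completeness}---are all conjuncts of $\DirEnc$, so the theorem applies and delivers $\Sigma^+_{i,j} \subseteq \sigma \Rightarrow b^+_{i,j}$ and $\Sigma^-_{i,j} \subseteq \sigma \Rightarrow b^-_{i,j}$. Rewriting each substitution-constraint condition back into the corresponding match condition via \eqref{eq:positive-match-def}--\eqref{eq:negative-match-def} turns these into $\sigma(\sideL_i) = \mainL_j \Rightarrow b^+_{i,j}$ and $\sigma(\sideL_i) = \neg\mainL_j \Rightarrow b^-_{i,j}$. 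Combining each of these with the matching forward implication from the previous paragraph closes both biconditionals.

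The reasoning is almost entirely a matter of citing the two results and translating between the Boolean variables and the equivalent match conditions, so there is no genuinely hard step; the one place requiring care is the degenerate case where no matching substitution exists, i.e.\ $\Sigma^\pm_{i,j} = \IncSubst$. There the literal pair contributes no variable to the match set $\MS(\sideP, \mainP)$ (equivalently $b^\pm_{i,j}$ is forced false), while simultaneously $\sigma(\sideL_i) = \mainL_j$ (resp.\ $\sigma(\sideL_i) = \neg\mainL_j$) is unsatisfiable, so both sides of the biconditional are false and it holds vacuously. I would flag this case explicitly to keep the quantification $\forall i,j$ honest, but it introduces no real difficulty.
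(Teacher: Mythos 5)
Your proposal is correct and matches the paper's intent exactly: the paper states Corollary~\ref{cor:subsumption-resolution-backward-compatibility} without a written proof, as an immediate consequence of Theorem~\ref{thm:subsumption-resolution-backward-compatibility} (which supplies the backward implications $\Sigma^\pm_{i,j} \subseteq \sigma \Rightarrow b^\pm_{i,j}$) combined with the compatibility conjuncts of $\DirEnc(\sideP,\mainP)$ and the defining equivalence $\Sigma^\pm_{i,j} \subseteq \sigma \Leftrightarrow \sigma(\sideL_i) = \pm\mainL_j$, precisely the two halves you assemble. Your explicit handling of the degenerate case $\Sigma^\pm_{i,j} = \IncSubst$ is a small, sound addition the paper leaves implicit.
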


Towards finding an effective SAT-solving approach,
Theorem~\ref{thm:direct-encoding-soundness-and-completeness} yields a direct algorithmic solution to  subsumption resolution.

\begin{theorem}[Soundness\label{thm:direct-encoding-soundness-and-completeness}]
Assume $\mainP$ does not contain duplicate atoms. 
    Clauses $\sideP$ and $\mainP$ are respectively the side and main premises of subsumption resolution iff $\DirEnc(\sideP, \mainP)$ is satisfiable.
\end{theorem}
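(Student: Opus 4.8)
The plan is to mirror the proof of Theorem~\ref{thm:subsumption-encoding-soundness-and-completeness} and reduce the statement to Theorem~\ref{thm:subsumption-resolution-constraints}, using Corollary~\ref{cor:subsumption-resolution-backward-compatibility} as the bridge between the Boolean variables and the matching facts. A model of $\DirEnc(\sideP,\mainP)$ fixes a global substitution $\sigma$ together with an assignment of the $b^\pm_{i,j}$, and Corollary~\ref{cor:subsumption-resolution-backward-compatibility} guarantees the exact equivalences $b^+_{i,j} \Leftrightarrow \sigma(\sideL_i) = \mainL_j$ and $b^-_{i,j} \Leftrightarrow \sigma(\sideL_i) = \neg\mainL_j$. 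Once these equivalences are in hand, the argument becomes a purely syntactic substitution of the Boolean literals by their matching meaning, after which I only need to check that the four SAT clauses turn into the four constraints of Theorem~\ref{thm:subsumption-resolution-constraints}.

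For the forward direction I would assume $\DirEnc(\sideP,\mainP)$ satisfiable and rewrite each conjunct. \textbf{SAT-based existence}~\eqref{eq:subsumption-resolution-direct-existence} becomes $\exists i\,j.\ \sigma(\sideL_i)=\neg\mainL_j$, i.e.\ \textbf{existence}~\eqref{eq:subsumption-resolution-existence}; \textbf{SAT-based completeness}~\eqref{eq:subsumption-resolution-direct-completeness} becomes $\forall i\,\exists j.\ (\sigma(\sideL_i)=\mainL_j \lor \sigma(\sideL_i)=\neg\mainL_j)$, i.e.\ \textbf{completeness}~\eqref{eq:subsumption-resolution-completeness}; and \textbf{SAT-based coherence}~\eqref{eq:subsumption-resolution-direct-coherence} becomes $\forall j.\ (\exists i.\ \sigma(\sideL_i)=\mainL_j \Rightarrow \forall i.\ \sigma(\sideL_i)\neq\neg\mainL_j)$, i.e.\ \textbf{coherence}~\eqref{eq:subsumption-resolution-coherence}. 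These three rewrites are immediate. With all four constraints of Theorem~\ref{thm:subsumption-resolution-constraints} established, that theorem yields that $(\sideP,\mainP)$ are the side and main premises of an instance of \SR.

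The main obstacle is the \textbf{uniqueness} translation: I must show that the clause family \textbf{SAT-based uniqueness}~\eqref{eq:subsumption-resolution-direct-uniqueness}, whose index ranges are restricted to $i'\ge i$ and $j'>j$, is equivalent under the model to \textbf{uniqueness}~\eqref{eq:subsumption-resolution-uniqueness}, namely the existence of a single $j'$ receiving every negative match. The key leverage is Assumption~\ref{ass:no-duplicates}: since $\mainP$ has no duplicate atoms, $\neg\mainL_j$ determines $\mainL_j$, so by Corollary~\ref{cor:subsumption-resolution-backward-compatibility} each side literal $\sideL_i$ admits at most one negative target $j$, and the negative matches therefore form a partial function $i \mapsto j(i)$. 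I would then argue that the index family in~\eqref{eq:subsumption-resolution-direct-uniqueness} rules out two negative matches aimed at distinct main literals, so that $j(i)$ is constant; setting $j'$ to this common value gives~\eqref{eq:subsumption-resolution-uniqueness}. Verifying that every pair of distinct negative matches is ruled out by some clause of the family — rather than escaping the triangular restriction on $(i,i',j,j')$ — is the one step that is not a one-line rewrite and deserves an explicit case split.

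Finally, for the reverse direction I would assume \SR{} applies, invoke Theorem~\ref{thm:subsumption-resolution-constraints} to obtain a witnessing substitution $\sigma$ satisfying~\eqref{eq:subsumption-resolution-existence}--\eqref{eq:subsumption-resolution-coherence}, and build the canonical model defined by $b^+_{i,j} \coloneqq (\sigma(\sideL_i)=\mainL_j)$ and $b^-_{i,j} \coloneqq (\sigma(\sideL_i)=\neg\mainL_j)$, taking $\sigma$ itself as the global substitution. \textbf{Positive compatibility}~\eqref{eq:positive-compatibility} and \textbf{negative compatibility}~\eqref{eq:negative-compatibility} hold by construction, since a true $b^\pm_{i,j}$ exactly records $\Sigma^\pm_{i,j}\subseteq\sigma$. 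Each of existence, completeness, and coherence transfers back directly from the corresponding constraint of Theorem~\ref{thm:subsumption-resolution-constraints}, and uniqueness holds because all negative matches target the single resolution literal furnished by~\eqref{eq:subsumption-resolution-uniqueness}, so no forbidden pair of the family in~\eqref{eq:subsumption-resolution-direct-uniqueness} can be simultaneously satisfied. Hence $\DirEnc(\sideP,\mainP)$ is satisfiable, completing the equivalence.
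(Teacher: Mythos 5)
Your overall route coincides with the paper's own proof: forward, rewrite each conjunct of $\DirEnc(\sideP,\mainP)$ into the corresponding constraint of Theorem~\ref{thm:subsumption-resolution-constraints} via Corollary~\ref{cor:subsumption-resolution-backward-compatibility}; backward, build an explicit model from a witnessing $\sigma$. Your backward direction is fine --- it differs from the paper only cosmetically, in that the paper sets exactly one variable true per side literal ($b^-_{i',1}$ for $\sideL_{i'}\in\sideP'$, $b^+_{i^*,j^*(i^*)}$ for the rest) while you set every variable whose matching fact holds; both assignments satisfy the encoding.

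The step you yourself flagged as ``deserving an explicit case split'' is, however, a genuine gap, and the case split cannot be carried out for the formula as printed. With the triangular restriction $i'\ge i$, $j'>j$, the family \eqref{eq:subsumption-resolution-direct-uniqueness} does \emph{not} rule out two negative matches aimed at distinct main literals when the $i$-order opposes the $j$-order: the pair $b^-_{2,1}\land b^-_{1,2}$ is forbidden by no generated clause, since taking $(i,j)=(2,1)$, $(i',j')=(1,2)$ violates $i'\ge i$, and swapping the roles violates $j'>j$. This is not merely a proof obstacle but a falsifier of the literal statement: for the ground clauses $\sideP=\neg p\lor\neg q$ and $\mainP=q\lor p$ (no duplicate atoms), the assignment $b^-_{1,2}=b^-_{2,1}=\top$ with all other variables false satisfies compatibility (with $\sigma=\emptyset$), existence, completeness, coherence, and every clause actually generated by \eqref{eq:subsumption-resolution-direct-uniqueness}, yet \SR{} does not apply, because the two negative matches would require two distinct resolution literals. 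The intended generation --- all pairs of negative variables with $j\neq j'$ and no order constraint relating $i$ and $i'$ --- is what the paper in fact uses: in Example~\ref{ex:dir-encoding} the listed uniqueness clause $\neg b^-_{2,3}\lor\neg b^-_{3,1}$ arises from $(i,j)=(3,1)$, $(i',j')=(2,3)$, which lies outside the printed index range. Under that corrected family your case split becomes trivial (by Assumption~\ref{ass:no-duplicates} the negative matches form a partial function $i\mapsto j(i)$, and any two matches with distinct targets are directly excluded, so $j(i)$ is constant) and your proof closes. The paper's own forward direction is a single sentence that silently assumes this reading, so your proposal is actually more explicit than the paper at exactly the right spot --- but the resolution is to repair the index family, not to argue that the triangular family already suffices, which it does not.
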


\begin{proof}
    Similarly to Theorem~\ref{thm:subsumption-encoding-soundness-and-completeness}, we  use Corollary~\ref{cor:subsumption-resolution-backward-compatibility} together with the definition of $b^\pm_{i,j}$
    to obtain the SAT constraints of $\DirEnc(\sideP, \mainP)$ from Theorem~\ref{thm:subsumption-resolution-constraints}.

    Assume $\sideP$ and $\mainP$ are the side and main premises of subsumption resolution. There exists a substitution $\sigma$, a literal $\mainL'\in \mainP$ and a set of literals
    $\sideP' \subseteq \sideP$ such that $\sigma(\sideP') = \{\neg \mainL'\} \land \sigma(\sideP \setminus \sideP') \subseteq \mainP \setminus \{\mainL'\}$.
    We can build a model for  $b^\pm_{i,j}$ that satisfies each constraint of $\DirEnc(\sideP, \mainP)$.
    Without loss of generality, let $\mainL' = \mainL_1$.
    For each literal in $\sideL_{i'} \in \sideP'$, we set $b^-_{i',1} = \top$. All other variables $b^\pm_{i',j}$ are set to false. Let $\sideP^* = \sideP \setminus \sideP'$ and $\mainP^* = \mainP \setminus \{\mainL'\}$. If $\sigma(\sideP^*) \subseteq \mainP^*$, then there exists a function $j^*(i^*)$ such that for each literal $\sideL_{i^*}$, we have $\sigma(\sideL_{i^*}) = \mainL_{j^*(i^*)}$. For each literal $\sideL_{i^*}$, we set $b^+_{i^*, j^*(i^*)} = \top$ and all other variables are false. This assignment is indeed a model of $\DirEnc(\sideP, \mainP)$.
\end{proof}
\begin{example}[Subsumption Resolution with $\DirEnc(\sideP, \mainP)$\label{ex:dir-encoding}]
    Consider the following clause pair $\sideP = \sideL_1 \lor \sideL_2 \lor \sideL_3$ and $\mainP = \mainL_1 \lor \mainL_2 \lor \mainL_3$, with
    \begin{center}
        \begin{tabular}{l l}
               $\sideL_1 = p(f(x_1), x_2)$
             & $\mainL_1 = \neg p(f(c), d)$ \\
               $\sideL_2 = \neg p(x_2, x_1)$ 
             & $\mainL_2 = \neg p(d, c)$ \\
               $\sideL_3 = p(f(x_3), x_1)$ 
             & $\mainL_3 = p(f(y_1), c)$\\
        \end{tabular}
    \end{center}
    We build the following match sets:
    \begin{equation*}
        \Sigma^+_{i,j} = \left(
        \begin{tabular}{C{3.3cm} C{3.3cm} C{3.3cm}}
               $\IncSubst$
             & $\IncSubst$ 
             & $\{x_1 \mapsto y_1, x_2 \mapsto c\}$ \\
               $\{x_1 \mapsto d, x_2 \mapsto f(c)\}$
             & $\{x_1 \mapsto c, x_2 \mapsto d\}$
             & $\IncSubst$ \\
               $\IncSubst$ 
             & $\IncSubst$ 
             & $\{x_1 \mapsto c, x_3 \mapsto y_1\}$\\
        \end{tabular}
        \right)
    \end{equation*}

    \begin{equation*}
        \Sigma^-_{i,j} = \left(
        \begin{tabular}{C{3.3cm} C{3.3cm} C{3.3cm}}
               $\{x_1 \mapsto c, x_2 \mapsto d\}$
             & $\IncSubst$ 
             & $\IncSubst$ \\
               $\IncSubst$ 
             & $\IncSubst$
             & $\{x_1 \mapsto c, x_2 \mapsto f(y_1)\}$ \\
               $\{x_1 \mapsto d, x_3 \mapsto c\}$ 
             & $\IncSubst$ 
             & $\IncSubst$\\
        \end{tabular}
        \right)
    \end{equation*}
    \newpage
    We can express the direct subsumption resolution encoding $\DirEnc(\sideP, \mainP)$ as
    \begin{align*}
         & b^+_{1,3} \Rightarrow \{x_1 \mapsto y_1, x_2 \mapsto c\} \subseteq \sigma 
         & \textbf{positive compatibility} \\
         & b^+_{2,1} \Rightarrow \{x_1 \mapsto d, x_2 \mapsto f(c)\} \subseteq \sigma 
         & \textbf{positive compatibility} \\
         & b^+_{2,2} \Rightarrow \{x_1 \mapsto c, x_2 \mapsto d\} \subseteq \sigma 
         & \textbf{positive compatibility} \\
         & b^+_{3,3} \Rightarrow \{x_1 \mapsto c, x_3 \mapsto y_1\} \subseteq \sigma 
         & \textbf{positive compatibility} \\
         & b^-_{1,1} \Rightarrow \{x_1 \mapsto c, x_2 \mapsto d\} \subseteq \sigma 
         & \textbf{negative compatibility} \\
         & b^-_{2,3} \Rightarrow \{x_1 \mapsto c, x_2 \mapsto f(y_1)\} \subseteq \sigma 
         & \textbf{negative compatibility} \\
         & b^-_{3,1} \Rightarrow \{x_1 \mapsto d, x_3 \mapsto c\} \subseteq \sigma 
         & \textbf{negative compatibility} \\
         & b^-_{1,1} \lor b^-_{2,3} \lor b^-_{3,1}
         & \textbf{SAT-based existence} \\
         & \neg b^-_{1,1} \lor \neg b^-_{2,3}
         & \textbf{SAT-based uniqueness} \\
         & \neg b^-_{2,3} \lor \neg b^-_{3,1}
         & \textbf{SAT-based uniqueness} \\
         & b^-_{1,1} \lor b^+_{1,3}
         & \textbf{SAT-based completeness} \\
         & b^+_{2,1} \lor b^+_{2,2} \lor b^-_{2,3}
         & \textbf{SAT-based completeness} \\
         & b^-_{3,1} \lor b^+_{3,3}
         & \textbf{SAT-based completeness} \\
         & \neg b^-_{1,1} \lor \neg b^+_{2,1}
         & \textbf{SAT-based coherence} \\
         & \neg b^-_{3,1} \lor \neg b^+_{2,1}
         & \textbf{SAT-based coherence} \\
         & \neg b^-_{2,3} \lor \neg b^+_{1,3}
         & \textbf{SAT-based coherence} \\
         & \neg b^-_{2,3} \lor \neg b^+_{3,3}
         & \textbf{SAT-based coherence}
    \end{align*}

    Our  SAT solver from Section~\ref{sec:implementation}  derives the model  
    $\{ \lnot b^+_{1,3}, \lnot b^+_{2,1}, b^+_{2,2}, b^+_{3,3}, b^-_{1,1}, \lnot b^-_{2,3}, \lnot b^-_{3,1} \}$
    of $\DirEnc(\sideP, \mainP)$, as detailed in Example~\ref{ex:SATSolvingSRDirect}. The substitution $\sigma$ is correct and is composed of the union of all the substitutions bound to variables assigned true:
    \[
    \sigma =
    \bigcup\Bigl\{\Sigma^+_{i,j} \mathrel{\bigm|} b^+_{i,j} = \top\Bigr\}
    \cup
    \bigcup\Bigl\{\Sigma^-_{i,j} \mathrel{\bigm|} b^-_{i,j} = \top\Bigr\}
    \]
that is,
\begin{align*}
    \sigma &=
    \{x_1 \mapsto c, x_2 \mapsto d\} \cup
    \{x_1 \mapsto c, x_3 \mapsto y_1\} \cup
    \{x_1 \mapsto c, x_2 \mapsto d\} \\
    &= \{x_1 \mapsto c, x_2 \mapsto d, x_3 \mapsto y_1\}
\end{align*}

  The conclusion clause of \SR{} is  built from the model by removing $\mainL_1$ from the main premise because $b^-_{1,1} = \top$. This gives us the resolution clause $\mainP \setminus \{\mainL_1\} = \neg p(d, c) \lor p(f(y_1), c)$, which subsumes $\mainP$.

\end{example}

\subsection{Indirect SAT Encoding of  Subsumption Resolution}
The direct SAT encoding $\DirEnc(\sideP, \mainP)$ of subsumption resolution has a potential inefficiency due to the fact that the \textbf{uniqueness constraint}~\eqref{eq:subsumption-resolution-direct-uniqueness} may create a quartic number of clauses in the worst case. We circumvent this issue by trading off constrains for variables, resulting in an indirect SAT encoding $\IndEnc(\sideP, \mainP)$ of subsumption resolution. Doing so,  we introduce a new set of  propositional variables $c_j$ such that $c_j$ is true iff $\mainL_j$ is the resolution literal of \SR. In other words, $c_j \Leftrightarrow \exists i.\ \sigma(\sideL_i) = \neg \mainL_j$.

We encode the role of $c_j$ with  constraint~\eqref{eq:subsumption-resolution-indirect-encoding} given below:
\begin{align}
  &\textbf{SAT-based structurality } &
  \bigwedge_j \left[\neg c_j \lor \bigvee_i b_{i,j}^- \right] \land \bigwedge_j \bigwedge_i \left(c_j \lor \neg b_{i,j}^-\right)
  \label{eq:subsumption-resolution-indirect-encoding}
\end{align}
Using variables $c_j$, the constraints of Theorem~\ref{thm:subsumption-resolution-constraints} are turned into the following SAT formulas: 

\begin{align}
  &\textbf{SAT-based revised existence  } &
  \bigvee_{j} c_j
  \label{eq:subsumption-resolution-indirect-existence} \\
  &\textbf{SAT-based revised uniqueness  } &
  \AMO(\{c_{j}, j =1,...,|\mainP|\})
  \label{eq:subsumption-resolution-indirect-uniqueness} \\
  &\textbf{SAT-based completeness  } &
  \bigwedge_{i} \bigvee_{j} b^+_{i,j} \lor b^-_{i,j}
  \label{eq:subsumption-resolution-indirect-completeness}\\
  &\textbf{SAT-based revised coherence  } &
  \bigwedge_{j} \bigwedge_{i} \left(\neg c_j \lor \neg b_{i,j}^+\right)
  \label{eq:subsumption-resolution-indirect-coherence}
\end{align}

\begin{definition}[Indirect SAT Encoding of Subsumption Resolution]
    The \emph{indirect SAT encoding for subsumption resolution $\IndEnc(\sideP, \mainP)$} for clauses $\sideP$ and $\mainP$ is the conjunction of  \textbf{positive compatibility}~\eqref{eq:positive-compatibility}, \textbf{negative compatibility}~\eqref{eq:negative-compatibility}, \textbf{structurality}~\eqref{eq:subsumption-resolution-indirect-encoding}, \textbf{revised existence}~\eqref{eq:subsumption-resolution-indirect-existence}, \textbf{revised uniqueness}~\eqref{eq:subsumption-resolution-indirect-uniqueness}, \textbf{completeness}~\eqref{eq:subsumption-resolution-indirect-completeness}, and \textbf{revised coherence}~\eqref{eq:subsumption-resolution-indirect-coherence}.
\end{definition}
With this new indirect encoding $\IndEnc(\sideP, \mainP)$, the number of clauses is only quadratic with respect to the length of the clauses.

\begin{theorem}[Soundness]\label{thm:indirect-encoding-soundness-and-completeness} 
Assume $\mainP$ does not contain duplicate literals, as in~\eqref{eq:no-duplicates}.
    Clauses $\sideP$ and $\mainP$ are  the side and main premise of subsumption resolution iff $\IndEnc(\sideP, \mainP)$ is satisfiable.
\end{theorem}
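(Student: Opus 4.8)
The plan is to prove the theorem by reducing the indirect encoding to the already-established direct one: I will show that $\IndEnc(\sideP,\mainP)$ and $\DirEnc(\sideP,\mainP)$ are equisatisfiable, and then invoke Theorem~\ref{thm:direct-encoding-soundness-and-completeness}. The first observation is that $\IndEnc(\sideP,\mainP)$ contains \textbf{positive compatibility}~\eqref{eq:positive-compatibility}, \textbf{negative compatibility}~\eqref{eq:negative-compatibility}, and \textbf{completeness}~\eqref{eq:subsumption-resolution-indirect-completeness}; these are exactly the hypotheses of Theorem~\ref{thm:subsumption-resolution-backward-compatibility}. Hence Corollary~\ref{cor:subsumption-resolution-backward-compatibility} applies verbatim to every model of $\IndEnc(\sideP,\mainP)$, so in any such model the variables $b^\pm_{i,j}$ faithfully encode $\sigma(\sideL_i)=\mainL_j$ and $\sigma(\sideL_i)=\neg\mainL_j$. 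This lets me reason about the $b$-variables as matches, just as for the direct encoding.

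The central step is to eliminate the auxiliary variables $c_j$. I would first show that \textbf{structurality}~\eqref{eq:subsumption-resolution-indirect-encoding} is precisely the biconditional $c_j \Leftrightarrow \bigvee_i b^-_{i,j}$: its first conjunct gives $c_j \Rightarrow \bigvee_i b^-_{i,j}$ and its second gives $b^-_{i,j} \Rightarrow c_j$ for every $i$. Substituting this biconditional into the remaining revised constraints recovers the direct ones over the $b$-variables: \textbf{revised existence}~\eqref{eq:subsumption-resolution-indirect-existence} becomes $\bigvee_j\bigvee_i b^-_{i,j}$, i.e.\ \textbf{existence}~\eqref{eq:subsumption-resolution-direct-existence}; \textbf{revised uniqueness}~\eqref{eq:subsumption-resolution-indirect-uniqueness} forces at most one column $j$ to carry a negative match, which is what \textbf{uniqueness}~\eqref{eq:subsumption-resolution-direct-uniqueness} expresses; and \textbf{revised coherence}~\eqref{eq:subsumption-resolution-indirect-coherence} together with structurality yields $b^-_{i',j} \Rightarrow \neg b^+_{i,j}$ for all $i,i'$, which is \textbf{coherence}~\eqref{eq:subsumption-resolution-direct-coherence}. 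The completeness constraint is literally identical in both encodings.

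With this correspondence in hand, equisatisfiability follows in both directions. Given a model of $\IndEnc(\sideP,\mainP)$, its restriction to the $b$-variables satisfies all four direct constraints by the rewriting above, hence is a model of $\DirEnc(\sideP,\mainP)$. Conversely, given a model of $\DirEnc(\sideP,\mainP)$, I extend it by defining $c_j \coloneqq \bigvee_i b^-_{i,j}$; this choice satisfies \textbf{structurality}~\eqref{eq:subsumption-resolution-indirect-encoding} by construction, and then each revised constraint holds because it is implied by its direct counterpart under this substitution for $c_j$. Chaining these equivalences with Theorem~\ref{thm:direct-encoding-soundness-and-completeness} gives the claim: $\sideP,\mainP$ are the side and main premises of subsumption resolution iff $\DirEnc(\sideP,\mainP)$ is satisfiable iff $\IndEnc(\sideP,\mainP)$ is satisfiable.

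I expect the only delicate point to be the uniqueness/coherence translation: one must check that the $\AMO$ over the \emph{column variables} $c_j$, rather than over the individual $b^-_{i,j}$, still enforces that all negative matches collapse onto a single resolution literal $\mainL_{j'}$, and that coherence, now phrased per-column via $c_j$ rather than per-pair, still rules out a positive and a negative match in the same column. Both reduce to the biconditional $c_j \Leftrightarrow \bigvee_i b^-_{i,j}$, so once that is proved the rest is routine propositional rewriting. An alternative, if one prefers to avoid passing through $\DirEnc(\sideP,\mainP)$, is to repeat the direct-encoding argument directly against Theorem~\ref{thm:subsumption-resolution-constraints}, reading $c_{j'}$ as the indicator of the resolution literal $\mainL'=\mainL_{j'}$; the work is essentially the same.
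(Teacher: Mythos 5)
Your proposal is correct, but it takes a genuinely different route from the paper. The paper proves the theorem directly against the semantic characterization: for the forward direction it combines Theorem~\ref{thm:subsumption-resolution-backward-compatibility} with \textbf{structurality}~\eqref{eq:subsumption-resolution-indirect-encoding} to obtain $c_j \Leftrightarrow \exists i.\ \sigma(\sideL_i)=\neg\mainL_j$ and then reads off the four constraints of Theorem~\ref{thm:subsumption-resolution-constraints}; for the backward direction it explicitly constructs a model of $\IndEnc(\sideP,\mainP)$ from a given \SR{} instance (setting $c_1=\top$ for the resolution literal and assigning the $b^\pm_{i,j}$ accordingly). You instead prove propositional equisatisfiability of $\IndEnc(\sideP,\mainP)$ and $\DirEnc(\sideP,\mainP)$ by treating the $c_j$ as definitional (Tseitin-style) variables via the biconditional $c_j \Leftrightarrow \bigvee_i b^-_{i,j}$, and then chain through Theorem~\ref{thm:direct-encoding-soundness-and-completeness}. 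Your route is more modular --- it avoids repeating the model construction, reuses the direct-encoding soundness result, and makes explicit the design fact that the indirect encoding is a definitional extension of the direct one --- while the paper's route is self-contained and independent of the precise clausal form of $\DirEnc$. That independence matters at exactly the spot you flagged as delicate: your reduction needs direct \textbf{uniqueness}~\eqref{eq:subsumption-resolution-direct-uniqueness} to forbid \emph{every} pair of negative matches in distinct columns, but read literally its index bounds ($i'\geq i$, $j'>j$) omit pairs such as $b^-_{2,1}\land b^-_{1,2}$, and with that literal reading the direct-to-indirect direction of your equivalence would fail (a direct model could populate two columns, violating $\AMO(\{c_j\})$). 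The intended reading is clearly all cross-column pairs --- Example~\ref{ex:dir-encoding} generates the clause $\neg b^-_{2,3}\lor\neg b^-_{3,1}$, which the literal bounds would not produce --- so your argument is sound under the intended constraint, but it is worth stating that reading explicitly, since the paper's own proof never depends on it.
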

\begin{proof}
    From Theorem~\ref{thm:subsumption-resolution-backward-compatibility},  if $\eqref{eq:positive-compatibility} \land \eqref{eq:negative-compatibility} \land \eqref{eq:subsumption-resolution-indirect-completeness}$ is satisfiable, then $\forall i, j.\ b^+_{i,j} \Leftrightarrow \sigma(\sideL_i) = \mainL_j$ and $\forall i, j.\ b^-_{i,j} \Leftrightarrow \sigma(\sideL_i) = \neg \mainL_j$. Using~\eqref{eq:subsumption-resolution-indirect-encoding}, we obtain $\forall j.\ c_j \Leftrightarrow \exists i.\ \sigma(\sideL_i) = \neg \mainL_j$. Based on \eqref{eq:subsumption-resolution-indirect-existence}-\eqref{eq:subsumption-resolution-indirect-coherence}, we obtain the subsumption resolution constraints of Theorem~\ref{thm:subsumption-resolution-constraints}. Therefore, if $\IndEnc(\sideP, \mainP)$ is satisfiable, then subsumption resolution can be applied over $(\sideP,\mainP)$.

    For the other direction, assume 
    subsumption resolution can be applied over $(\sideP,\mainP)$. Then, we can build a model that satisfies $\IndEnc(\sideP, \mainP)$, as follows. There exists a substitution $\sigma$, a literal $\mainL'\in \mainP$ and a set of literals $\sideP' \subseteq \sideP$ such that $\sigma(\sideP') = \{\neg \mainL'\} \land \sigma(\sideP \setminus \sideP') \subseteq \mainP \setminus \{\mainL'\}$. 
    Without loss of generality, let $\mainL_1 = \mainL'$ be the resolution literal of \SR. We set $c_1 = \top$ and all the other $c_j$ to false. For each literal in $\sideL_{i'} \in \sideP'$, we set $b^-_{i', 1} = \top$ and $\ b^-_{i', j} = \bot$, for $j\neq 1$; further, $\ b^+_{i', j} = \bot$, for all $j$. Let $\sideP^* = \sideP \setminus \sideP'$. For each literal $\sideL_{i^*} \in \sideP^*$, there exists a literal $\mainL_{j^*} \in \mainP \setminus \{\mainL_1\}$ such that $\sigma(\sideL_{i^*}) = \mainL_{j^*}$. We set $b^+_{i^*, j^*} = \top$; $b^+_{i^*, j} = \bot$, for 
$j\neq j^*$; 
    and $b^-_{i^*, j} = \bot$, for all $j$. This is indeed a model of $\IndEnc(\sideP, \mainP)$.
\end{proof}

We note that, in practice, the number of clauses of the indirect SAT encoding can be greater than the direct SAT encoding, even for large clauses. Indeed, it is not necessary to define variables for literal pairs that we know in advance cannot be matched. If $\Sigma^+_{i,j} = \IncSubst$, we do not define $b^+_{i,j}$ because  the constraint $b^+_{i,j} \Rightarrow \Sigma^+_{i,j} \subseteq \sigma$ will be reduced to $b^+_{i,j} \Rightarrow \bot$ and $b^+_{i,j}$ is always  false. We do not need to add the clauses containing $\neg b^+_{i,j}$,
and we remove the literals $b^+_{i,j}$ where it appears. In practice, most instances of subsumption and subsumption resolution  have a sparse Boolean variable set, and  behave quite well even with the direct SAT encoding. Choosing which encoding to use is discussed in Section~\ref{sec:choosing-encoding}.

\begin{example}[Subsumption Resolution with $\IndEnc(\sideP, \mainP)$]\label{ex:indEncodingSR}
    Consider clauses from Example~\ref{ex:dir-encoding}. Namely, $\sideP = \sideL_1 \lor \sideL_2 \lor \sideL_3$ and $\mainP = \mainL_1 \lor \mainL_2 \lor \mainL_3$, with

    \begin{center}
        \begin{tabular}{l l}
               $\sideL_1 = p(f(x_1), x_2)$
             & $\mainL_1 = \neg p(f(c), d)$ \\
               $\sideL_2 = \neg p(x_2, x_1)$ 
             & $\mainL_2 = \neg p(d, c)$ \\
               $\sideL_3 = p(f(x_3), x_1)$ 
             & $\mainL_3 = p(f(y_1), c)$\\
        \end{tabular}
    \end{center}

    In the indirect SAT encoding $\IndEnc(\sideP, \mainP)$, we introduce two extra variables $c_1$ and $c_3$ such that $c_1$ is true iff $\exists i.\ b^-_{i,1}$, and $c_3$ is true iff $\exists i.\ b^-_{i,3}$. It is not necessary to define $c_2$ since no negative polarity matches exist towards $\mainL_2$, and $c_2$ is set to false.
    The SAT constraints identical to the direct SAT encoding $\DirEnc(\sideP, \mainP)$ are written below  in \gray{light gray} to better highlight the difference between $\DirEnc(\sideP, \mainP)$ and $\IndEnc(\sideP, \mainP)$.
    \begin{align*}
        & \textcolor{gray}{b^+_{1,3} \Rightarrow \{x_1 \mapsto y_1, x_2 \mapsto c\} \subseteq \sigma }
        & \textcolor{gray}{\textbf{positive compatibility}} \\
        & \textcolor{gray}{b^+_{2,1} \Rightarrow \{x_1 \mapsto d, x_2 \mapsto f(c)\} \subseteq \sigma} 
        & \textcolor{gray}{\textbf{positive compatibility}} \\
        & \textcolor{gray}{b^+_{2,2} \Rightarrow \{x_1 \mapsto c, x_2 \mapsto d\} \subseteq \sigma}
        & \textcolor{gray}{\textbf{positive compatibility}} \\
        & \textcolor{gray}{b^+_{3,3} \Rightarrow \{x_1 \mapsto c, x_3 \mapsto y_1\} \subseteq \sigma}
        & \textcolor{gray}{\textbf{positive compatibility}} \\
        & \textcolor{gray}{b^-_{1,1} \Rightarrow \{x_1 \mapsto c, x_2 \mapsto d\} \subseteq \sigma}
        & \textcolor{gray}{\textbf{negative compatibility}} \\
        & \textcolor{gray}{b^-_{2,3} \Rightarrow \{x_1 \mapsto c, x_2 \mapsto f(y_1)\} \subseteq \sigma}
        & \textcolor{gray}{\textbf{negative compatibility}} \\
        & \textcolor{gray}{b^-_{3,1} \Rightarrow \{x_1 \mapsto d, x_3 \mapsto c\} \subseteq \sigma}
        & \textcolor{gray}{\textbf{negative compatibility}} \\
        & \neg c_1 \lor b^-_{1, 1} \lor b^-_{3, 1}
        & \textbf{SAT-based structurality} \\
        & c_1 \lor \neg b^-_{1, 1}
        & \textbf{SAT-based structurality} \\
        & c_1 \lor \neg b^-_{3, 1}
        & \textbf{SAT-based structurality} \\
        & \neg c_3 \lor b^-_{2,3}
        & \textbf{SAT-based structurality} \\
        & c_3 \lor \neg b^-_{2,3}
        & \textbf{SAT-based structurality} \\
        & c_1 \lor c_3
        & \textbf{SAT-based revised existence} \\
        & \AMO(\{c_1, c_3\})
        & \textbf{SAT-based revised uniqueness} \\
        & \textcolor{gray}{b^-_{1,1} \lor b^+_{1,3}}
        & \textcolor{gray}{\textbf{SAT-based completeness}} \\
        & \textcolor{gray}{b^+_{2,1} \lor b^+_{2,2} \lor b^-_{2,3}}
        & \textcolor{gray}{\textbf{SAT-based completeness}} \\
        & \textcolor{gray}{b^-_{3,1} \lor b^+_{3,3}}
        & \textcolor{gray}{\textbf{SAT-based completeness}} \\
        & \neg c_1 \lor \neg b^+_{2,1}
        & \textbf{SAT-based revised coherence} \\
        & \neg c_3 \lor \neg b^+_{1,3}
        & \textbf{SAT-based revised coherence} \\
        & \neg c_3 \lor \neg b^+_{3,3}
        & \textbf{SAT-based revised coherence}
    \end{align*}

    Using the above indirect encoding $\IndEnc(\sideP, \mainP)$, our SAT solver in Section~\ref{sec:implementation} finds the same model (substitution) of subsumption resolution as in Example~\ref{ex:dir-encoding}, with $c_1,c_2$ being assigned true and false respectively.
\end{example}

We remark that the indirect encoding $\IndEnc(\sideP, \mainP)$ does not seem to have much of an advantage on small examples similar to Example~\ref{ex:indEncodingSR}.  Indeed, structurality~\eqref{eq:subsumption-resolution-indirect-encoding} adds a few clauses that are not necessary with the direct encoding $\DirEnc(\sideP, \mainP)$. In Section~\ref{sec:choosing-encoding} we empirically show that the indirect encoding $\IndEnc(\sideP, \mainP)$ of subsumption resolution  performs better on larger clauses.

\section{SAT Solving for Subsumption Variants\label{sec:implementation}}
We now describe our approach for solving the SAT-based encodings $\SEnc(\sideP, \mainP)$, $\DirEnc(\sideP, \mainP)$
and $\IndEnc(\sideP, \mainP)$ of Section~\ref{sec:satconstraints} for subsumption and subsumption resolution.
We first introduce our SAT solver adjusted for the efficient handling of  subsumption (resolution) constraints,
important for reasoning about substitution constraints $\Sigma^\pm_{i,j}\subseteq \sigma$ and at-most-one constraints (Section~\ref{sec:sat-solver}).
We then describe pruning-based preprocessing steps of subsumption (resolution) instances (Section~\ref{sec:pruning}),
with the purpose of improving SAT-based solving of subsumption and subsumption resolution.

\paragraph{Lightweight SAT Solving.}
We use the term \emph{lightweight SAT Solving} to
highlight an important engineering aspect
when designing a SAT solver for subsumption and subsumption resolution.
A typical run of a first-order theorem prover
involves a large number of simple subsumption (resolution) tests
and a small number of hard tests.
Even after pruning, most instances that make it to the SAT solver are solved quickly
(see also Section~\ref{sec:cutoff} and Figure~\ref{fig:cutoff}).
As a result, some care must be taken to ensure that setup of the SAT instances is efficient,
because a large overhead may easily outweigh gains in solving efficiency.

\subsection{SAT Solver for Subsumption Encodings\label{sec:sat-solver}}
 Recall that the SAT-based encodings 
 $\SEnc(\sideP, \mainP)$, $\DirEnc(\sideP, \mainP)$ and $\IndEnc(\sideP, \mainP)$ of subsumption and subsumption resolution
use substitution constraints $\Sigma^\pm_{i,j}\subseteq \sigma$ and at-most-one constraints ($\AMO$), 
which are out of scope for standard SAT solvers~\cite{DBLP:conf/sat/EenS03,DBLP:conf/sat/BiereFW23}. 
A na\"ive SAT approach of handling such constraints would be translating  $\Sigma^\pm_{i,j} \subseteq \sigma$ and $\AMO$ formulas into purely propositional clauses.
However, such a translation would either require additional propositional variables to encode $\AMO$ constraints
or would come with a quadratic%
\footnote{%
Quadratic in the size of the $\AMO$ constraint.
}
number of propositional clauses~\cite{FrischGiannaros:2010};
a similar situation also occurs for substitution constraints $\Sigma^\pm_{i,j}\subseteq \sigma$. To ensure efficient solving of subsumption (resolution), 
solving our SAT encodings needs to be \emph{lightweight} in order to be practically feasible during redundancy checking in a first-order theorem prover. 

As a remedy to overcome the increase in propositional variables/clauses in a na\"ive SAT translation approach,
we support substitution constraints as in~$\eqref{eq:positive-compatibility}$ and~$\eqref{eq:negative-compatibility}$,
as well as $\AMO$ constraints as in
\textbf{SAT-based multiplicity conservation}~$\eqref{eq:subsumption-multiplicity-sat}$
and \textbf{SAT-based revised uniqueness}~$\eqref{eq:subsumption-resolution-indirect-uniqueness}$, 
natively in SAT solving.
In particular, we adjust unit propagation and conflict resolution in CDCL-based SAT solving for handling
propositional formulas with substitution constraints and $\AMO$ constraints.

\paragraph{At-most-one constraints.}
Consider the constraint $\AMO(\{b_1, b_2, \dots, b_n\})$, which  is equivalent to the following purely propositional formula:
\begin{equation}
    \label{eq:amo-to-sat}
    \bigwedge_{i} \bigwedge_{j>i} \neg b_i \lor \neg b_j
\end{equation}
To keep our encoding of~$\AMO{}$ constraints lightweight, we combine SAT solving with $\AMO$ constraints in a way similar to SMT solving, as follows.

When the constraint $\AMO(\{b_1, b_2, \dots, b_n\})$ is added
to the SAT solver, each of the variables $b_1, b_2, \dots, b_n$ watches
the constraint.
Whenever one of the variables $b_i$ is assigned true,
all $b_j$ with $j\neq i$ must be false in order not to violate $\AMO(\{b_1, b_2, \dots, b_n\})$; hence $b_j$ are propagated to false.
The reasons of these propagations are exactly
the clauses $\lnot b_i \lor \lnot b_j$
of~\eqref{eq:amo-to-sat}; 
however, these clauses do not need to be explicitly constructed. 
Conflict analysis in SAT solving then behaves as usual, 
without special considerations for $\AMO$ constraints.

\paragraph{Compatibility constraints.}
Similar to $\AMO$ constraints, a compatibility constraint is equivalent to a set of binary clauses, as given in~\eqref{eq:positive-compatibility}-\eqref{eq:negative-compatibility}. 
Let $\Sigma_1 \substincompat \Sigma_2$ denote that
the substitutions $\Sigma_1$ and $\Sigma_2$ are incompatible; based on Definition~\ref{def:Substitution:Compatibility}, there  exists thus a variable $x$ such that $\Sigma_1(x) \neq \Sigma_2(x)$.
Let $F$ be the set of constraints under consideration.
The purely propositional semantics of the compatibility constraints~\eqref{eq:positive-compatibility}-\eqref{eq:negative-compatibility} is the  clause set:
\begin{equation}\label{eq:SAT:CC}
    \left\{
        \neg b \lor \neg b' \mid
        (b \Rightarrow \Sigma\subseteq \sigma)\in F \land (b' \Rightarrow \Sigma' \subseteq \sigma)\in F \land \Sigma \substincompat \Sigma'
    \right\}
\end{equation}

\newcommand{\globalsubst}{\sigma_\tau}
We remark that it is not necessary to generate the clauses~\eqref{eq:SAT:CC} explicitly.
Conceptually, our SAT solver updates a global substitution $\globalsubst$
whenever a Boolean variable~$b$ with associated
substitution constraint $\Sigma\subseteq\sigma$ is assigned true.
Our SAT solver then ensures that the following invariant holds:
\begin{equation}
    \label{eq:sat-solver-invariant}
    \globalsubst =
    \bigcup \bigl\{\Sigma \mathrel{\bigm|} b \Rightarrow (\Sigma \subseteq \sigma)\in \PropF \land b \in \tau \bigr\},
\end{equation}
where $\tau$ is the current set of assigned literals of the SAT solver (i.e., the trail).
Our SAT solver  uses $\globalsubst$ to propagate
any Boolean variables bound to incompatible substitutions to false.

We note that, in practice, it is not necessary to keep $\globalsubst$
explicitly;  instead it suffices to maintain a lookup table that
allows propagating such incompatible substitutions.
\newcommand{\bindingslookup}{\mathit{Bindings}}
Concretely, each first-order variable~$x$
watches the set~$\bindingslookup(x)$ of Boolean variables~$b$ that impose a binding
on~$x$ along with the bound term~$t$:
\[
    \bindingslookup(x) = \{ 
        (b, t) \mid
        b \rightarrow (\Sigma\subseteq\sigma) \in F \land
        t = \Sigma(x)
    \}
\]
When the global substitution $\globalsubst$ is updated
with a variable~$x$ newly mapped to a term~$t$,
our SAT solver uses $\bindingslookup(x)$ to retrieve
all the Boolean variables~$b'$
with an associated substitution constraint $\Sigma'\subseteq\sigma$
such that $\Sigma'(x) \neq t$.
The solver  then propagates~$b'$ to false,
and the propagation reason is the binary clause $\lnot b \lor \lnot b'$,
where~$b$ is the Boolean variable that caused $\globalsubst(x)=t$.

As a result, our SAT solver ensures that $\Sigma' \substincompat \sigma$ implies $\lnot b'$.
We perform this propagation of incompatible substitution constraints
immediately when a Boolean variable is assigned true.
This way, we enforce the  invariant~\eqref{eq:sat-solver-invariant}
and guarantee there can be no conflict due to substitution constraints.
Indeed, if $b\Rightarrow \Sigma\in \sigma$ and $\Sigma\substincompat\globalsubst$, then $b$ would have been assigned false before.

\begin{example}[SAT Solving of the SAT-Based Direct Encoding of Subsumption Resolution]\label{ex:SATSolvingSRDirect}
We illustrate the main steps of our SAT solver using the direct encoding $\DirEnc(\sideP,\mainP)$ of subsumption resolution from Example~\ref{ex:dir-encoding}. 

A potential execution of our SAT solver on 
$\DirEnc(\sideP,\mainP)$ decides $b^+_{1,3} = \top$.
This imposes, among others, the mapping $x_1\mapsto y_1$,
and due to the \textbf{compatibility} constraints
all other Boolean variables are immediately propagated to false.
This leads to conflicts with the \textbf{existence}
and some \textbf{completeness} constraints.
Assume the solver discovers the conflict with $b^-_{3,1} \lor b^+_{3,3}$.
As explained above, the reasons for propagating these literals are
the implicit binary clauses
$\lnot b^+_{1,3} \lor \lnot b^-_{3,1}$
and
$\lnot b^+_{1,3} \lor \lnot b^+_{3,3}$,
and after resolution, the solver will backtrack,
learn the asserting clause $\lnot b^+_{1,3}$,
and propagate $b^+_{1,3} = \bot$.
With \textbf{completeness} $b^-_{1,1} \lor b^+_{1,3}$,
the solver propagates $b^-_{1,1} = \top$,
which imposes the mappings $x_1\mapsto c$ and $x_2\mapsto d$
on $\globalsubst$.
By \textbf{compatibility}, the solver now propagates
$b^+_{2,1} = \bot$,
$b^-_{2,3} = \bot$,
and
$b^-_{3,1} = \bot$.
With the remaining \textbf{completeness} constraints,
the solver now propagates
$b^+_{2,2} = \top$
and  $b^+_{3,3} = \top$.
At this point, all Boolean variables are assigned
and all constraints are satisfied, yielding the model
$\{ \lnot b^+_{1,3}, \lnot b^+_{2,1}, b^+_{2,2}, b^+_{3,3}, b^-_{1,1}, \lnot b^-_{2,3}, \lnot b^-_{3,1} \}$
of $\DirEnc(\sideP, \mainP)$ from Example~\ref{ex:dir-encoding}.
\end{example}

\subsection{Pruning Subsumption Variants for SAT Solving}\label{sec:pruning}
Reducing the number of (trivially unsat) instances of subsumption and subsumption resolution is an important preprocessing step for increasing the effectiveness of our SAT solving engine from Section~\ref{sec:sat-solver}.

\paragraph{Pruning subsumption.}
We prune unsat subsumption instances between $(\sideP,\mainP)$
by checking multiset inclusion between the predicates of the atoms of $\sideP, \mainP$,
together with their polarities.
Intuitively, this pruning step allows to easily determine that there exists no bijective function $j(i)$
such that $\sigma(\sideL_i) = \mainL_{j(i)}$ if the atom cardinalities do not match.

More formally, let~$\mathcal{Pre}(\ell)$ compute the predicate corresponding
to literal~$\ell$ and~$\mathcal{Q}(\ell)$ denote the polarity of~$\ell$.
Our pruning criterion for subsumption is:
\begin{equation}
    \bigr\{ \bigl(\mathcal{P}(\sideL_i), \mathcal{Q}(\sideL_i)\bigr) \bigm| \sideL_i \in \sideP \bigr\}
    \sqsubseteq
    \bigl\{ \bigl(\mathcal{P}(\mainL_j), \mathcal{Q}(\mainL_j)\bigr) \bigm| \mainL_j \in \mainP \bigr\}
    \label{eq:prune-s}
\end{equation}

\begin{theorem}[Pruning Subsumption]
    \label{thm:prune-s}
    If the pruning criterion~\eqref{eq:prune-s} is unsat, then $\sideP$ does not subsume $\mainP$.
\end{theorem}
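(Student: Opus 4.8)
The plan is to prove the contrapositive, which is precisely the statement as phrased: I assume the pruning criterion fails (the multiset inclusion in~\eqref{eq:prune-s} does not hold) and conclude that $\sideP$ cannot subsume $\mainP$. So suppose, toward a contradiction, that $\sideP$ does in fact subsume $\mainP$, i.e. there is a substitution $\sigma$ with $\sigma(\sideP)\sqsubseteq\mainP$.

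The key observation is that substitutions preserve both the predicate symbol and the polarity of a literal: for any literal $\sideL_i$ and any substitution $\sigma$, we have $\mathcal{P}(\sigma(\sideL_i))=\mathcal{P}(\sideL_i)$ and $\mathcal{Q}(\sigma(\sideL_i))=\mathcal{Q}(\sideL_i)$, since $\sigma$ only rewrites the argument terms and touches neither the outermost predicate symbol nor the negation sign. First I would invoke Theorem~\ref{thm:subsumption-constraints}: since $\mainP$ has no duplicate literals, $\sigma(\sideP)\sqsubseteq\mainP$ yields \textbf{partial completeness}~\eqref{eq:subsumption-completeness} and \textbf{multiplicity conservation}~\eqref{eq:subsumption-multiplicity}. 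Together these give an injective map $i\mapsto j(i)$ with $\sigma(\sideL_i)=\mainL_{j(i)}$ for every $i$. Applying $\mathcal{P}$ and $\mathcal{Q}$ to this equality and using polarity/predicate preservation gives $\bigl(\mathcal{P}(\sideL_i),\mathcal{Q}(\sideL_i)\bigr)=\bigl(\mathcal{P}(\mainL_{j(i)}),\mathcal{Q}(\mainL_{j(i)})\bigr)$.

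From here the argument is a counting step: the injection $i\mapsto j(i)$ induces an injective correspondence of the $\sideP$-side predicate/polarity pairs into the $\mainP$-side pairs, which is exactly what the multiset inclusion~\eqref{eq:prune-s} asserts. Concretely, for each pair $(P,Q)$ the number of literals $\sideL_i$ with $(\mathcal{P}(\sideL_i),\mathcal{Q}(\sideL_i))=(P,Q)$ is at most the number of $\mainL_j$ with $(\mathcal{P}(\mainL_j),\mathcal{Q}(\mainL_j))=(P,Q)$, because distinct such $\sideL_i$ map under the injection to distinct $\mainL_{j(i)}$ carrying the same pair. Hence the left multiset is included in the right one, so criterion~\eqref{eq:prune-s} is satisfiable — contradicting the hypothesis.

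The only delicate point, and the one I would state most carefully, is the role of \textbf{multiplicity conservation}: it is precisely what guarantees the map $j(\cdot)$ is injective, so that two source literals sharing the same predicate/polarity pair really do consume two distinct target slots rather than collapsing onto one. Without it one would only get set inclusion rather than multiset inclusion (mirroring the $\sideP_3$ non-example in Example~\ref{ex:subsumption}), and the pruning criterion would be unsound. I do not expect any genuine obstacle beyond making this injectivity-to-multiplicity bookkeeping explicit; everything else is immediate from Theorem~\ref{thm:subsumption-constraints} and the syntactic invariance of predicate and polarity under substitution.
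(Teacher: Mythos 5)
Your proof is correct and takes essentially the same route as the paper's: both rest on the observation that a substitution preserves a literal's predicate symbol and polarity, so the projection $\ell \mapsto (\mathcal{P}(\ell),\mathcal{Q}(\ell))$ is substitution-agnostic and the multiset inclusion $\sigma(\sideP)\sqsubseteq\mainP$ pushes through to the projected multisets of~\eqref{eq:prune-s}. The only difference is presentational: you unpack the multiset inclusion into an explicit injection $i\mapsto j(i)$ via Theorem~\ref{thm:subsumption-constraints} and count, whereas the paper argues directly at the multiset level with the projection~$\pi$.
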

\begin{proof}
    The multisets $\{ (\mathcal{P}(\sideL_i), \mathcal{Q}(\sideL_i)) \mid \sideL_i \in \sideP \}$
    and $\{ (\mathcal{P}(\mainL_j), \mathcal{Q}(\mainL_j)) \mid \mainL_j \in \mainP \}$
    are projections $\pi$ of the multisets of literals of $\sideP$ and $\mainP$ respectively.
    This projection $\pi$ has the property to make its argument substitution agnostic.
    That is, if there exists $\sigma$ such that $\sigma(\sideL_i) = \mainL_j$, then $\sideL_i$ and $\mainL_j$ are projected on the same location; that is, $(\sigma(\sideL_i) = \mainL_j) \Rightarrow (\pi(\sideL_i) = \pi( \mainL_j))$.
    Therefore, if $\pi(\sideL_i) \neq \pi(\mainL_j)$, then there exist no matching substitution between $\sideL_i$ and $\mainL_j$.
    If formula~\eqref{eq:prune-s} is unsat, then $\pi(\sideP) \nsqsubseteq \pi(\mainP)$, implying that there exists no substitution $\sigma$ such that $\sigma(\sideP) \sqsubseteq \mainP$; as such, subsumption cannot be applied between $(\sideP,\mainP)$.
\end{proof}

\paragraph{Pruning subsumption resolution.}
We similarly prune unsat instances of subsumption resolution,
by using a weaker version of~\eqref{eq:prune-s}.
Namely, for pruning unsat subsumption resolution instances,
we only check  set inclusion between the predicate sets of $\sideP$ and $\mainP$:
\begin{equation}
    \bigl\{\mathcal{P}(\sideL_i) \bigm| \sideL_i \in \sideP \bigr\}
    \subseteq
    \bigl\{\mathcal{P}(\mainL_j) \bigm| \mainL_j \in \mainP \bigr\}
    \label{eq:prune-sr}
\end{equation}

\begin{theorem}\label{thm:prune-s-sr}
    Validity of the subsumption pruning criterion \eqref{eq:prune-s}
    implies validity of the subsumption resolution pruning criterion \eqref{eq:prune-sr}.
\end{theorem}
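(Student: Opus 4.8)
The plan is to read the statement in the register of Theorem~\ref{thm:prune-s} itself: the \emph{validity} of a pruning criterion means that it is a sound necessary test, i.e.\ that failure of the criterion rules out the corresponding rule. Thus validity of \eqref{eq:prune-s} is the implication ``$\sideP$ subsumes $\mainP$ $\Rightarrow$ \eqref{eq:prune-s} holds'' (which is precisely Theorem~\ref{thm:prune-s}), while validity of \eqref{eq:prune-sr} is the implication ``$(\sideP,\mainP)$ admit \SR{} $\Rightarrow$ \eqref{eq:prune-sr} holds''. The goal is therefore to assume the former and establish the latter, reusing the one projection fact that makes \eqref{eq:prune-s} valid and extending it to the single feature absent from plain subsumption, namely the opposite-polarity resolution literal. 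By contraposition it suffices to show that whenever \SR{} applies to $(\sideP,\mainP)$, the set inclusion \eqref{eq:prune-sr} holds.

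First I would extract the lemma that underpins the validity of \eqref{eq:prune-s}: since a substitution $\sigma$ alters neither the predicate symbol nor the polarity of a literal, a positive match $\sigma(\sideL_i)=\mainL_j$ forces $(\mathcal{P}(\sideL_i),\mathcal{Q}(\sideL_i))=(\mathcal{P}(\mainL_j),\mathcal{Q}(\mainL_j))$, whereas a negative match $\sigma(\sideL_i)=\neg\mainL_j$ forces $\mathcal{P}(\sideL_i)=\mathcal{P}(\mainL_j)$ with the polarity flipped. I would then unfold an arbitrary \SR{} instance via Definition~\ref{def:subsumptionRes}: there are a substitution $\sigma$, a resolution literal $\mainL'\in\mainP$, a subset $\sideP'\subseteq\sideP$, and its complement $\sideP^{*}=\sideP\setminus\sideP'$, satisfying $\sigma(\sideP')=\{\neg\mainL'\}$ and $\sigma(\sideP^{*})\subseteq\mainP\setminus\{\mainL'\}$.

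The core step is a literal-by-literal verification of $\{\mathcal{P}(\sideL_i)\mid \sideL_i\in\sideP\}\subseteq\{\mathcal{P}(\mainL_j)\mid \mainL_j\in\mainP\}$. For each $\sideL_i\in\sideP^{*}$ we have $\sigma(\sideL_i)=\mainL_j$ for some $\mainL_j\in\mainP\setminus\{\mainL'\}$, so the positive half of the lemma gives $\mathcal{P}(\sideL_i)=\mathcal{P}(\mainL_j)$; this is exactly the predicate-preservation argument inherited from the validity of \eqref{eq:prune-s}. For each $\sideL_i\in\sideP'$ we have $\sigma(\sideL_i)=\neg\mainL'$, so the negative half gives $\mathcal{P}(\sideL_i)=\mathcal{P}(\mainL')$, and $\mainL'\in\mainP$. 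In both cases $\mathcal{P}(\sideL_i)$ lies in the predicate set of $\mainP$, which establishes \eqref{eq:prune-sr} and hence its validity.

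The main obstacle is conceptual rather than computational: I must justify why the \SR{} filter cannot keep the refinements that \eqref{eq:prune-s} enjoys, and that the bare predicate-set form is exactly the correct weakening. The resolution literal is matched with \emph{opposite} polarity, so the pair $(\mathcal{P}(\sideL_i),\mathcal{Q}(\sideL_i))$ of a literal in $\sideP'$ need not occur on the main side and polarity must be discarded; moreover $\sigma(\sideP')=\{\neg\mainL'\}$ may collapse several side literals onto the single literal $\neg\mainL'$, and the inclusion in Definition~\ref{def:subsumptionRes} is at the set level, so multiplicities cannot be tracked either. Discarding precisely these two refinements is what turns the multiset-with-polarity test \eqref{eq:prune-s} into the predicate-set test \eqref{eq:prune-sr}. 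Hence the very projection argument that validates the subsumption filter, once relaxed to accommodate the resolution literal, validates the \SR{} filter, which is the claimed transfer of validity.
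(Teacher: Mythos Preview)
You have misread what ``validity'' means in this statement. In the paper, ``validity of the pruning criterion~\eqref{eq:prune-s}'' simply means that the formula~\eqref{eq:prune-s} holds for the given pair $(\sideP,\mainP)$; likewise for~\eqref{eq:prune-sr}. The theorem therefore asserts the purely combinatorial implication
\[
\bigl\{ (\mathcal{P}(\sideL_i),\mathcal{Q}(\sideL_i)) \bigm| \sideL_i\in\sideP \bigr\}
\sqsubseteq
\bigl\{ (\mathcal{P}(\mainL_j),\mathcal{Q}(\mainL_j)) \bigm| \mainL_j\in\mainP \bigr\}
\ \Longrightarrow\
\bigl\{ \mathcal{P}(\sideL_i) \bigm| \sideL_i\in\sideP \bigr\}
\subseteq
\bigl\{ \mathcal{P}(\mainL_j) \bigm| \mainL_j\in\mainP \bigr\}.
\]
That is exactly what the paper proves: project away the polarity component (projections preserve multiset inclusion), then weaken multiset inclusion to set inclusion. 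This reading is confirmed by the very next corollary (Corollary~\ref{cor:pruning-sr-weaker}): from $\lnot\eqref{eq:prune-sr}$ one gets $\lnot\eqref{eq:prune-s}$ by the contrapositive of the present theorem, and then ``$\sideP$ does not subsume $\mainP$'' by Theorem~\ref{thm:prune-s}. Under your reading of ``validity'' that corollary would not follow.

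What you actually prove---``if \SR{} applies to $(\sideP,\mainP)$ then~\eqref{eq:prune-sr} holds''---is a correct argument, but it is the content of the \emph{next} theorem in the paper (Pruning Subsumption Resolution), not of Theorem~\ref{thm:prune-s-sr}. Your derivation never uses the hypothesis you set up (soundness of~\eqref{eq:prune-s} for subsumption), which is already a sign that the interpretation is off. For the intended statement, no reference to substitutions, Definition~\ref{def:subsumptionRes}, or the resolution literal is needed: the claim is a one-line monotonicity fact about projections and inclusion.
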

\begin{proof}
    The sets
    $\{ \mathcal{P}(\sideL_i) \mid \sideL_i \in \sideP \}$ and
    $\{ \mathcal{P}(\mainL_j) \mid \mainL_j \in \mainP \}$
    are obtained by a projection $\pi$ from
    $\{ (\mathcal{P}(\sideL_i), \mathcal{Q}(\sideL_i)) \mid \sideL_i \in \sideP \}$ and
    $\{ (\mathcal{P}(\mainL_j), \mathcal{Q}(\mainL_j)) \mid \mainL_j \in \mainP \}$, respectively.
    Therefore, for each pair of elements
    \(
        (e, e') \in
        \{(\mathcal{P}(\sideL_i), \mathcal{Q}(\sideL_i)) \mid \sideL_i \in \sideP \} \times
        \{(\mathcal{P}(\mainL_j), \mathcal{Q}(\mainL_j)) \mid \mainL_j \in \mainP \}
    \),
    if $e = e'$,
    then $\pi(e) = \pi(e')$,
    and the multiset inclusion is preserved.
    As $\mathcal{S}_1 \sqsubseteq \mathcal{S}_2$ (multiset inclusion)
    implies $\mathcal{S}_1 \subseteq \mathcal{S}_2$ (set inclusion),
    we obtain that~\eqref{eq:prune-s} implies~\eqref{eq:prune-sr}.
\end{proof}

The following is an immediate consequence of Theorems~\ref{thm:prune-s}-\ref{thm:prune-s-sr}. 

\begin{corollary}\label{cor:pruning-sr-weaker}
    If the pruning criterion \eqref{eq:prune-sr} is not satisfied, then $\sideP$ does not subsume $\mainP$.
\end{corollary}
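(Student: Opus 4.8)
The plan is to treat this as an immediate consequence of the two preceding results, namely Theorem~\ref{thm:prune-s} and Theorem~\ref{thm:prune-s-sr}, chained together via the contrapositive. The key observation is that criterion~\eqref{eq:prune-sr} is strictly weaker than criterion~\eqref{eq:prune-s}: Theorem~\ref{thm:prune-s-sr} already establishes that validity of~\eqref{eq:prune-s} implies validity of~\eqref{eq:prune-sr}. So no new combinatorial reasoning about multisets, predicates, or substitutions is needed here; all of that work has been done in the proofs of the two cited statements, and I may assume them.

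First I would take the contrapositive of Theorem~\ref{thm:prune-s-sr}. Since validity of~\eqref{eq:prune-s} implies validity of~\eqref{eq:prune-sr}, the contrapositive states that if~\eqref{eq:prune-sr} is \emph{not} satisfied, then~\eqref{eq:prune-s} is also not satisfied, i.e., the multiset-inclusion criterion~\eqref{eq:prune-s} is unsat. Second, I would invoke Theorem~\ref{thm:prune-s}, which asserts precisely that whenever~\eqref{eq:prune-s} is unsat, the clause~$\sideP$ does not subsume~$\mainP$. Composing these two implications yields the desired conclusion: failure of~\eqref{eq:prune-sr} entails failure of~\eqref{eq:prune-s}, which in turn entails that~$\sideP$ does not subsume~$\mainP$.

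There is no genuine obstacle in this argument; the only point requiring mild care is terminological, ensuring that ``not satisfied'' in the statement of the corollary is identified with ``unsat'' in the hypothesis of Theorem~\ref{thm:prune-s}, and that ``validity'' in Theorem~\ref{thm:prune-s-sr} is read as ``the criterion holds.'' Once these phrasings are aligned, the proof is a two-line chain of implications. I would therefore keep the write-up to a single short paragraph stating the contrapositive of Theorem~\ref{thm:prune-s-sr} and then applying Theorem~\ref{thm:prune-s}.
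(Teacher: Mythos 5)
Your proof is correct and matches the paper's intent exactly: the paper states Corollary~\ref{cor:pruning-sr-weaker} as ``an immediate consequence of Theorems~\ref{thm:prune-s}-\ref{thm:prune-s-sr},'' which is precisely your chain of the contrapositive of Theorem~\ref{thm:prune-s-sr} followed by Theorem~\ref{thm:prune-s}. Your terminological caveat identifying ``not satisfied'' with ``unsat'' and ``validity'' with ``the criterion holds'' is apt, since the paper uses these phrasings interchangeably.
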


Similarly to Theorem~\ref{thm:prune-s}, we  use the pruning criterion~\eqref{eq:prune-sr} to detect (and delete) unsat subsumption resolution instances between $(\sideP,\mainP)$. 

\begin{theorem}[Pruning Subsumption Resolution]
    If the pruning criterion~\eqref{eq:prune-sr} is unsat, then $\sideP$ and $\mainP$ are not side and main premises of subsumption resolution.
\end{theorem}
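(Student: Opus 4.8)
The plan is to prove the contrapositive and reduce the claim to the already-established pruning result for subsumption resolution's weaker criterion, together with Definition~\ref{def:subsumptionRes}. The key observation is that if $\sideP$ and $\mainP$ are the side and main premises of subsumption resolution, then by Definition~\ref{def:subsumptionRes} there exists a substitution $\sigma$, a subset $\sideP' \subseteq \sideP$, and a literal $\mainL' \in \mainP$ with $\sigma(\sideP') = \{\neg \mainL'\}$ and $\sigma(\sideP \setminus \sideP') \subseteq \mainP \setminus \{\mainL'\}$. I would show that every literal $\sideL_i \in \sideP$ has its predicate appearing among the predicates of $\mainP$, which is exactly the statement that criterion~\eqref{eq:prune-sr} is satisfied.

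First I would set up a projection argument parallel to the one used in Theorem~\ref{thm:prune-s}. The central fact is that applying a substitution never changes the predicate symbol of a literal, and that complementing a literal (taking $\neg$) changes only its polarity, not its predicate. Hence for any $\sideL_i$ matched positively, i.e. $\sigma(\sideL_i) = \mainL_j$ for some $\mainL_j \in \mainP$, we have $\mathcal{P}(\sideL_i) = \mathcal{P}(\mainL_j)$; and for the literals in $\sideP'$ matched to the resolution literal negatively, $\sigma(\sideL_i) = \neg \mainL'$, we still have $\mathcal{P}(\sideL_i) = \mathcal{P}(\neg \mainL') = \mathcal{P}(\mainL')$ with $\mainL' \in \mainP$. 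In both cases the predicate of $\sideL_i$ is the predicate of some literal of $\mainP$.

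Next I would combine these two cases. By the \textbf{completeness} property~\eqref{eq:subsumption-resolution-completeness} of Theorem~\ref{thm:subsumption-resolution-constraints} (which holds whenever subsumption resolution applies), every literal $\sideL_i$ is matched to some $\mainL_j$ either positively or negatively, so each $\mathcal{P}(\sideL_i)$ equals some $\mathcal{P}(\mainL_j)$. Therefore every element of the set $\{\mathcal{P}(\sideL_i) \mid \sideL_i \in \sideP\}$ belongs to $\{\mathcal{P}(\mainL_j) \mid \mainL_j \in \mainP\}$, which is precisely the set inclusion~\eqref{eq:prune-sr}. Taking the contrapositive, if~\eqref{eq:prune-sr} fails (is unsat), then no such $\sigma$ can exist, and hence $\sideP$ and $\mainP$ cannot be the side and main premises of subsumption resolution.

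The only subtle point, and the one I would be most careful about, is handling the resolution literal correctly: unlike plain subsumption, subsumption resolution allows an \emph{opposite-polarity} match for the resolved literal, so the naive multiset-with-polarity inclusion of~\eqref{eq:prune-s} need not hold. This is exactly why the weaker criterion~\eqref{eq:prune-sr} drops polarities and uses set rather than multiset inclusion. I would emphasize that the argument goes through precisely because discarding polarity information makes the projection insensitive to the complementation in $\sigma(\sideP') = \{\neg \mainL'\}$, so the predicate of each side literal still lands among the predicates of $\mainP$ regardless of whether the match is positive or negative. This is the one place where the proof genuinely differs from that of Theorem~\ref{thm:prune-s}, and it is what the weakening to set inclusion buys us.
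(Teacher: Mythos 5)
Your proof is correct and takes essentially the same route as the paper's: the paper argues that failure of~\eqref{eq:prune-sr} leaves some literal of~$\sideP$ with no possible (positive or negative) match in~$\mainP$, thereby violating the \textbf{completeness} constraint~\eqref{eq:subsumption-resolution-completeness}, which is precisely the contrapositive of your argument. Your explicit point that both substitution and complementation preserve predicate symbols merely spells out the projection step the paper delegates to ``similarly to Theorem~\ref{thm:prune-s}.''
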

\begin{proof}
    Similarly to Theorem~\ref{thm:prune-s}, if criterion~\eqref{eq:prune-sr} is not satisfied, then there exists a literal $\sideL_i \in \sideP$ that cannot be matched with any literal in $\mainP$; as such, the   \textbf{completeness} constraint~\eqref{eq:subsumption-resolution-completeness} of subsumption resolution is violated.
\end{proof}

\paragraph{Fast implementations of pruning.}
To represent the predicate sets used in our pruning criterion, we use an array $\mathcal{A}$ of unsigned integers whose index is the index of the predicate. We first build the multiset with the predicates of the main premise $\mainP$. When a predicate is hit, the value stored in $\mathcal{A}$ is incremented,  we check that $\sideP$ contains a sub-multiset of predicates, and decrement the previously stored value within $A$.

Storing $\mathcal{A}$ entries while applying pruning checks may be memory-expensive.
Resetting the memory before each pruning is also an expensive operation. 
We therefore use a time stamp $t$ such that  $\forall i.\ \mathcal{A}[i] < t + |\mainP|$ holds.
Intuitively, before pruning is applied, $\forall i.\ \mathcal{A}[i] < t$  holds.
Algorithm~\ref{alg:pruning} summarizes  our pruning procedure using time stamps.
\begin{algorithm}[t]
    \caption{Pruning algorithm for subsumption and subsumption resolution}
    \label{alg:pruning}
    \begin{algorithmic}
        \State $N \gets$ number of predicate symbols
        \State $t \gets 0$
        \Comment{\parbox{0.45\textwidth}{time stamp}}
        \State $\mathcal{A} \gets zeros(2\cdot N)$ 
        \Comment{\parbox{0.45\textwidth}{array to hold the (multi-)set}}
        \Function{headerIndex}{$l$}
            \If {$\mathcal{Q}(l)$}
            \Comment{\parbox{0.45\textwidth}{Positive polarity atom}}
                \State \Return{$\mathcal{P}(l)$}
            \EndIf
            \State \Return{$\mathcal{P}(l) + N$}
            \Comment{\parbox{0.45\textwidth}{Negative polarity atom}}
        \EndFunction
        \Procedure{pruneSubsumption}{$\sideP, \mainP$}
            \If {$t + |\mainP| > $ \texttt{UINT\_MAX}}
            \Comment{\parbox{0.45\textwidth}{Reset before arithmetic overflow}}
                \State $\mathcal{A} \gets zeros(2\cdot N)$ 
                \State $t \gets 0$
            \EndIf
            \For {$\mainL \in \mainP$}
                \State $idx \gets \Call{headerIndex}{\mainL}$
                \State $\mathcal{A}[idx] \gets \max (t, \mathcal{A}[idx]) + 1$
            \EndFor
            \For {$\sideL \in \sideP$}
                \State $idx \gets \Call{headerIndex}{\sideL}$
                \If {$\mathcal{A}[idx] \leq t$}
                    \State $t \gets t + |\mainP|$
                    \State \Return{$\top$}
                \EndIf
                \State $\mathcal{A}[idx] \gets \mathcal{A}[idx] - 1$
            \EndFor
            \State $t \gets t + |\mainP|$
            \State \Return{$\bot$}
        \EndProcedure
        \Procedure{pruneSubsumptionResolution}{$\sideP, \mainP$}
            \If {$t + 1 > $ \texttt{UINT\_MAX}}
            \Comment{\parbox{0.45\textwidth}{Reset before arithmetic overflow}}
                \State $\mathcal{A} \gets zeros(2\cdot N)$ 
                \State $t \gets 0$
            \EndIf
            \State $t \gets t + 1$
            \For {$\mainL \in \mainP$}
                \State $\mathcal{A}[\mathcal{P}(\mainL)] \gets t$
            \EndFor
            \For {$\sideL \in \sideP$}
                \If {$\mathcal{A}[\mathcal{P}(\sideL)] \neq t$}
                    \State \Return{$\top$}
                \EndIf
            \EndFor
            \State \Return{$\bot$}
        \EndProcedure
    \end{algorithmic}
\end{algorithm}

\paragraph{Pruning after building match sets.}
While our pruning criteria~\eqref{eq:prune-s}-\eqref{eq:prune-sr} are fast to compute, they do not reason about substitutions needed for subsumption (resolution). However, while building the match sets $\Pi(\sideP,\mainP)$, we may also detect unsat instances of subsumption and subsumption resolutions.
For example, let $\sideL_i = p(f(x))$ be a literal of~$\sideP$. If~$\mainP$ does not contain any literal of the form $p(f(\cdot))$, there is no unifying substitution between $(\sideP,\mainP)$. The non-existence of such substitutions would not necessarily be detected by \eqref{eq:prune-s}-\eqref{eq:prune-sr}, but could be recorded while building the substitution sets.\\
We therefore use the following additional pruning criteria for subsumption:
\begin{equation}
    \label{eq:prune-s-substitution}
    \forall i\exists j.\ \Sigma^+_{i,j} \neq \IncSubst
\end{equation}

\begin{theorem}[Substitution Sets for Pruning Subsumption] \label{thm:pruning-s-substitutions}
    Let
    \(
        \MS(\sideP, \mainP) = \left\{\left(b^\pm_{i,j}, \Sigma^\pm_{i,j}\right)\right\}
    \)
    be the match set of~$\sideP$ and~$\mainP$.
    If~\eqref{eq:prune-s-substitution} is unsat, then~$\sideP$ does not subsume~$\mainP$.
\end{theorem}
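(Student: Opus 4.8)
The plan is to prove the contrapositive, showing that if $\sideP$ subsumes $\mainP$ then the pruning criterion~\eqref{eq:prune-s-substitution} must hold. This mirrors the structure of the soundness arguments used earlier, most directly the subsumption direction of Theorem~\ref{thm:subsumption-encoding-soundness-and-completeness}, and it connects the abstract matching condition of Theorem~\ref{thm:subsumption-constraints} to the concrete match set entries $\Sigma^+_{i,j}$.

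First I would assume that $\sideP$ subsumes $\mainP$, so by Definition~\ref{def:subsumption} there is a substitution $\sigma$ with $\sigma(\sideP) \sqsubseteq \mainP$. By the \textbf{partial completeness} property~\eqref{eq:subsumption-completeness} of Theorem~\ref{thm:subsumption-constraints}, for every index $i$ there exists an index $j$ such that $\sigma(\sideL_i) = \mainL_j$. The key observation is then that the existence of such a global witness $\sigma$ forces the existence of the \emph{local} matcher recorded in the match set: since $\sigma(\sideL_i) = \mainL_j$, the pair $(\sideL_i, \mainL_j)$ admits a positive-polarity match, and therefore the substitution constraint $\Sigma^+_{i,j}$ is a genuine substitution (the restriction of $\sigma$ to $\vars(\sideL_i)$ witnesses it) rather than the incompatible placeholder $\IncSubst$.

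Concretely, I would invoke the definition of the match set, where $\Sigma^+_{i,j}$ is defined precisely so that $\Sigma^+_{i,j}(\sideL_i) = \mainL_j$ whenever such a matcher exists. From $\sigma(\sideL_i) = \mainL_j$ it follows that a matcher of $\sideL_i$ to $\mainL_j$ exists, hence $\Sigma^+_{i,j} \neq \IncSubst$. Universally quantifying over $i$ and existentially over the witnessing $j$ yields exactly $\forall i\, \exists j.\ \Sigma^+_{i,j} \neq \IncSubst$, which is~\eqref{eq:prune-s-substitution}. Taking the contrapositive, if~\eqref{eq:prune-s-substitution} is unsat, then $\sideP$ does not subsume $\mainP$.

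The main obstacle is conceptual rather than computational: I must be careful about the distinction between a single \emph{global} substitution $\sigma$ satisfying all matches simultaneously and the \emph{per-pair} matchers $\Sigma^+_{i,j}$, which need not be jointly compatible. The pruning criterion~\eqref{eq:prune-s-substitution} only asserts existence of a local matcher for each $\sideL_i$ independently, so it is a genuinely weaker (necessary but not sufficient) condition; the proof must not accidentally claim the converse. Establishing only the one-directional implication, and being explicit that $\sigma$ restricted to $\vars(\sideL_i)$ serves as the local witness for $\Sigma^+_{i,j}$, keeps the argument sound and short.
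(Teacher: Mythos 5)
Your proposal is correct, and it takes a slightly different route than the paper. The paper also argues via the contrapositive, but it goes through the SAT encoding: it invokes the soundness theorem (Theorem~\ref{thm:subsumption-encoding-soundness-and-completeness}) to obtain a satisfying assignment of $\SEnc(\sideP,\mainP)$, reads off $\forall i\,\exists j.\ b^+_{i,j}$ from \textbf{SAT-based partial completeness} together with $b^+_{i,j} \Rightarrow \Sigma^+_{i,j}\subseteq\sigma$ from \textbf{positive compatibility}, and concludes $\forall i\,\exists j.\ \Sigma^+_{i,j}\subseteq\sigma$, hence $\Sigma^+_{i,j}\neq\IncSubst$. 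You instead bypass the propositional layer entirely: from $\sigma(\sideP)\sqsubseteq\mainP$ you use the \textbf{partial completeness} constraint of Theorem~\ref{thm:subsumption-constraints} to get, for each $i$, a $j$ with $\sigma(\sideL_i)=\mainL_j$, and then observe from the definition of the match set that the restriction of $\sigma$ to $\vars(\sideL_i)$ witnesses $\Sigma^+_{i,j}\neq\IncSubst$. Your version is the more elementary of the two: it needs only the forward (easy) direction of the matching characterization --- indeed, only the fact that multiset inclusion puts each $\sigma(\sideL_i)$ in $\mainP$ --- so it does not depend on Assumption~\ref{ass:no-duplicates}, which the paper's proof implicitly imports by routing through the soundness theorem (whose statement requires $\mainP$ duplicate-free). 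The paper's version buys uniformity instead: the same soundness theorem is reused for this and the subsequent pruning theorems (e.g., Theorem~\ref{thm:pruning-sr-substitutions}), which keeps all pruning criteria formally anchored to the encodings they guard. Your closing caution about the local matchers $\Sigma^+_{i,j}$ not being jointly compatible --- so that \eqref{eq:prune-s-substitution} is necessary but not sufficient --- is exactly the right distinction and matches the paper's intent.
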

\begin{proof}
    Theorem~\ref{thm:subsumption-encoding-soundness-and-completeness} implies that, if $\sideP$ subsumes $\mainP$, then $\forall i \exists j.\ b^+_{i,j}$ and  $\forall i,j.\ b^+_{i,j} \Rightarrow \Sigma^+_{i,j} \subseteq \sigma$. Hence, $\forall i \exists j.\ \Sigma^+_{i,j} \subseteq \sigma$, which is equivalent to \eqref{eq:prune-s-substitution} since $\IncSubst \in \sigma \Rightarrow \bot$. Therefore, if $\sideP$ subsumes $\mainP$, then \eqref{eq:prune-s-substitution} is valid.
\end{proof}

A pruning criterion similar to~\eqref{eq:prune-s-substitution} can  be applied to subsumption resolution:
\begin{equation}
    \label{eq:prune-sr-substitution}
    \forall i\exists j.\ \Sigma^+_{i,j} \neq \IncSubst \lor \Sigma^-_{i,j} \neq \IncSubst
\end{equation}
\begin{theorem}[Substitution Sets for Pruning Subsumption Resolution] \label{thm:pruning-sr-substitutions}
     Let $\MS(\sideP, \mainP) = \left\{\left(b^\pm_{i,j}, \Sigma^\pm_{i,j}\right)\right\}$ be the match set of $\sideP, \mainP$. If  \eqref{eq:prune-sr-substitution} is unsat, then $\sideP$ and $\mainP$ are not side and main premises of subsumption resolution.
\end{theorem}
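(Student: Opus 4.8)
The plan is to prove the contrapositive, mirroring the argument of Theorem~\ref{thm:pruning-s-substitutions}: assuming that $\sideP$ and $\mainP$ are the side and main premises of subsumption resolution, I would show that the pruning criterion~\eqref{eq:prune-sr-substitution} must hold, so that its failure rules out subsumption resolution entirely.

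First I would invoke Theorem~\ref{thm:subsumption-resolution-constraints}, which guarantees a substitution $\sigma$ satisfying the four constraints~\eqref{eq:subsumption-resolution-existence}--\eqref{eq:subsumption-resolution-coherence}. The only one I need is \textbf{completeness}~\eqref{eq:subsumption-resolution-completeness}, stating $\forall i \exists j.\ (\sigma(\sideL_i) = \mainL_j \lor \sigma(\sideL_i) = \neg \mainL_j)$. Fixing an arbitrary index $i$, this yields some $j$ for which $\sideL_i$ matches $\mainL_j$ in one of the two polarities.

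Next I would connect this match back to the match set $\MS(\sideP, \mainP)$. By the definition of the match set, $\Sigma^+_{i,j}$ is the matcher with $\Sigma^+_{i,j}(\sideL_i) = \mainL_j$ when such a matcher exists and $\IncSubst$ otherwise, and analogously for $\Sigma^-_{i,j}$. Hence $\sigma(\sideL_i) = \mainL_j$ witnesses a positive matcher and forces $\Sigma^+_{i,j} \neq \IncSubst$, while $\sigma(\sideL_i) = \neg \mainL_j$ forces $\Sigma^-_{i,j} \neq \IncSubst$. Either way $\Sigma^+_{i,j} \neq \IncSubst \lor \Sigma^-_{i,j} \neq \IncSubst$ holds for this $j$; since $i$ was arbitrary, this is exactly~\eqref{eq:prune-sr-substitution}. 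Taking the contrapositive yields the claim.

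The argument is essentially routine; the only point that needs care is the equivalence between a literal match under $\sigma$ and the non-incompatibility of the corresponding match-set entry. The subtlety is that matching is a one-sided operation (only variables of $\sideL_i$ are substituted), so I would remark that a matcher of $\sideL_i$ to $\mainL_j$, if it exists, is unique and agrees with the restriction of $\sigma$ to $\vars(\sideL_i)$; this makes the existence of $\sigma$ with $\sigma(\sideL_i) = \pm\mainL_j$ equivalent to $\Sigma^\pm_{i,j} \neq \IncSubst$, closing the gap. Compared to Theorem~\ref{thm:pruning-s-substitutions}, the sole difference is that \textbf{completeness}~\eqref{eq:subsumption-resolution-completeness} now admits both polarities, which is precisely why the weaker disjunctive criterion~\eqref{eq:prune-sr-substitution} replaces the positive-only~\eqref{eq:prune-s-substitution}.
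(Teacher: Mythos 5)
Your proof is correct, and it takes a slightly but genuinely different route from the paper's. The paper argues through the SAT layer: it cites the soundness of the direct encoding (Theorem~\ref{thm:direct-encoding-soundness-and-completeness}), so that an application of \SR{} over $(\sideP,\mainP)$ yields a model of $\DirEnc(\sideP,\mainP)$, and then extracts criterion~\eqref{eq:prune-sr-substitution} from the \textbf{compatibility} constraints \eqref{eq:positive-compatibility}--\eqref{eq:negative-compatibility} together with \textbf{SAT-based completeness}~\eqref{eq:subsumption-resolution-direct-completeness}, using $\IncSubst \subseteq \sigma \Rightarrow \bot$ exactly as in Theorem~\ref{thm:pruning-s-substitutions}. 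You instead bypass the Boolean encoding entirely and invoke the semantic characterization (Theorem~\ref{thm:subsumption-resolution-constraints}), reading~\eqref{eq:prune-sr-substitution} off the \textbf{completeness} property~\eqref{eq:subsumption-resolution-completeness} of the witnessing substitution $\sigma$. Both arguments share the same skeleton --- a contrapositive driven by completeness --- but yours needs only the easy direction of Theorem~\ref{thm:subsumption-resolution-constraints} and stays at the level of substitutions, which is more elementary; it also makes explicit a step the paper leaves implicit, namely that $\sigma(\sideL_i) = \mainL_j$ (resp.\ $\sigma(\sideL_i) = \neg\mainL_j$) forces $\Sigma^+_{i,j} \neq \IncSubst$ (resp.\ $\Sigma^-_{i,j} \neq \IncSubst$), because a matcher of $\sideL_i$ onto the target literal, when it exists, is unique and agrees with the restriction of $\sigma$ to $\vars(\sideL_i)$ --- the one-sided nature of matching that you rightly flag. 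What the paper's route buys in exchange is uniformity: all pruning theorems in Section~\ref{sec:pruning} are phrased as necessary conditions for satisfiability of the SAT encodings, which is exactly the operational guarantee the implementation relies on, i.e., a pruned instance is one the SAT solver of Section~\ref{sec:sat-solver} would have refuted anyway.
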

\begin{proof}
    Similarly to the proof of Theorem~\ref{thm:pruning-s-substitutions}, the \textbf{compatibility} and \textbf{completeness} constraints of $\DirEnc(\sideP, \mainP)$ imply \eqref{eq:prune-sr-substitution}. Based on Theorem~\ref{thm:direct-encoding-soundness-and-completeness}, if $\sideP,\mainP$ are side and main premises of subsumption resolution, then \eqref{eq:prune-sr-substitution} is valid.
\end{proof}

We remark that the pruning criterion \eqref{eq:prune-sr} is a special case of Theorem~\ref{thm:pruning-sr-substitutions}. Therefore, if \eqref{eq:prune-sr} is unsat, then \eqref{eq:prune-s-substitution} is also unsat and no subsumption resolution is possible. 
Furthermore, if there are no negative polarity substitutions, then the \textbf{existence} constraint of  $\DirEnc(\sideP, \mainP)$ does not hold. As such, a further pruning criterion for (unsat) subsumption resolution instances is: 
\begin{equation}
    \label{eq:prune-sr-no-negative-substitution}
    \exists i, j.\ \Sigma^-_{i,j} \neq \IncSubst
\end{equation}

\begin{theorem}[Polarities for Pruning Subsumption Resolution]
    Let  $\MS(\sideP, \mainP) = \left\{\left(b^\pm_{i,j}, \Sigma^\pm_{i,j}\right)\right\}$ be the match set of $\sideP,\mainP$. If~\eqref{eq:prune-sr-no-negative-substitution} is unsat, then $\sideP$ and $\mainP$ are not premises of subsumption resolution.
\end{theorem}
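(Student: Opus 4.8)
The plan is to argue by contraposition, paralleling the structure of Theorem~\ref{thm:pruning-sr-substitutions} but pivoting on the \textbf{existence} constraint rather than \textbf{completeness}. Concretely, I would assume that $\sideP$ and $\mainP$ \emph{are} the side and main premises of subsumption resolution and derive that \eqref{eq:prune-sr-no-negative-substitution} must then be satisfiable, i.e.\ that at least one negative-polarity substitution exists.

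First I would unpack the hypothesis that \eqref{eq:prune-sr-no-negative-substitution} is unsat. By construction of the match set $\MS(\sideP,\mainP)$, the entry $\Sigma^-_{i,j}$ is the substitution satisfying $\Sigma^-_{i,j}(\sideL_i) = \neg\mainL_j$ when one exists, and equals the incompatible symbol $\IncSubst$ otherwise. Hence \eqref{eq:prune-sr-no-negative-substitution} being unsat is exactly the statement $\forall i\,j.\ \Sigma^-_{i,j} = \IncSubst$, which means there is \emph{no} substitution $\sigma$ and no pair $(i,j)$ with $\sigma(\sideL_i) = \neg\mainL_j$.

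The central step is to invoke the \textbf{existence} constraint~\eqref{eq:subsumption-resolution-existence} of Theorem~\ref{thm:subsumption-resolution-constraints}: if $\sideP$ and $\mainP$ are premises of subsumption resolution, then there is a substitution $\sigma$ with $\exists i\,j.\ \sigma(\sideL_i) = \neg\mainL_j$. Any witnessing pair $(i,j)$ would yield a negative-polarity match, forcing $\Sigma^-_{i,j} \neq \IncSubst$ and thereby satisfying \eqref{eq:prune-sr-no-negative-substitution}. This directly contradicts the assumption that \eqref{eq:prune-sr-no-negative-substitution} is unsat. Equivalently, one could argue through the direct encoding: by Theorem~\ref{thm:direct-encoding-soundness-and-completeness} an application of subsumption resolution makes $\DirEnc(\sideP,\mainP)$ satisfiable, so in particular \textbf{SAT-based existence}~\eqref{eq:subsumption-resolution-direct-existence} forces some $b^-_{i,j}$ to be true; by \textbf{negative compatibility}~\eqref{eq:negative-compatibility} this requires $\Sigma^-_{i,j} \subseteq \sigma$, which is impossible when $\Sigma^-_{i,j} = \IncSubst$.

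I do not anticipate a genuine obstacle here; the content is a short definitional unfolding, and the only care required is to state cleanly that $\Sigma^-_{i,j} = \IncSubst$ encodes the non-existence of a negative-polarity match, so that the universally-quantified failure of such matches is literally the negation of the existentially-quantified requirement imposed by~\eqref{eq:subsumption-resolution-existence}. The proof is therefore the natural ``dual'' of Theorem~\ref{thm:pruning-sr-substitutions}: there the missing positive/negative match for some $\sideL_i$ broke \textbf{completeness}, whereas here the total absence of negative matches breaks \textbf{existence}.
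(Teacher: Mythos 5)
Your proposal is correct and is essentially the paper's own argument: the paper likewise observes that if subsumption resolution applies then the direct encoding is satisfiable, so the \textbf{existence} constraint forces some $b^-_{i,j}$ to be true, and \textbf{negative compatibility} then yields $\Sigma^-_{i,j} \neq \IncSubst$, contradicting the unsatisfiability of~\eqref{eq:prune-sr-no-negative-substitution} --- exactly your second (``equivalently'') route. Your primary variant, invoking the constraint-level \textbf{existence} property~\eqref{eq:subsumption-resolution-existence} of Theorem~\ref{thm:subsumption-resolution-constraints} directly, is the same idea with one fewer layer of indirection, and if anything is cleaner than the paper's text, which cites the subsumption soundness theorem where the subsumption-resolution one (Theorem~\ref{thm:direct-encoding-soundness-and-completeness}) is clearly intended.
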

\begin{proof}
    Based on Theorem~\ref{thm:subsumption-encoding-soundness-and-completeness}, if $\DirEnc(\sideP, \mainP)$ is satisfiable, then the \textbf{existence} property $\exists i,j.\ b^-_{i,j}$ is  satisfiable and the  \textbf{compatibility} constraint is satisfied by the same assignment.
    Therefore, if $\DirEnc(\sideP, \mainP)$ is satisfiable, then $\exists i,j.\ \Sigma^-_{i,j} \neq \IncSubst$.
\end{proof}

Finally, if there exist two literals in $\sideP$ such that they do not have positive matches to literals in $\mainP$ and the respective predicates of the literals are different, then subsumption resolution is not possible. This yields our final pruning criterion: 

\begin{equation}\label{eq:different-predicates-no-positive}
    \forall i, i'.\ (i \neq i') \Rightarrow 
             (\mathcal{P}(\sideL_i) = \mathcal{P}(\sideL_{i'})
        \lor \exists j\ \Sigma^+_{i,j} \neq \IncSubst
        \lor \exists j\ \Sigma^+_{i',j} \neq \IncSubst)
\end{equation}

\begin{theorem}[Predicate Matches for Pruning Subsumption Resolution]
    Let  $\MS(\sideP, \mainP) = \left\{\left(b^\pm_{i,j}, \Sigma^\pm_{i,j}\right)\right\}$ be the match set of $\sideP,\mainP$.  If \eqref{eq:different-predicates-no-positive} is unsat, then $\sideP$ and $\mainP$ are not side and main premises of subsumption resolution.
\end{theorem}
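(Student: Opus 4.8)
The plan is to argue by contradiction: I would assume simultaneously that the pruning criterion~\eqref{eq:different-predicates-no-positive} is unsat and that $\sideP$ and $\mainP$ are the side and main premises of subsumption resolution. Negating~\eqref{eq:different-predicates-no-positive} yields two distinct indices $i \neq i'$ with $\mathcal{P}(\sideL_i) \neq \mathcal{P}(\sideL_{i'})$ such that neither literal admits any positive match, that is, $\Sigma^+_{i,j} = \IncSubst$ and $\Sigma^+_{i',j} = \IncSubst$ for all $j$. My first step is to turn this syntactic match-set condition into a semantic one: by the defining property $\Sigma^+_{i,j}(\sideL_i) = \mainL_j$ of the match set, $\Sigma^+_{i,j} = \IncSubst$ means that \emph{no} substitution maps $\sideL_i$ to $\mainL_j$, so for the witnessing substitution $\sigma$ we have $\sigma(\sideL_i) \neq \mainL_j$ for every $j$, and likewise $\sigma(\sideL_{i'}) \neq \mainL_j$ for every $j$.

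Next I would invoke Theorem~\ref{thm:subsumption-resolution-constraints}, which characterises subsumption resolution through the four constraints \eqref{eq:subsumption-resolution-existence}--\eqref{eq:subsumption-resolution-coherence}. From \textbf{completeness}~\eqref{eq:subsumption-resolution-completeness}, each of $\sideL_i$ and $\sideL_{i'}$ maps either positively or negatively to some literal of $\mainP$. Since the previous step rules out positive matches for both literals, each must map negatively: there are indices $j$ and $j''$ with $\sigma(\sideL_i) = \neg \mainL_j$ and $\sigma(\sideL_{i'}) = \neg \mainL_{j''}$. Then \textbf{uniqueness}~\eqref{eq:subsumption-resolution-uniqueness}, which asserts the existence of a single resolution-literal index $j'$ to which every negative match is forced, gives $j = j'' = j'$, and hence $\sigma(\sideL_i) = \neg \mainL_{j'} = \sigma(\sideL_{i'})$.

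Finally I would close the argument using the invariance of predicate symbols: applying a substitution never changes the predicate of a literal, and complementation changes only polarity. Thus $\mathcal{P}(\sideL_i) = \mathcal{P}(\sigma(\sideL_i)) = \mathcal{P}(\mainL_{j'})$ and, in the same way, $\mathcal{P}(\sideL_{i'}) = \mathcal{P}(\mainL_{j'})$, so $\mathcal{P}(\sideL_i) = \mathcal{P}(\sideL_{i'})$, contradicting $\mathcal{P}(\sideL_i) \neq \mathcal{P}(\sideL_{i'})$. The assumption is therefore untenable, and whenever~\eqref{eq:different-predicates-no-positive} is unsat, subsumption resolution cannot apply to $(\sideP,\mainP)$.

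The step I expect to be the most delicate is the bridge in the first paragraph, namely reading $\Sigma^+_{i,j} = \IncSubst$ as ``no positive matcher whatsoever exists'' rather than merely ``the recorded matcher happens to be incompatible with the current $\sigma$''; I would make this definitional reading explicit, since it is what licenses $\sigma(\sideL_i) \neq \mainL_j$ for \emph{every} candidate substitution. The remainder is a routine chaining of \textbf{completeness} and \textbf{uniqueness} from Theorem~\ref{thm:subsumption-resolution-constraints} together with the predicate-invariance of substitutions.
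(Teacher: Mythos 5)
Your proof is correct and follows essentially the same route as the paper's: a proof by contradiction combining the \textbf{completeness} and \textbf{uniqueness} constraints of Theorem~\ref{thm:subsumption-resolution-constraints} with the absence of positive matches and the invariance of predicate symbols under substitution. The only (cosmetic) difference is where the contradiction lands---the paper concludes that one of the two literals violates \textbf{completeness}, whereas you let both literals match the resolution literal and contradict $\mathcal{P}(\sideL_i) \neq \mathcal{P}(\sideL_{i'})$ directly---and your explicit handling of the definitional reading of $\Sigma^+_{i,j} = \IncSubst$ is, if anything, more careful than the paper's.
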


\begin{proof}
    By contradiction, assume that subsumption resolution could be applied to $(\sideP,\mainP)$. Then, there exists a unique $\mainL'$ such that $\sigma(\sideP') = \{\mainL'\}$. However, if~\eqref{eq:different-predicates-no-positive} is unsat, there exist two different literals in $\sideP$ that can only be mapped negatively to $\mainL'$ (or not at all). These literals have a different predicate, therefore they cannot be both matched to the same literals ($\forall \sigma, l, l'.\ \mathcal{P} (l) \neq \mathcal{P} (l') \Rightarrow \sigma(l) \neq \sigma(l')$). If one of these literals cannot be matched to the resolution literal of \SR, and has no positive match, then it cannot be matched to any literal in $\mainP$; hence and subsumption resolution cannot be applied. 
\end{proof}

\begin{remark}
    It is easy to see that Algorithm~\ref{alg:pruning} is a very cheap procedure.
    During our experiments (Section~\ref{sec:experiments}, we observed that more than 95\,\% of instances of subsumption are filtered out by the pruning criterion \eqref{eq:prune-s} alone,
    and more than 50\,\% are also pruned by~\eqref{eq:prune-sr}.
    When it comes to subsumption resolution, in our experiments 90\,\% of subsumption resolution instances are pruned by~\eqref{eq:prune-sr}.
    The more restrictive nature of~\eqref{eq:prune-sr-substitution} and~\eqref{eq:prune-sr-no-negative-substitution} prunes an additional 5\,\% of subsumption resolution instances.
    As a result, our experiments show that pruning is indeed an important and cheap preprocessing step.
    Thanks to pruning, in our experiments only 5\,\% of subsumption (resolution) instances need to use more expensive SAT-based computation steps, using our SAT solver from Section~\ref{sec:sat-solver}.
\end{remark}

\section{SAT-Based Subsumption Variants in Saturation}\label{sec:loop-optimization}

In this section, we discuss the direct integration of the SAT solving engine of Section~\ref{sec:implementation} within the saturation loop of first-order theorem proving.
Such an integration greatly improves redundancy checking in theorem proving, without making significant changes to the underlining saturation algorithms of the prover.

\begin{algorithm}[t]
  \centering
  \caption{Forward simplification with SAT-based subsumption resolution}
  \label{alg:forward-new}
  \begin{algorithmic}
      \Procedure{ForwardSimplify}{$\mainP,F$}
        \State $\mainP^* \gets$ \NoSubsumptionResolution
        \For{$\sideP \in F \setminus \{\mainP\}$}
            \Comment{\parbox{0.35\textwidth}{Get candidates from generalisation index.}}
            \If{\Call{subsumption}{$\sideP, \mainP$} is \Subsumption}
            \Comment{\parbox{0.35\textwidth}{using Algorithm~\ref{alg:sat-subsumption}}}
                \State $F \gets F \setminus \{\mainP\}$
                \State \Return $\top$
                \Comment{\parbox{0.35\textwidth}{$\mainP$ is subsumed and removed}}
            \EndIf
            \If{$\mainP^*$ = \NoSubsumptionResolution}
              \State $M^* \gets \Call{subsumptionResolution}{\sideP, \mainP}$
              \Comment{\parbox{0.35\textwidth}{using Algorithm~\ref{alg:sat-subsumption-resolution-new}}}
            \EndIf
        \EndFor
        \If{$\mainP^* \neq \NoSubsumptionResolution$}
          \State $F \gets F \setminus \{\mainP\}\ \cup\ \{\mainP^*\}$ 
          \Comment{\parbox{0.35\textwidth}{$\mainP^*$ is the conclusion of subsumption resolution between $\sideP$ and $\mainP$}}
          \State \Return $\top$
        \EndIf
        \State \Return $\bot$
      \EndProcedure
  \end{algorithmic}
\end{algorithm}

To design a saturation algorithm, one important aspect is to understand how to organise redundancy elimination during proof search.
One common design principle in this respect comes with so-called \emph{given clause algorithms}~\cite{otter},
where inference selection is implemented using clause selection.
At each iteration of the algorithm, a clause from the proof search is selected and inferences are performed between this clause and previously selected clauses.
When a new clause is generated, this clause should only be kept if it is not redundant or it cannot be simplified by another existing clause;
we refer to such redundancy checks over a new clause as \emph{forward} redundancy, implementing \emph{forward simplification}.
On the other hand, a newly generated clause could make existing clauses in the search space redundant;
we call such redundancy checks with a new clause as \emph{backward} redundancy, implementing \emph{backward simplification}.

Using the SAT solver of Section~\ref{sec:implementation} for detecting subsumption (resolution) in saturation needs therefore to (i) address both forward and backward variants of subsumption and subsumption resolution, and (ii) organize proof search with these subsumption variants solved via SAT. In the rest of this section, we mainly focus on forward simplification via subsumption and subsumption resolution, and briefly discuss differences with respect to backward simplification.

\paragraph{Forward simplification.} Intuitively, as subsumption is a stronger inference than subsumption resolution, subsumption  should be performed first. As such, a standard {forward simplification} loop for subsumption (resolution) would be: 

\begin{enumerate}
    \item From a selected clause $\mainP$, search some subsumption candidate clauses $\{\sideP_k \mid k=1,\dots\}$ using a generalisation term index~\cite{handbook-indexing}; 
    \item For each clause in $\{\sideP_k \mid k=1,\dots\}$, check if $\sideP_k$ subsumes $\mainP$. If this is the case, then stop and remove $\mainP$ from the clause set.
    \item For each clause in $\{\sideP_k \mid k=1,\dots\}$, check if $\sideP_k$ can delete a literal from $\mainP$ using subsumption resolution. If it is the case, then replace $\mainP$ by the conclusion of subsumption resolution \SR{} and stop.
\end{enumerate}

\begin{algorithm}[t]
  \caption{SAT-based subsumption in saturation}
  \label{alg:sat-subsumption}
  \begin{algorithmic}
      \State $f_{\texttt{S}} \gets \bot$ 
      \Comment{\parbox{0.45\textwidth}{If $f_{\texttt{S}}$ gets true then subsumption is guaranteed to fail}}
      \State $f_{\SR{}} \gets \bot$
      \Comment{\parbox{0.45\textwidth}{If $f_{\SR{}}$ gets true then subsumption resolution is guaranteed to fail}}
      \Procedure{Subsumption}{$\sideP,\mainP$}
          \State $f_{\SR{}} \gets \Call{pruneSubsumptionResolution}{\sideP,\mainP}$
          \State $f_{\texttt{S}} \gets f_{\SR{}} \lor \Call{pruneSubsumption}{\sideP,\mainP}$
          \State \Comment{\parbox{0.45\textwidth}{Corollary~\ref{cor:pruning-sr-weaker} ensures that $f_{\texttt{SR}} \Rightarrow f_{\texttt{S}}$.}}
          \If {$f_{\texttt{SR}}$}
            \State \Return{\NoSubsumption}
            \State \Comment {\parbox{0.45\textwidth}{If only subsumption fails, we still need to fill the match set.}}
          \EndIf
          \State $\MS \gets \MS(\sideP,\mainP)$
          \State $f_{\texttt{S}} \gets f_{\texttt{S}} \lor \neg \eqref{eq:prune-s-substitution}$
          \Comment{\parbox{0.45\textwidth}{Computed when filling $\MS$}}
          \State $f_{\texttt{SR}} \gets f_{\texttt{SR}}
                                        \lor \neg \eqref{eq:prune-sr-substitution}
                                        \lor \neg \eqref{eq:prune-sr-no-negative-substitution}
                                        \lor \neg \eqref{eq:different-predicates-no-positive}$
          \Comment{\parbox{0.45\textwidth}{Also computed when filling $\MS$}}
          \If {$f_{\texttt{S}}$}
              \State \Return{\NoSubsumption}
          \EndIf
          \State $\SEnc \gets \Call{encodeConstraints}{\MS}$
          \If {$\SEnc \models \bot$} 
          \Comment{\parbox{0.45\textwidth}{SAT solver returns unsatisfiable}}
            \State \Return{\NoSubsumption}
          \Else
            \State \Return{\Subsumption}
          \EndIf
      \EndProcedure
  \end{algorithmic}
\end{algorithm}

In this approach, finding the substitutions of subsumption (resolution) comes with a significant computation burden.
Further, as subsumption checks do not succeed most of the time, the match sets $\Pi(\sideP_k,\mainP)$ must be cached or recomputed. Therefore, when integrating our SAT-based solving of subsumption (resolution) in saturation using Algorithm~\ref{alg:forward-new}, we use  pruning-based preprocessing and  build match sets before checking subsumption and subsumption resolution. Our Algorithm~\ref{alg:forward-new} yields thus a new, SAT-based  forward simplification loop for subsumption (resolution) in saturation.
Algorithm~\ref{alg:forward-new} uses  Algorithm~\ref{alg:sat-subsumption} to possibly prune both subsumption and subsumption resolution and then set up a complete match set. Even though subsumption alone does not require the negative polarity substitutions, these substitutions are  computed for subsumption resolution. Then, Algorithm~\ref{alg:sat-subsumption-resolution-new} benefits from the work done by subsumption, since it only requires to create the propositional clause set.

\begin{remark}
    In Algorithm~\ref{alg:forward-new}, when a subsumption resolution check was successful, no other is performed, but the algorithm still searches for a subsumption. In this case, only a partial match set is necessary and subsumption will not fill negative polarity matches.

    The index used to provide candidate clauses returns clauses on a literal by literal manner. That is, for each literal $\mainL \in \mainP$, the index returns clauses that have at least one literal that is a generalisation of $\mainL$. However, for subsumption resolution, we also get clauses with a generalisation of complemented literals $\neg \mainL_j$. In this case, we do not need to check for subsumption, and only subsumption resolution is performed. Yet, subsumption resolution still sets up the match set.
\end{remark}

\begin{algorithm}[t]
  \caption{SAT-based subsumption resolution in saturation  \\
  {\color{white} for bit empty space only to get}-- with subsumption already set up via Algorithm~\ref{alg:sat-subsumption}}
  \label{alg:sat-subsumption-resolution-new}
  \begin{algorithmic}
    \Procedure{SubsumptionResolution}{$\sideP,\mainP$}
      \State \Comment{\parbox{0.45\textwidth}{upon Algorithm~\ref{alg:sat-subsumption} failing to subsume}}
      \State \Comment{\parbox{0.45\textwidth}{the match set $\MS$ is already set up}}
      \If {$f_{\SR{}}$}
        \State \Return{\NoSubsumptionResolution}
      \EndIf
      \State $enc \gets \Call{chooseEncoding}{\MS, \sideP, \mainP}$ 
      \Comment{\parbox{0.45\textwidth}{choose the best encoding (see Sect~\ref{sec:heuristics})}}
      \State $\SREnc \gets \Call{encodeConstraints}{enc, \MS}$
      \If {$\exists \tau\ldotp\ \tau \models \SREnc$}
      \Comment {\parbox{0.45\textwidth}{$\tau$ is a model of $\SREnc$ found by the solver}}
      \State \Return{$\Call{buildConclusion}{\tau, \mainP}$}\Comment{\parbox{0.45\textwidth}{conclusion of subsumption resolution}}
      \EndIf
      \State \Return{\NoSubsumptionResolution}
    \EndProcedure
  \end{algorithmic}
\end{algorithm}

\paragraph{Backward simplification.} Backward simplifications  use newly generated clauses $\sideP$ to simplify the current clause set $F$.  Given a newly generated clause $\sideP$, backward subsumption (resolution) thus checks whether $\sideP$ subsumes some clauses $\mainP \in F$ (or can remove a literal from $\mainP$). In this case, performing subsumption resolution right after subsumption is almost free. Indeed, since {backward simplifications} do not stop after simplifying one clause, the only cost of performing subsumption resolution right after subsumption is to setup the full match set, rather than simply setting up the positive polarity matches.

\smallskip 

\paragraph{Extensions of subsumption variants in saturation}\label{sec:SAT-saturation}\label{sec:symmetry-of-equality}
Our SAT-based approach  for solving subsumption (resolution) in saturation is very flexible. Indeed, the SAT solver can handle different types of matches to the same literal pair, yielding further extensions of the standard subsumption and subsumption resolution framework.

In the case of \emph{symmetric predicates},
such as equality, two different substitutions are possible. Consider the literals $\sideL_i := x = y$ and $\mainL_j := c = f(c)$. To match these two literals, one can either use the substitution $\{x\mapsto c, y\mapsto f(c)\}$ or $\{x\mapsto f(c), y\mapsto c\}$.
In this case, both substitutions would be added to the match set $\Pi(\sideL_i,\mainL_j)$ of $\sideL_i,\mainL_j$. That is, the matches $(\{x\mapsto c, y\mapsto f(c)\}, +, b^+_{i,j})$ and $(\{x\mapsto f(c), y\mapsto c\}, +, b'^+_{i,j})$ are added to $\Pi(\sideL_i,\mainL_j)$. In our implementation of the match set, it provides a list of matches $(b^\pm_{i,j}, \Sigma^\pm_{i,j})$ with either $i$ or $j$ fixed. When enumerating over this list to build the clauses, we ignore the second index. If several variables have the same index $(i,j)$, the system will not be broken. Therefore, even when adding more than one match to the same literal pair, the SAT encoding remains the same.
In addition, both substitutions are distinct, since otherwise one of the literals of $\sideL_i$ or $\mainL_j$ is a tautology and the respective clause would be removed. Handling of symmetric predicates brings great practical improvements, see Remark~\ref{rem:sparsity}.

In the case of subsumption resolution, one may use the \emph{most general unifier} on the resolution literal $\mainL'$,
if the variable set of $\mainL'$ is disjoint from the variables in $\mainP \setminus \{\mainL'\}$.
However, within the  splitting approach of the AVATAR framework~\cite{Voronkov:2014:Avatar} of first-order theorem proving, the prover would split upon the main premise $\mainP$;
hence, using most general unifiers on the literal $\mainL'$ of $\mainP$ would not be triggered.

\section{Solving Heuristics for Subsumption Variants\label{sec:heuristics}}
Section~\ref{sec:loop-optimization} introduced efficient algorithms for integrating SAT-based subsumption reasoning in saturation.
In this section, we  further improve our methods from
Section~\ref{sec:loop-optimization} by identifying and fine-tuning the key parameters of our SAT-based subsumption algorithm in saturation.
Doing so, we  (i) impose a solving timeout on particularly difficult subsumption and subsumption resolution instances (Section~\ref{sec:cutoff}),
and (ii) devise a framework for choosing the best SAT encodings for subsumption resolution (Section~\ref{sec:choosing-encoding}).

\subsection{Cutting off the SAT Search} \label{sec:cutoff}
We present  how to fine-tune a timeout strategy for our  SAT solver from Section~\ref{sec:implementation},
in order to prevent getting it stuck on unnecessary/difficult subsumption instances, while solving still as many positive instances as possible.

\subsubsection{Measuring SAT Solver Progress}
In general, the solver behaviour should be as deterministic as possible to ensure results are consistent and reproducible.
Elapsed wall-clock time depends on many factors such as the type of machine and current load,
and elapsed CPU time and number of CPU instructions easily change when refactoring code.
As such, these measures are unsuitable when a deterministic solver behavior, and respective progress measure, is expected.

For evaluating our SAT solving approach in saturation, we therefore follow the \textsc{Kissat} methodology~\cite{kissat}: we count the number of elapsed \emph{ticks},
which is a rough approximation of the number of memory cache lines accessed during
unit propagation and conflict analysis.

\begin{figure}[b]
    \begin{subfigure}{\textwidth}
        \includegraphics[width=\textwidth]{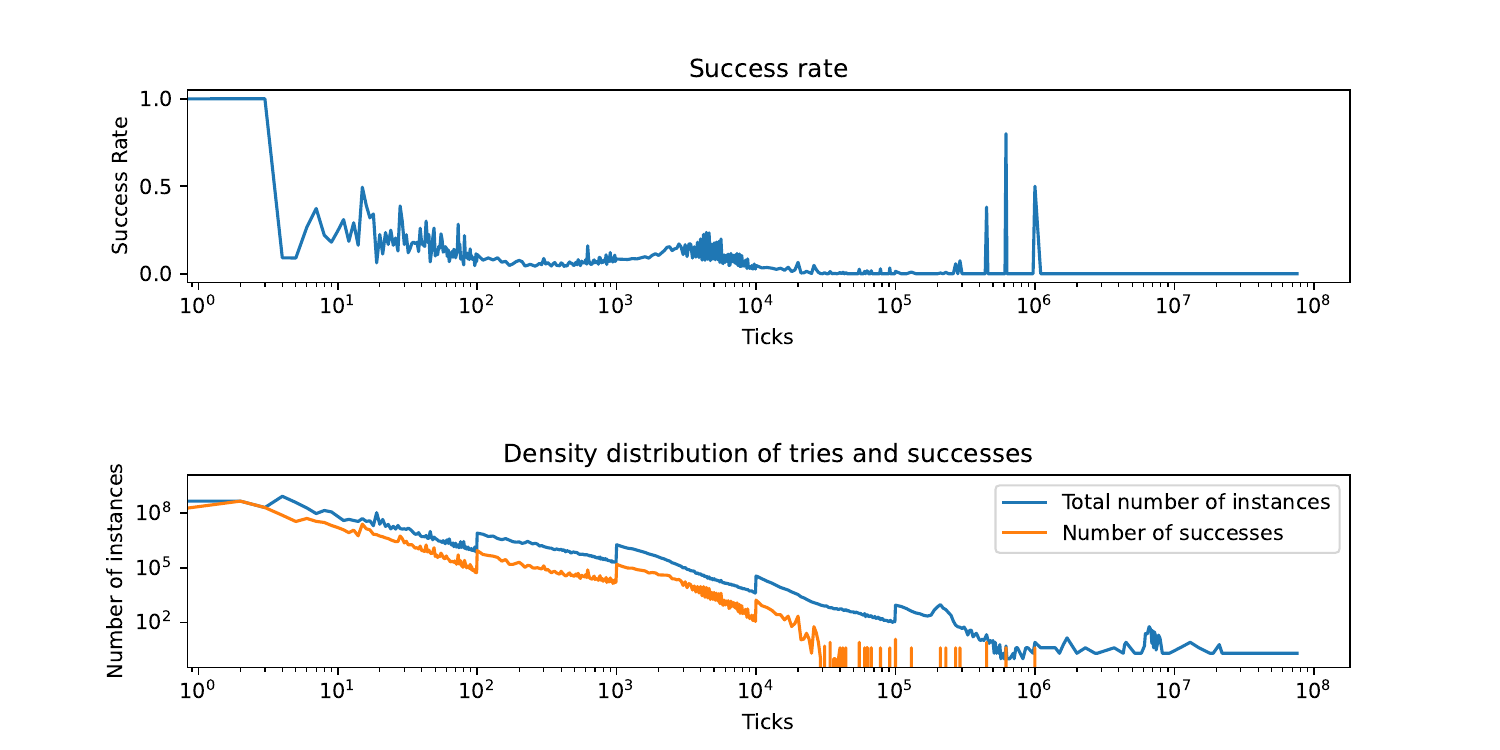}
        \caption{Solving subsumption instances.}
        \label{fig:subsumption-success-rate}
    \end{subfigure}
    \begin{subfigure}{\textwidth}
        \includegraphics[width=\textwidth]{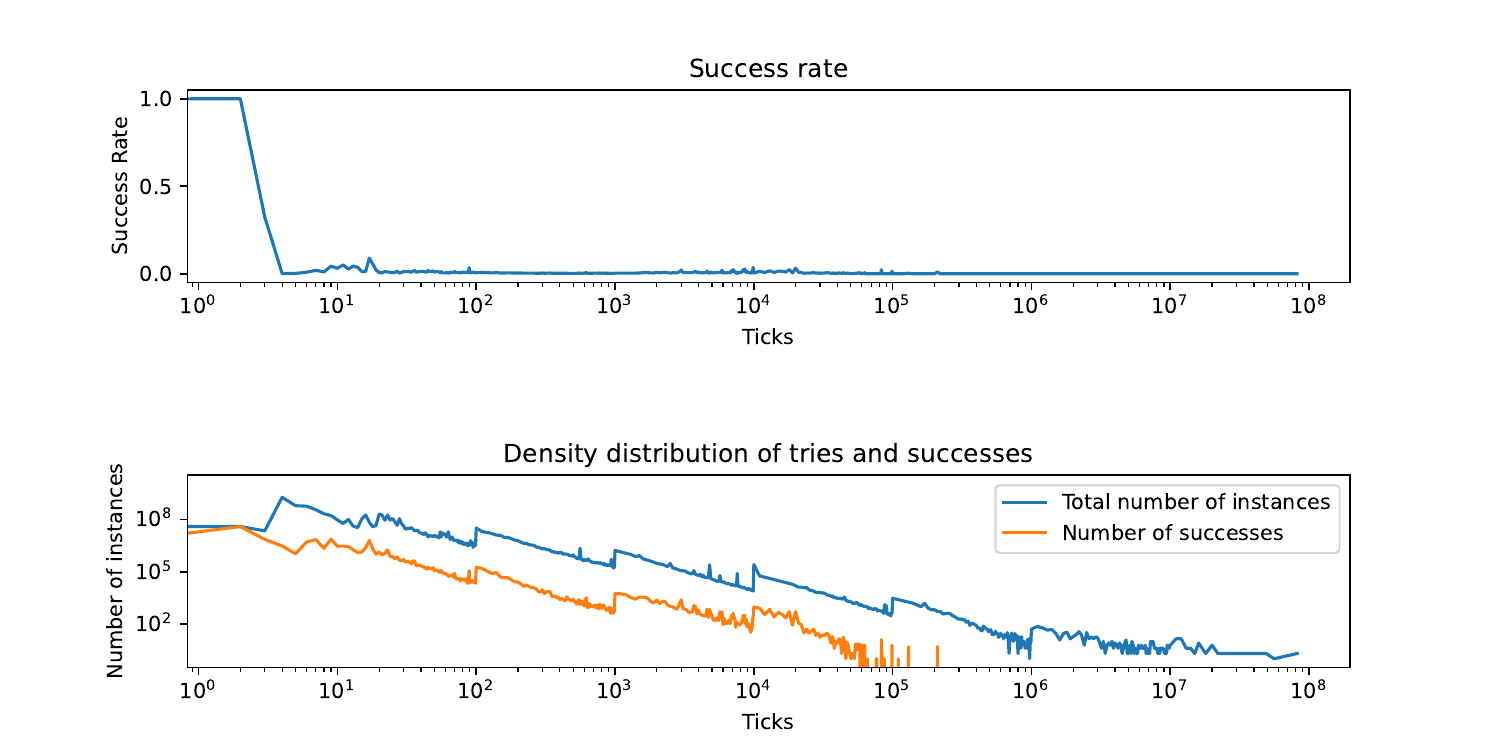}
        \caption{Solving subsumption resolution instances using a direct SAT encoding}
        \label{fig:sr-direct-success-rate}
    \end{subfigure}
    \caption{Success rates of the SAT solver depending on the number of ticks (ticks are displayed on the horizontal axes). The problems taking the longest time are less likely to succeed.\label{fig:cutoff}}
\end{figure}

\subsubsection{Empirical Observations}
 In our experiments (see Section~\ref{sec:experiments}), we evaluated our approach using the TPTP problem library~\cite{Sutcliffe:2017:TPTP}. Here, we logged the number of ticks the SAT solver performs on each problem and whether its search was successful. 
 Figure~\ref{fig:subsumption-success-rate} shows how the success rate of subsumption drops close to zero when our SAT solver runs longer. This effect is even more noticeable with subsumption resolution, as can be seen on Figure~\ref{fig:sr-direct-success-rate}.
We note that the performance jumps of Figures~\ref{fig:subsumption-success-rate}--\ref{fig:sr-direct-success-rate}
when crossing $10^k$ ticks are due to the non-linear scale used when aggregating data. We keep two significant digits to reduce the size of the files. Therefore, when jumping from $9.9\cdot 10^{k-1}$ to $1.0\cdot 10^{k}$, the size of the interval is multiplied by 10, hence a greater number of instances are gathered, and the line is discontinuous.

For improved solving progress, we aim to  estimate a good trade-off between losing solutions by stopping the search early and the number of ticks saved. 
To do so, (i)  we compute the number of ticks that the SAT solver has performed on instances that would be timed out;  (ii)  subtract the number of ticks ran before the timeout; and (iii) divide the result by the total number of ticks.
Figure~\ref{fig:lost-saved-tradoff} shows that, when using a cutoff of 150, less than 1\,\% of the successful instances are lost, while around 50\,\% of ticks are saved for subsumption and 35\,\% for subsumption resolution. Interestingly, when using a  cutoff of 5000, we loose less than 0.01\,\% of problems while still saving 10\,\% of ticks. 

\begin{figure}[t]
    \begin{subfigure}{0.30\textwidth}
        \centering
        \includegraphics[width=\textwidth]{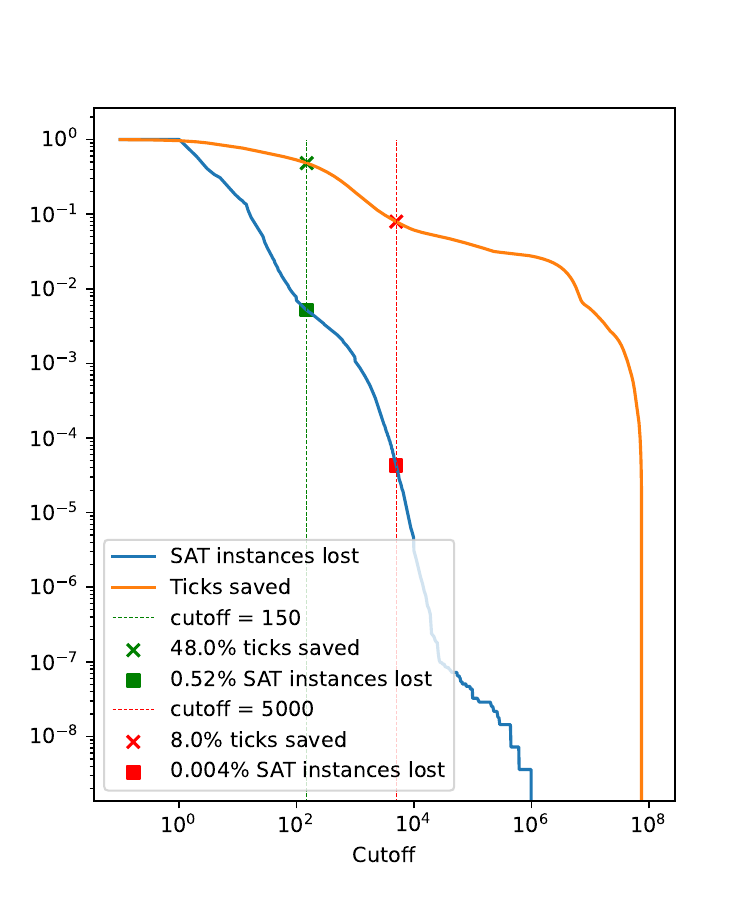}
        \caption{Subsumption}
    \end{subfigure}
    \begin{subfigure}{0.30\textwidth}
        \centering
        \includegraphics[width=\textwidth]{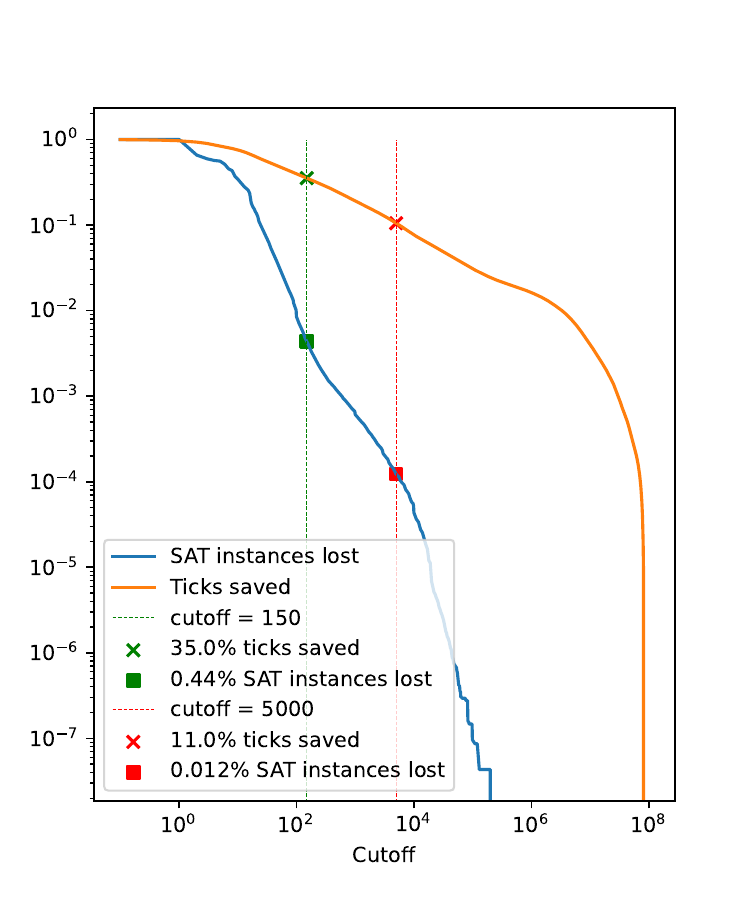}
        \caption{\SR{} direct encoding}
    \end{subfigure}
    \begin{subfigure}{0.30\textwidth}
        \centering
        \includegraphics[width=\textwidth]{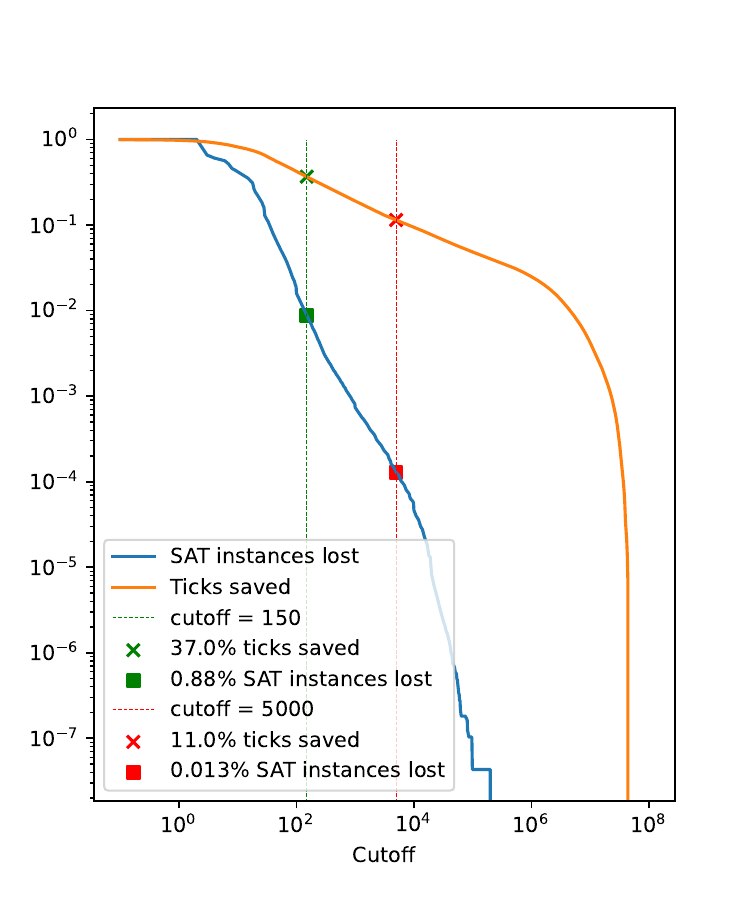}
        \caption{\SR{} indirect encoding}
    \end{subfigure}
    \caption{Trade-off between positive instances lost when cutting off, and the number of ticks saved.}
    \label{fig:lost-saved-tradoff}
\end{figure}

\subsection{Choosing SAT Encodings for Subsumption Resolution}\label{sec:choosing-encoding}
Section~\ref{sec:satconstraints} introduced two different encodings for subsumption resolution over $(\sideP,\mainP)$.
The direct encoding $\DirEnc(\sideP,\mainP)$ has $O(|\sideP|^2\cdot|\mainP|^2)$ clauses and  $O(|\sideP|\cdot|\mainP|)$ variables,  while the indirect encoding $\IndEnc(\sideP,\mainP)$ contains $O(|\sideP|\cdot|\mainP|)$ clauses with $O(|\sideP|\cdot(|\mainP|+1))$ variables. Intuitively, the direct encoding $\DirEnc(\sideP,\mainP)$ should to be more light weight and faster for smaller instances of subsumption resolution, whereas the  indirect encoding $\IndEnc(\sideP,\mainP)$ should scale better on harder instances. In this section, we present a procedure to choose which encoding to use for a given instance of subsumption resolution.

\subsubsection{Problem Setup}
\label{sec:formal-description}
We focus on the problem of \textbf{choosing SAT encodings of subsumption resolution}.
We approximate this problem via a random distribution $\mathcal{D}(y | x)$, where
\begin{compactitem}[$\bullet$\leftmargin=0em]
\item  input $x$, drawn from another distribution $\mathcal{X}$, is a vector of features $x_1, \dots, x_n$;
\item output $y$ is a pair of values $(y_0, y_1)$, where $y_0$ is the encoding and SAT solving time of the direct encoding $\DirEnc(\sideP,\mainP)$ and $y_1$ is the encoding and SAT solving time of the indirect encoding $\IndEnc(\sideP,\mainP)$.
\end{compactitem}

\paragraph {Objective function.} Let a function family $\mathcal{F}$ be a set of functions $f: \mathbb{R}^n \rightarrow \{0, 1\}$. We define our objective function over $\mathcal{D}(y | x)$ and $\mathcal{X}$ as follows:
\begin{equation}
    \label{eq:objective-function}
    \arg \min_{f \in \mathcal{F}} \E\limits_{\substack{x \sim \mathcal{X}\\ (y_0, y_1)\sim \mathcal{D}(\cdot | x)}} \left[ y_{f(x)}\right]
\end{equation}
Intuitively, our objective function~\eqref{eq:objective-function} computes a classifier $f$ whose choice, given a set of features, minimises the expected run time of the respective SAT encoding and solving of subsumption resolution.

\paragraph {Features.} For any classification problem, identifying relevant features is important.
We chose the following features for our  classifier $f$ computed by~\eqref{eq:objective-function}:
\begin{enumerate}
  \item the number $n$ of literals of the main premise $\mainP$;
  \item the number $k$ of literals of the side premise $\sideP$;
  \item the ``sparsity'' of the match set $\MS(\sideP,\mainP)$, computed as : $\frac{|\MS|}{k \cdot n}$, where $|\MS|$ denotes the size of the match set $\MS(\sideP,\mainP)$.
  \end{enumerate}
  
The relevance of the respective lengths $k, n$ of the premises $\sideP, \mainP$ is fairly self-explanatory, as the numbers of clauses of both SAT encodings grow differently with the number of literals of $\sideP,\mainP$. The sparsity of the match set  $\MS(\sideP,\mainP)$ is a measure of how many matches are found between literals of the main and side premises $\mainP, \sideP$. Sparsity of the match set is  a good indicator of the difficulty of the subsumption resolution problem. Indeed, if the match set  $\MS(\sideP,\mainP)$ is very sparse, then the subsumption resolution problem is easy: there are few matches to consider and the purely propositional clauses are already very constrained. On the other hand, if the match set $\MS(\sideP,\mainP)$ is dense, then the subsumption resolution problem is hard.

\begin{remark}\label{rem:sparsity}
  The sparsity of the match set may be greater than $1$. Indeed, in practice, we perform matching modulo the symmetry of equality (see Section~\ref{sec:symmetry-of-equality}). In such cases, one could use more than one match for a given literal pair.
\end{remark}

\subsubsection{Model Architecture}
\label{sec:choosing-model-architecture}
The problem described in Section~\ref{sec:formal-description} is formalized as a classification problem in~\eqref{eq:objective-function}.
Indeed, given a set of features $x$, we classify our problems sample into one of two classes: using the direct encoding $\DirEnc(\sideP,\mainP)$ (class value $0$) or using the indirect encoding $\IndEnc(\sideP,\mainP)$ (class value $1$). For solving the problem of
Section~\ref{sec:formal-description}, we select the SAT encoding of subsumption resolution over $(\sideP, \mainP)$ that is likely to be solved the fastest way.
Our classification procedure should thus be fast to compute at runtime.
We therefore use a decision tree as our classifier, where  our decision tree is a set of \texttt{if then else} expressions.
We used the library \texttt{scikit-learn}~\cite{scikit-learn} to train our decision tree.

\subsubsection {Building the Dataset}
\label{sec:building-dataset}
\paragraph {Sampling.} We construct the set of samples $(x, y)$ from the distributions $\mathcal{X}$, $\mathcal{D}_0$ and $\mathcal{D}_1$, respectively corresponding to the subsumption resolution input $(\sideP,\mainP)$ drawn from the distribution $\mathcal{X}$ and their direct $\DirEnc(\sideP,\mainP)$ and indirect encodings $\IndEnc(\sideP,\mainP)$ modeling the distribution $\mathcal{D}_0(y_0\ |\ x)$ and $\mathcal{D}_1(y_1\ |\ x)$ respectively.
To do so, we recorded the saturation  running time of any subsumption resolution inference that reaches the SAT solving procedure.
Indeed, if the subsumption instance  is pruned, both encoding will behave exactly the same and the sample is irrelevant.
We also recorded the features of the subsumption resolution check, that is, the length of the premises, and the sparsity.
Each problem is run twice, once with the direct encoding~$\DirEnc(\sideP,\mainP)$, and another with the indirect encoding~$\IndEnc(\sideP,\mainP)$. As a result, we obtain  two sets of samples samples $(x, y)$, one for each encoding of subsumption resolution. We  pair these samples to form $(x, y_0, y_1)$.

\paragraph{Condensing the dataset.}
Decision trees cannot be trained online, nor with mini-batches.
Traditionally, when facing a large dataset, the classical method is to segment it into small batches, and train the model on randomly sampled batches~\cite{DBLP:books/daglib/0040158}.
However, this approach is not supported within the decision trees of the \texttt{scikit-learn} library.
We therefore build a new dataset by summing the run times of all the samples that have the same features.
That is, we build a new dataset $\mathcal{S}$ of  $(x, \hat{y}_0, \hat{y}_1)$ samples,
where $x$ describes the feature and $\hat{y}_0$ and $\hat{y}_1$ are the respective sums of the run times of all the samples that have the same features $x$.

\paragraph {Modified objective function.} We adjust our objective function to our new dataset $\mathcal{S}$, as follows: 
\begin{equation}
    \label{eq:objective-function-condensed}
    \arg \min_{f \in \mathcal{F}} \sum_{(x, \hat{y}_0, \hat{y}_1) \in \mathcal{S}} \left[|\hat{y}_0 - \hat{y}_1| * \left(f(x) - H(\hat{y}_0 - \hat{y}_1)\right)^2\right]
\end{equation}
where $H$ is the step function, i.e., $H(a) = 1$ if $a \geq 0$, and $H(a) = 0$ otherwise. 

The optimisation problem~\eqref{eq:objective-function-condensed} is an empirical version of~\eqref{eq:objective-function}. 
Intuitively,~\eqref{eq:objective-function-condensed} introduces more weight to samples with a large difference of efficiency between both SAT encodings $|\hat{y}_0 - \hat{y}_1|$. A choice of a SAT encoding of subsumption resolution is considered ``wrong" if  (i) $f(x)$ predicted $0$  and the indirect encoding $\IndEnc(\sideP,\mainP)$ is faster than the direct encoding $\DirEnc(\sideP,\mainP)$;
or (ii) $f(x)$ predicted $1$  and the indirect encoding $\IndEnc(\sideP,\mainP)$ is slower than the direct encoding $\DirEnc(\sideP,\mainP)$. That is,  $f(x) - H(\hat{y}_0 - \hat{y}_1)$ is $1$ or $-1$ on wrong choices of SAT encodings, and $0$ on correct choices.

\paragraph {Evaluating the model.} We introduce a metric called the \emph{advantage} of the model over a function to evaluate the performance of our classifier $f$ from~\eqref{eq:objective-function-condensed}. We introduce three baseline classifiers to compare our model to:
\begin{enumerate}
  \item the \emph{direct encoding} $d(x) = 0$ always chooses the direct encoding $\DirEnc(\sideP,\mainP)$ for sample $x$;
  \item the \emph{indirect encoding} $i(x) = 1$  always chooses the indirect encoding $\IndEnc(\sideP,\mainP)$ for  sample $x$; 
  \item the \emph{perfect model} $p_\mathcal{S}(x)$ always chooses the fastest encoding for  sample $x$, being defined as:
  \begin{equation}
    p_\mathcal{S}(x) = \begin{cases}
      0 & \text{if } \exists (x, \hat{y}_0, \hat{y}_1) \in \mathcal{S} \land \hat{y}_0 < \hat{y}_1 \\
      1 & \text{otherwise}
    \end{cases}
  \end{equation}
\end{enumerate}

We then set the \emph{advantage} of the model $f$ over a function $g$ on a dataset $\mathcal{S}$ as: 
\begin{equation}
  Adv(f, g, \mathcal{S}) = \frac
  {\sum_{(x, \hat{y}_0, \hat{y}_1) \in \mathcal{S}} \left[\hat{y}_{g(x)}\right]}
  {\sum_{(x, \hat{y}_0, \hat{y}_1) \in \mathcal{S}} \left[\hat{y}_{f(x)}\right]}
\end{equation}

Naturally, the higher the advantage $Adv(f, g, \mathcal{S})$ is, the better the model $f$ performs. Note that  advantage over the perfect model is always less than or equal to $1$.

\begin{figure}[b]
  \centering
  \includegraphics[width=0.8\textwidth]{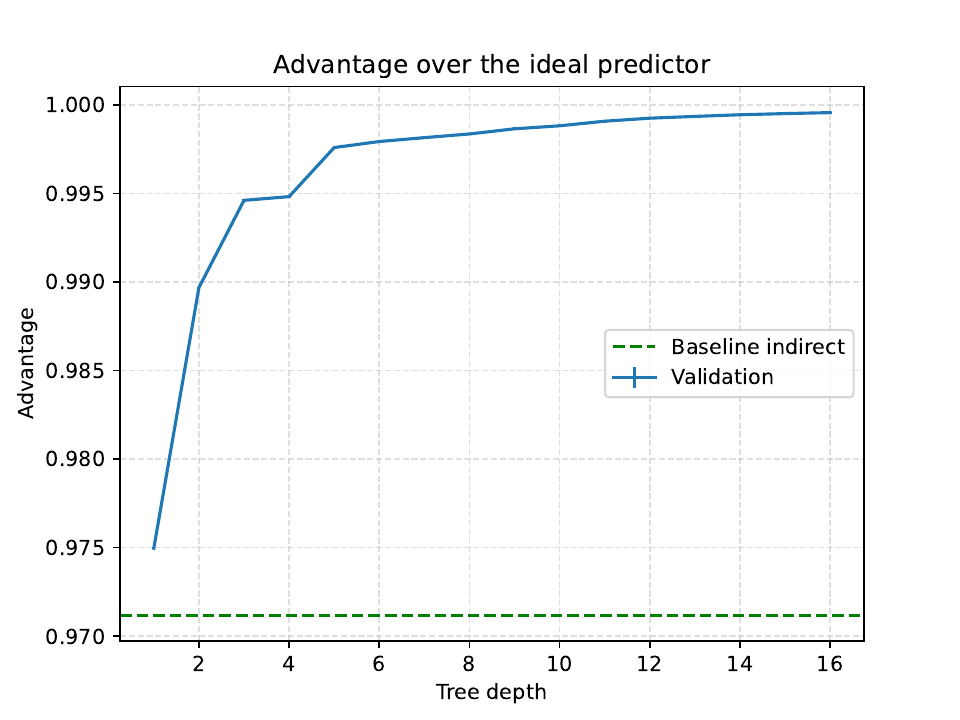}
  \caption{Advantage of the model over the perfect model for different depths. The green dashed line shows the baseline advantage of the indirect SAT encoding over the perfect model.}
  \label{fig:depth}
\end{figure}

\subsubsection{Choosing the Depth of the Decision Tree}%
\label{sec:choosing-depth}

\paragraph{Training, validation and test sets.}
We divided our dataset into a test set and a set of pairs of training and validation sets.
More precisely, we chose to segment our dataset $\mathcal{S}$ into $11$ segments,
namely $\mathcal{S}_0, \dots, \mathcal{S}_{10}$.
Here, $\mathcal{S}_0$ is kept for the final testing phase while
the remaining 10 segments of $\mathcal{S}$ are used
to generate pairs $(\mathcal{S}_i, \Union_{j \neq i} \mathcal{S}_j)$ for $i = 1, \dots, 10$.

\paragraph{Choosing the right depth.}
Decision trees have the ability to match arbitrary functions if they are deep enough and the training set is sufficiently large.
However, this is not desirable for two reasons: (i) the deeper the tree is, the more code will have to be added;
and (ii) the deeper the tree is, the more susceptible it is to overfitting.
We therefore need to find a proper depth for our decision tree.
To do so, for each $i=1,\dots,10$, we train a decision tree for each depth $d = 1, \dots, 15$ on the set $\Union_{j \neq i} \mathcal{S}_j$ and evaluate the performance on the validation set $\mathcal{S}_i$.
Figure~\ref{fig:depth} shows that the performance gains are mostly achieved by trees of depth lower than $3$. As such, we empirically chose to use a decision tree of depth 3 in our framework.

\begin{figure}[!p]
  \begin{subfigure}{0.40\textwidth}
    \centering
    \includegraphics[width=0.95\textwidth,trim={1.5cm, 2.25cm, 1.5cm, 2.0cm},clip]{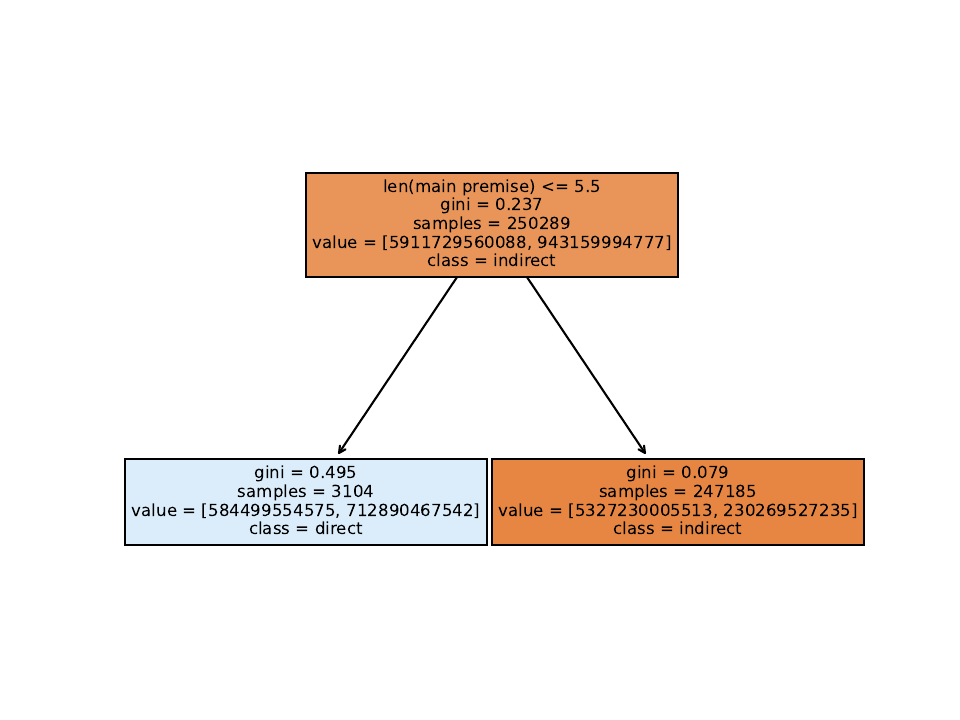}%
    \caption{Depth 1}
    \label{fig:tree-depth-1}
  \end{subfigure}
  \begin{subfigure}{0.60\textwidth}
    \centering
    \includegraphics[width=0.95\textwidth,trim={1.5cm, 2.25cm, 1.5cm, 2.0cm},clip]{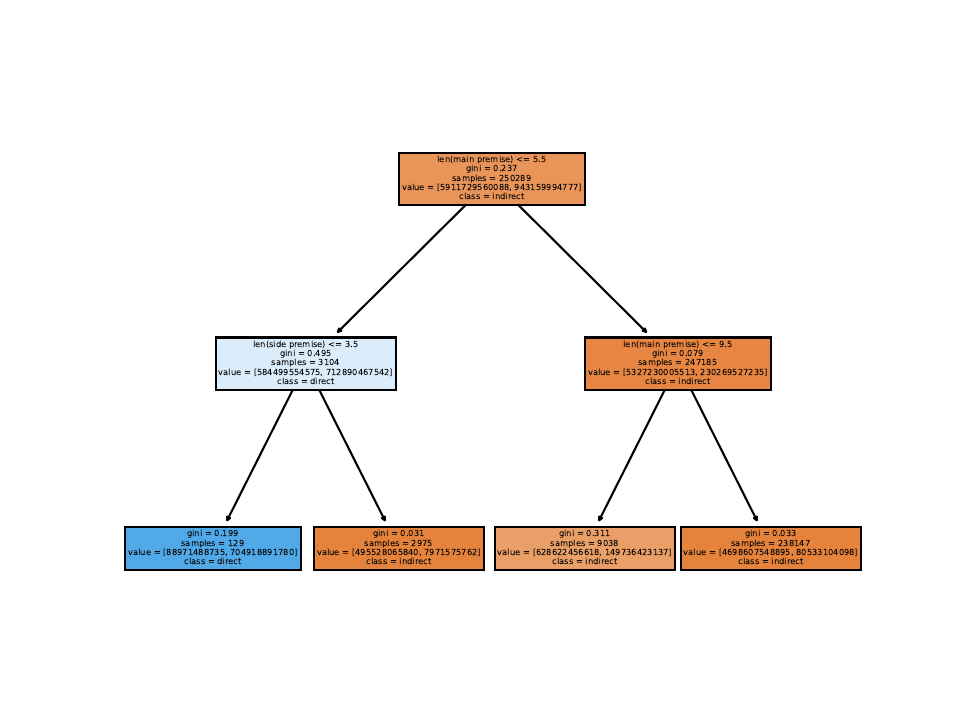}%
    \caption{Depth 2}
    \label{fig:tree-depth-2}
  \end{subfigure}
  \begin{subfigure}{\textwidth}
    \centering
    \includegraphics[width=0.8\textwidth,trim={1.5cm, 2.25cm, 1.5cm, 2.0cm},clip]{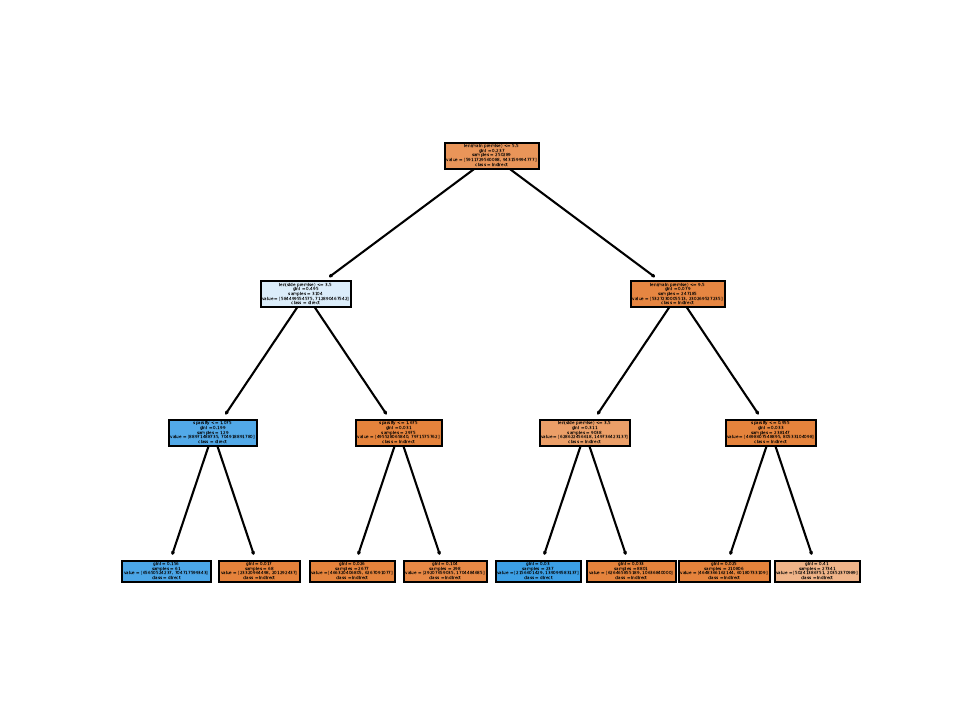}%
    \caption{Depth 3}
    \label{fig:tree-depth-3}
  \end{subfigure}
  \caption{Decision trees of different depths. The orange nodes choose the indirect encoding, and the blue nodes choose the direct encoding.}
\end{figure}

Figure~\ref{fig:tree-depth-3} summarizes the  decision tree resulting from the training. 
As only two leaves prefer the direct encoding, this tree can be summarised and optimised into the following pseudo-code on Figure~\ref{fig:tree-code}.
\begin{figure}[t]
    \begin{lstlisting}[language=Python]
      if (|S| <= 3):
        if (|M| <= 5 and  sparsity <= 1.075):
          return direct
        elif (|M| <= 9):
          return direct
      return indirect
    \end{lstlisting}
    \caption{Decision tree from Figure~\ref{fig:tree-depth-3} in pseudo-code.}
    \label{fig:tree-code}
\end{figure}

\paragraph{Evaluating the model.}
Once our decision  tree is trained, we can evaluate its performance on the test
set. Because of the very large dataset available and the limited number of
features, no overfitting was observed. The predictor has an advantage of 1.024
over the indirect encoding, and 0.995 over the ideal predictor. This method
could be further improved by adding new encodings, or by increasing the feature
space.

\section{Experimental Results\label{sec:experiments}}
We implemented our SAT-based framework for solving subsumption and subsumption resolution
in the \vampire{} theorem prover~\cite{kovacs2013first}. We next discuss the evaluation and results of our approach. 

\paragraph{Benchmarks.}
We use the TPTP library~\cite{Sutcliffe:2017:TPTP} (version 8.1.2) as the benchmark source for our experiments. 
This version of the TPTP library contains altogether~25,257~problems in various languages. Out of these examples, 24,973~problems have been included in our evaluation of
SAT-based subsumption and subsumption resolution in \vampire{}.
The remaining TPTP problems that we did not use for our experiments
requires features that \vampire{} currently does not support (e.g.,
higher-order logic with theories).

\paragraph{Experimental Setup.}
All our experiments were carried out on a cluster at TU Wien,
where each compute node contains two AMD Epyc 7502 processors,
each of which has 32~CPU cores running at 2.5\,GHz.
Each compute node is equipped with 1008\,GiB of physical memory
that is split into eight memory nodes of 126\,GiB each,
with eight logical CPUs assigned to each node.
We used the \texttt{runexec} engine from the
benchmarking framework~\textsc{BenchExec}~\cite{BeyerLoweWendler:2017:benchexec}
to assign each benchmark process to a different CPU core and its corresponding memory node,
aiming to balance the load across memory nodes.
Further, we used \textsc{GNU Parallel}~\cite{Tange:2018:GNUParallel}
to schedule 32~benchmark processes in parallel.

\paragraph{Ensuring consistent progress.}
For several of the subsequent experiments,
we perform relatively expensive computation and/or logging
in addition to the measured solving process.
While this instrumentation does not affect the measurements per se,
it will reduce the progress the solver can make in the saturation algorithm
within a fixed duration of wall-clock time.
To avoid this effect,
we first performed a run of \vampire{} without any expensive instrumentation
and a time limit of 60~seconds,
and report for each TPTP problem the number of times
the forward simplification loop has been called.
For all subsequent \vampire{} runs that involve instrumentation,
we do not impose a time limit, but instead terminate after performing the
previously reported number of forward simplification loops.

\subsection{Measuring Speed Improvements for Subsumption\label{sec:slog}}
We  first measured the cost of subsumption checks in isolation.
A similar evaluation has previously been done for indexing techniques in
first-order provers, see~\cite{voronkov-evaluation-indexing}.

\paragraph{Methods Considered.} We first ran \vampire{} with a timeout of 60~seconds
on each TPTP problem, while logging each subsumption check into a file.
Each of these files then contains a sequence of subsumption checks,
which we call the \emph{subsumption log} for a problem.
This preparatory step led to a large number of benchmarks that are representative
for subsumption checks that appear during actual proof search.
These benchmarks occupy 1.79\,TiB of disk space in compressed form,
and contain about 278~billion subsumption checks in total.
About 0.6\,\% of these subsumption checks are satisfiable (1.7~billion),
while the rest is unsatisfiable.
We note that we removed 5530~TPTP problems from this experiment,
because \vampire{} was unable to parse back the output it generated during the
logging phase.
However, the successfully replayed subsumptions amount to about 258~billion
subsumption checks (93\,\% of the collected).

Next, we executed the checks listed in each subsumption log and measured the total running times,
once for the existing backtracking-based subsumption algorithm of~\vampire{},
and once for our SAT-based subsumption approach in~\vampire{}.

\paragraph{Results.}
The results of this experiment are given
in Figure~\ref{fig:slog-scatter}
and Table~\ref{tab:slog-replay}.
Each mark in Figure~\ref{fig:slog-scatter} represents
one subsumption log from a TPTP problem,
and compares the total running time of executing
all subsumption checks contained in the log
with the old backtracking-based algorithm
vs. the new SAT-based algorithm.
The dashed line indicates equal runtime,
hence, our SAT-based approach was faster for marks below the line.
In Table~\ref{tab:slog-replay},
we give the cumulative time used for subsumption.
For the six TPTP problems
\texttt{LCL673+1.015},
\texttt{LCL673+1.020},
\texttt{NLP023+1},
\texttt{NLP023-1},
\texttt{NLP024+1},
and
\texttt{NLP024-1},
the old backtracking-based subsumption algorithm of \vampire{} 
did not terminate within a time limit of 1200\,s;
these problems are not included in the cumulative sum.

Overall, our results show a clear improvement
of the running time of subsumption in \vampire{}, yielding an improvement  
by a factor of 2.5.

\newfloatcommand{capbtabbox}{table}[][\FBwidth]

\begin{figure}
    \begin{floatrow}
        \capbtabbox{%
  \begin{tabular}{l | c l }
    \hline
    \textbf{Prover} & \textbf{Subsumption}  & \textbf{Boost} \\
    \hline
    \upstrut
    \textsc{Vampire}$_\textsc{M}$            & $35.86\,h$ & 
     \\
    \textsc{Vampire}$_\textsc{Sat}$ & $13.68\,h$ & 2.62\,x 
    \downstrut \\
    \hline
  \end{tabular}
        }{%
        \caption{%
        \label{tab:slog-replay}
    Total time spent on subsumption checks, summed over 19437~TPTP~problems.
    Note that \textsc{Vampire}$_M$ timed out on 6 problems during subsumption replay;
    these have not been included in the total.
        }%
        }
        \ffigbox{%
  \includegraphics[width=0.5\textwidth]{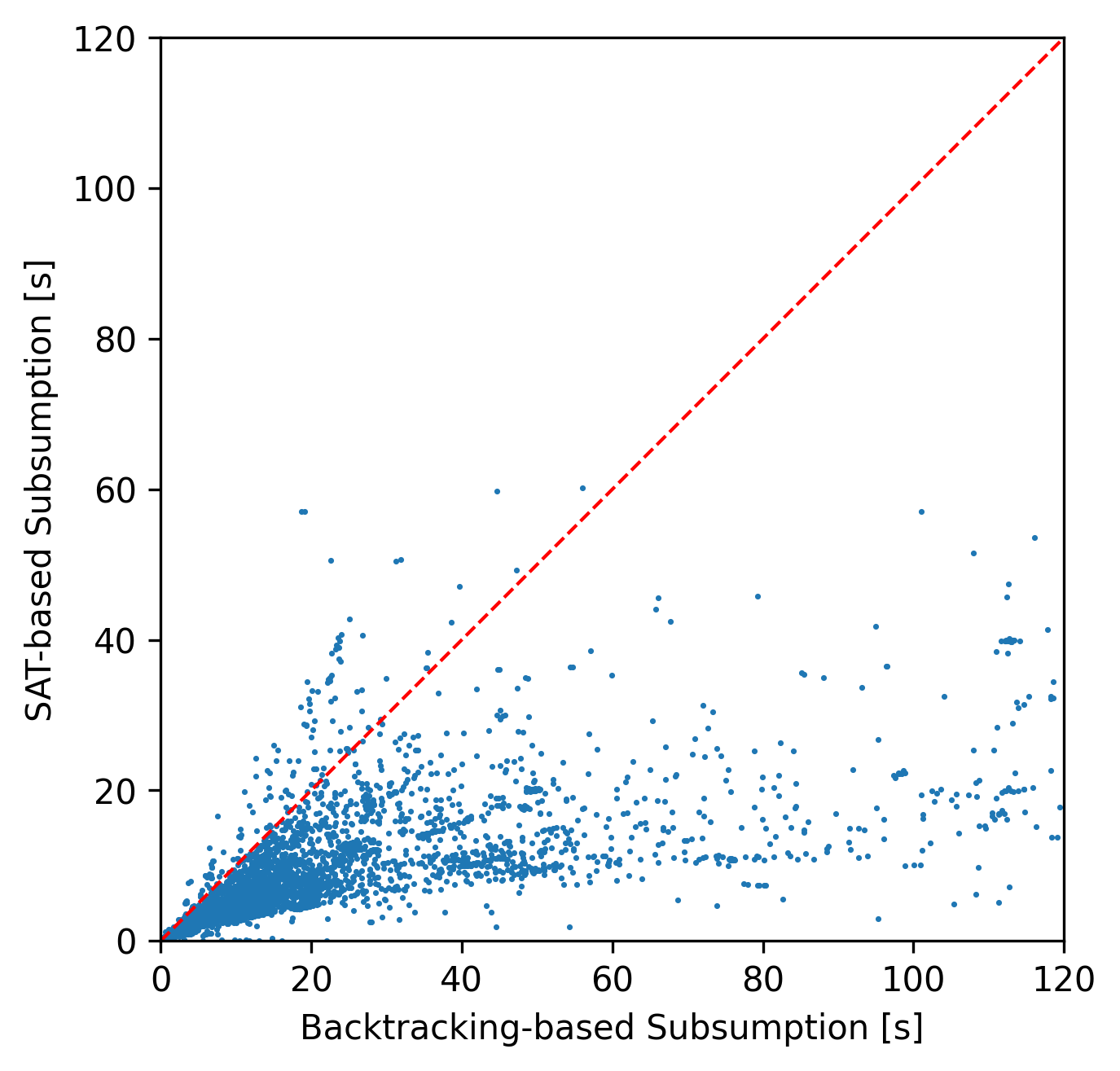}%
        }{%
        \caption{%
            \label{fig:slog-scatter}
            Total running time (in seconds)
            of backtracking-based vs. SAT-based subsumption,
            where each mark represents a TPTP problem.
            For marks below the dashed line,
            our SAT-based approach was faster.
        }%
        }
    \end{floatrow}
\end{figure}

\subsection{Measuring Speed Improvements for Subsumption Resolution}
Whereas subsumption instances can be separated from the rest of the execution, efficient subsumption resolution cannot. As explained in Section~\ref{sec:implementation}, subsumption resolution is applied in a simplification loop that optimised the setup between subsumption and subsumption resolution. It would thus be an unfair comparison to isolate subsumption resolution from subsumption. This is why we decided to measure the runtime of subsumption and subsumption resolution together in a forward simplification loop. 

Our experimental procedure  is summarized in Algorithm~\ref{alg:evaluation} and commented below. 

\begin{algorithm}[t]
  \caption{Evaluation of SAT-based subsumption resolution}
  \label{alg:evaluation}
  \begin{algorithmic}
    \Procedure{ForwardSimplifyWrapper}{$M, F$}
      \State $s \gets \mathrm{StartTimer}()$
      \State $r \gets \Call{ForwardSimplify}{M, F}$ \Comment{\parbox{0.45\textwidth}{Benchmarked method}}
      \State \Comment{\parbox{0.45\textwidth}{Prevent modification of $F$}}
      \State $e \gets \mathrm{EndTimer}()$
      \State $\mathrm{WriteToFile}(e - s)$
      \State $r' \gets \Call{Oracle}{M, F}$
      \State $\Call{CheckCoherence}{r, r'}$  \Comment{\parbox{0.45\textwidth}{Empiric check}}
      \State \Return{$r'$}
    \EndProcedure
  \end{algorithmic}
\end{algorithm}

\begin{compactitem}[$\bullet$\leftmargin=0em]
  \item The conclusion clause of the subsumption resolution rule \SR{} is not necessarily unique. Therefore,  different versions of subsumption resolution, including our work based on direct and indirect SAT encodings, may not return the same conclusion clause of \SR{}. Hence, applying different versions of subsumption resolution over the same clauses  may change the saturation  process.
  \item Saturation with our SAT-based subsumption resolution takes advantage of subsumption checking (see Algorithms~\ref{alg:forward-new}--\ref{alg:sat-subsumption-resolution-new}). Therefore, only checking subsumption resolution  on pairs of clauses is not a fair nor viable comparison, as isolating subsumption checks from subsumption resolution is not what we aimed for (due to efficiency). 
  \item CPU cache influences results. For example, two consecutive runs of Algorithm~\ref{alg:forward-new} may be up to 25\,\% faster on second execution, due to cache effects.
  \item $\Call{CheckCoherence}{r, r'}$ is an empiric check that ensures that the result of the oracle is compatible with the result of the benchmarked method.
\end{compactitem}

\paragraph{Oracle.}
The oracle used in our experiments is the fastest method overall. The motivation of this choice is to maximise the number of sample points compared to the total computation time. Indeed, a slower oracle will prevent \vampire{} to progress faster. The oracle therefore runs the dynamic encoding (heuristic encoding selection) with loop optimisation.

\paragraph{Methods considered.}
We compared the following versions of \vampire:
\begin{itemize}
    \item \vampire$_M$ -- the master branch of \vampire{} (commit a47e1dca9), without SAT-based subsumption and subsumption resolution;
    \item \vampire$_D$ --  the SAT-based version of \vampire{} using the direct encoding $\DirEnc$;
    \item \vampire$_I$ -- the SAT-based version of \vampire{} using the indirect encoding $\IndEnc$;
    \item \vampire$_H$ -- the SAT-based version of \vampire{} using the heuristic discussed in Section~\ref{sec:choosing-encoding};
    \item \vampire$^*_\mathcal{E}$ -- using  the loop optimisation discussed in Section~\ref{sec:loop-optimization} with the encoding~$\mathcal{E}$ (note that the loop optimisation does not apply to the non-SAT version);
    \item \vampire-cutoff-$n$ -- uses a cutoff at $n$ ticks, as discussed in Section~\ref{sec:cutoff}.
\end{itemize}

\paragraph{Results.}
In Table~\ref{tab:results}, the average and standard deviation of the runtime of the forward simplification loop have been logged for the considered versions. The column \textbf{Boost} is the ratio between the average runtime of \vampire{}$_M$ and the method considered. From the table, we can see that the simplest version of our algorithm, that is, the direct encoding without loop optimisation, already performs better than the old backtracking-based algorithm.
Introducing the indirect encoding creates a large drop in variance, indicating that $\IndEnc$ is more stable and scalable.
The loop optimisation further improved performance by sharing work in the encoding setup.
Finally, choosing the encoding based on the heuristic discussed in Section~\ref{sec:choosing-encoding} brings another small improvement boost.
Overall, we obtained an improvement in performance by a factor of 1.36
on the simplification loop.

When considering these results with
our previous analysis of subsumption alone (Section~\ref{sec:slog}),
it is worth mentioning that they are not comparable.
While the evaluation method in Section~\ref{sec:slog}
allows a direct comparison of the backtracking-based subsumption implementation to the SAT-based approach,
such an evaluation is not suitable for subsumption resolution,
especially when considering the optimized simplification loop (Section~\ref{sec:loop-optimization}).
Indeed, our second benchmarking technique (this section) includes all components of the simplification loop, including obtaining candidate clauses.
In particular, this means we also measure improvements obtained by optimizing the simplification loop itself.

\begin{remark}
    In \cite{DBLP:conf/cade/CoutelierKRR23}, we observed a large drop in variance from the standard to the optimised simplification loop. Improving the memory usage of the pruning algorithms in Section~\ref{sec:pruning} greatly reduced this unexpected  behaviour. In~\cite{DBLP:conf/cade/CoutelierKRR23}, we used a standard C++ vector that was cleared between pruning runs. However, some problems in the TPTP library have very large signatures. On these instances, the subsumption execution time has been greatly impacted by the calls to our simpler pruning algorithm from~\cite{DBLP:conf/cade/CoutelierKRR23}. Namely, in the standard saturation loop,  pruning was executed once for subsumption and once for subsumption resolution. As discussed in Section~\ref{sec:implementation}, a large proportion of subsumption pruning checks are unnecessary if the subsumption resolution pruning criterion fails first.  Our fast implementations of pruning from Section~\ref{sec:pruning} greatly reduced this effect from~\cite{DBLP:conf/cade/CoutelierKRR23}.
\end{remark}

Figure~\ref{fig:cummulative-graph}
shows the cumulative number of forward simplification loops performed in less then $t\ \mu s$ for some methods. We can visually see that our method performs better than the previous implementation even for the easier instances, and further increasing its advantage on harder instances. The loop optimisation shows most its strength in the 10 to 20 $\mu s$ region before almost getting caught up by the non-optimised loop. The reason is that the optimisation only improves the polynomial setup of the algorithm, that becomes less relevant as the exponential nature of the problem takes over.

\begin{table}[t]
  \setlength{\tabcolsep}{6pt}
  \centering
  \begin{tabular}{l | c r | r}
    \hline
    \textbf{Prover} & \textbf{Average} & \textbf{Std.\ Dev.} & \textbf{Boost}\\
    \hline
    \upstrut
    \vampire$_M$ & 33.63 $\mu s$ & 1839.25 $\mu s$ \quad & 1.00 \quad \\
    \vampire$_D$ & 28.74 $\mu s$ & 1245.88 $\mu s$ \quad & 1.17 \quad \\
    \vampire$_I$ & 28.36 $\mu s$ & 243.38 $\mu s$ \quad & 1.19 \quad \\
    \vampire$_H$ & 28.16 $\mu s$ & 233.87 $\mu s$ \quad & 1.19 \quad \\
    \vampire$^*_D$ & 25.38 $\mu s$ & 1241.86 $\mu s$ \quad & 1.32 \quad \\
    \vampire$^*_I$ & 24.93 $\mu s$ & 196.38 $\mu s$ \quad & 1.35 \quad \\
    \vampire$^*_H$ & 24.73 $\mu s$ & 190.69 $\mu s$ \quad & 1.36
    \downstrut\quad \\
    \hline
  \end{tabular}\vspace*{0.5em}
  \caption{%
    Average time spent in the forward simplify loop.
    \textsc{Vampire}$^*_{H}$ is the fastest method, closely followed by the \textsc{Vampire}$^*_{I}$.
    The versions \vampire$^*_\mathcal{E}$ integrate the loop optimisation discussed in Section~\ref{sec:loop-optimization} into \vampire$_\mathcal{E}$.
  }
  \label{tab:results}
\end{table}

\begin{figure}[t]
  \centering
  \includegraphics[width=\textwidth, trim={2.25cm, 0cm, 2.25cm, 0cm}]{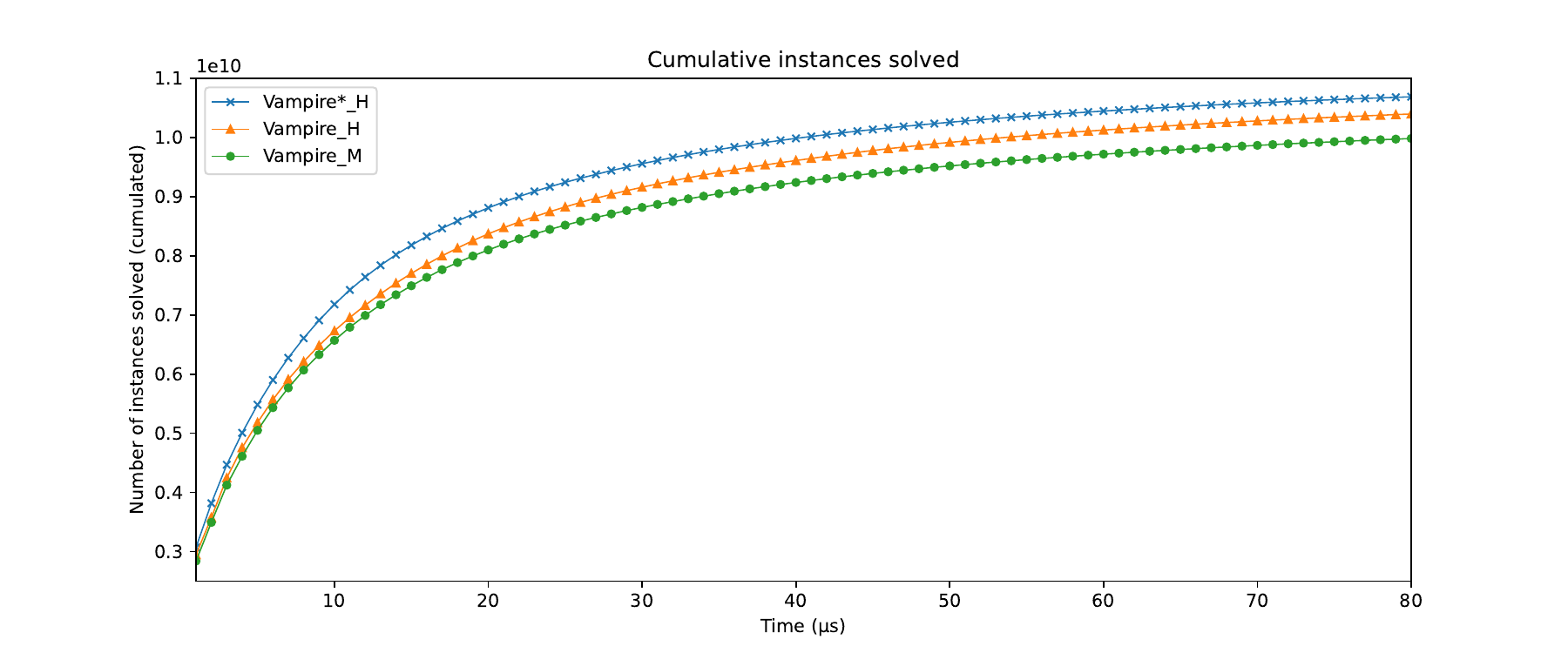}
  \caption{%
    Cumulative instances of applying subsumption resolution, using  the TPTP examples.
    A point $(t, n)$ on the graph means that $n$
    forward simplify loops were executed in less than $t~\mu s$.
    The higher the curve, the faster the \textsc{Vampire} version is. The difference between the different encoding being small relative to the difference the optimisation brings, we only displayed the dynamic encoding to avoid superposition of plot lines.
    }
  \label{fig:cummulative-graph}
\end{figure}

\subsection{Overall \vampire{} Runs} 
We finally analysed the 
 the number of problems \vampire{} solves depending on various implementations of the subsumption and subsumption resolution procedure.
 Table~\ref{tab:performances-vampire-vanilla} summarizes our findings and we draw the following conclusions.

\begin{compactitem}[$\bullet$\leftmargin=0em]
    \item
        Each SAT-based configuration of subsumption solves more problems than
        the previous, backtracking-based  implementation of subsumption,
        showing the superiority of our method in solving subsumption and
        subsumption resolution.
    \item
        Our heuristic approach using decision trees of
        Section~\ref{sec:heuristics} solves slightly more problems than the
        other SAT-based only methods of Section~\ref{sec:loop-optimization}.
        We remark that we trained our decision trees on a dataset built from
        the exact problems we are testing our methods against, with the purpose
        of maximizing the number of solved problems.
        We note that our methodology might suffer from (minimial) overfitting:
        we used a very rigid classification algorithm with a very low potential
        for overfitting. It is unlikely that a decision tree with such a low
        depth and few features will learn how to solve specific problems, but
        not learn general trends.
    \item
        Our cutoff method from Section~\ref{sec:cutoff} did not bring great
        improvements. While this result may sound discouraging, we believe it
        actually strengthens our  contributions from Section~\ref{sec:loop-optimization}.
        Indeed, it shows that only finding the simple subsumption and
        subsumption resolution instances is not an effective strategy. While
        our methods from Section~\ref{sec:loop-optimization} might not be the
        fastest for small clauses, they scale well with the complexity of the
        problem.
    \item
        The saturation loop optimisation techniques, e.g., forward simplifcation
        from Section~\ref{sec:loop-optimization}, bring the largest increase in
        number of problems solved. This follows our intuition built from
        Table~\ref{tab:results}. We however note that our loop optimisation
        techniques may  lose slightly more problems than their un-optimised
        loop versions. This is because our loop optimisation methods may
        perform  some unnecessary and potentially hard subsumption resolutions,
        slightly increasing the likelihood of being stuck on difficult
        combinatorial problems.
\end{compactitem}

\begin{table}
    \setlength{\tabcolsep}{6pt}
    \centering
    \begin{tabular}{l|cc}
        \hline
        \textbf{Prover} & \textbf{Total Solved} & \textbf{Gain/Loss}
        \\
        \hline
        \vampire$_M$ &
            10\,728 & baseline
            \\
        \vampire$_{D}$ &
            10\,762 & ($+62$, $-28$)
            \\
        \vampire$_{I}$ &
            10\,760 & ($+63$, $-31$)
            \\
        \vampire$_{H}$ &
            10\,764 & ($+64$, $-28$)
            \\
        \vampire--cutoff-5000$_{H}$ &
            10\,766 & ($+65$, $-27$)
            \\
        \vampire--cutoff-150$_{H}$ &
            10\,739 & ($+56$, $-45$)
            \\
        \vampire$^*_{D}$ &
            10\,791 & ($+94$, $-31$)
            \\
        \vampire$^*_{I}$ &
            10\,785 & ($+92$, $-35$)
            \\
        \vampire$^*_{H}$ &
            10\,794 & ($+97$, $-31$)
            \\
        \vampire--cutoff-5000$^*_{H}$ &
            10\,790 & ($+97$, $-35$)
            \\
        \vampire--cutoff-150$^*_{H}$ &
            10\,768 & ($+93$, $-53$)
            \\
        \hline
    \end{tabular}\vspace*{0.5em}
    \caption{%
        Number of TPTP problems solved by the considered versions of \textsc{Vampire}.
        The run was made using the options \texttt{-sa otter -av off} with a timeout of~60\,s.
        The \textbf{Gain/Loss} column reports the difference of solved instances compared to \vampire{}$_M$.
        The versions \vampire$^*_\mathcal{E}$ integrate the loop optimisation discussed in Section~\ref{sec:loop-optimization} into \vampire$_\mathcal{E}$.
    }
    \label{tab:performances-vampire-vanilla}
\end{table}

\section{Related Work\label{sec:related}}
Subsumption and subsumption resolution are some of the
most powerful and frequently used redundancy criteria in saturation-based first-order theorem proving. 

\paragraph{Subsumption.} While efficient literal- and clause-indexing techniques have been proposed~\cite{Tammet:1998,schulz2013simple},
optimising the matching step among multisets of literals, and hence clauses,
has so far not been addressed.
Our work shows that SAT solving methods can provide efficient
solutions in this respect,
further improving first-order theorem proving.

A related approach that integrates multi-literal matching
into indexing is given in~\cite{handbook-indexing}, using code trees. 
Code trees organise potentially subsuming clauses into a tree-like data structure
with the aim of sharing some matching effort for similar clauses.
However, the underlying matching algorithm uses a fixed branching order and does not learn from conflicts,
and will thus run into the same issues on hard subsumption instances as the standard backtracking-based matching.

The specialised subsumption algorithm DC~\cite{GottlobLeitsch:1985a}
is based on the idea of separating the clause~$\sideP$ into variable-disjoint components
and testing subsumption for each component separately.
However, the notion of subsumption considered in that work is defined using subset inclusion,
rather than multiset inclusion.
For subsumption based on multiset inclusion,
the subsumption test for one variable-disjoint component is no longer independent of the other components.

An improved version of~\cite{GottlobLeitsch:1985a} comes with IDC~\cite{GottlobLeitsch:1985b}, whereupon each recursion level is checked whether each literal of~$\sideP$ by itself subsumes~$\mainP$ under the current partial substitution,
which is a necessary condition for subsumption.
The backtracking-based subsumption algorithm of \vampire{} uses this optimisation as well,
and our SAT-based approach also implements it as propagation over substitution constraints.

By combining subsumption and resolution into one simplification rules, subsumption resolution is  supported as {contextual literal cutting} in~\cite{E19},
along with efficient approaches for detecting multiset inclusions
among clauses~\cite{DBLP:conf/cav/KovacsV13,Spass09,schulz2013simple}. 
Special cases of {unit deletion} as a by-product of subsumption tests are also proposed in~\cite{Tammet:1998}, with further  refinements of term indexing to drastically reduce the set of candidate clauses checked for subsumption (resolution).

SAT- and SMT-based techniques have previously been applied to the setting of
first-order saturation-based proof search, e.g.
in the form of the \textsc{Avatar} architecture~\cite{Voronkov:2014:Avatar}.
These techniques are, however, independent from our work,
as they apply the SAT- or SMT-solver over an abstraction of the input problem,
while in our work
we use a SAT solver to speed up certain inferences.

Some solvers, such as the pseudo-boolean solver \textsc{MiniCard}~\cite{LiffitonMaglalang:2012}
and the ASP solver \textsc{Clasp}~\cite{Gebser2009a}, support cardinality constraints natively,
in a similar way to our handling of at-most-one  constraints.
Our encoding, however, requires only at-most-one  constraints
instead of arbitrary cardinality constraints,
thus simplifying the implementation.

Note that 
clausal subsumption can also be seen as a constraint satisfaction problem (CSP).
In this view, the boolean variables $b^+_{ij}$ in our subsumption encoding
represent the different choices of a non-boolean CSP variable,
corresponding to the so-called \emph{direct encoding} of a CSP variable~\cite{Walsh:2000}.
A well-known heuristic in CSP solving is the {minimum remaining values} heuristic:
always assign the CSP variable that has the fewest possible choices remaining.
We adapted this heuristic to our embedded SAT solver and used it to
solve subsumption instances~\cite{rath2022first};
however, it does not fit the subsumption resolution encodings well,
especially the indirect encoding.
Moreover, the
advantage over the well-known variable-move-to-front (VMTF) heuristic~\cite{ryan2004vmtf}
is minor even for subsumption,
which is why we now always use VMTF for variable selection in the SAT solver.

We finally remark that redundancy is  explored in SAT-based equivalence checking~\cite{DBLP:journals/jar/HeuleKB20}, by using first-order and QBF reasoning for subsumption checks~\cite{DBLP:conf/sat/Biere04}.
In particular, first-order backward subsumption~\cite{handbook-indexing} has become a key preprocessing techniques in SAT solving, in particular in bounded variable elimination~\cite{DBLP:conf/sat/EenB05,DBLP:series/faia/BiereJK21}. Our work complements this line of research by showcasing that SAT solving also improves solving variants of first-order subsumption, not just the other way around.

\section{Conclusion\label{sec:conclusion}}
We promote tailored  SAT solving
to solve clausal subsumption and subsumption resolution in first-order theorem proving.
We introduce substitution constraints to encode subsumption constraints as
SAT instance. For solving such instances, we adjust unit propagation
and conflict resolution in SAT solving towards a specialized treatment of
substitution constraints and at-most-one constraints.
Crucially,
our encoding together with our SAT solver
enables efficient setup of subsumption and subsumption resolution instances.
We show that the resulting SAT solver can directly be integrated within the saturation loop of first-order theorem proving, solving both subsumption and subsumption resolution.
Our experimental results indicate that SAT-based subsumption and subsumption resolution significantly
improves the performance of first-order proving.
Extending subsumption with theory reasoning with equality,
possibly in the presence of (arithmetic) first-order theories, is an
interesting task for future work.
We believe this would open up potentially new venues for using
SMT solving instead of SAT solving for subsumption reasoning.

\backmatter

\bmhead{Acknowledgments}
We thank Pascal Fontaine (University of Liège, Belgium) for fruitful discussions.
We acknowledge partial support from the ERC Consolidator Grant ARTIST 101002685, the FWF SFB project SpyCoDe F8504, the Austrian FWF project W1255-N23, the WWTF ICT22-007  Grant ForSmart, and the TU Wien Trustworthy Autonomous Cyber-Physical Systems Doctoral College.
This research was funded in whole or in part by the Austrian Science Fund (FWF) [10.55776/F85, 10.55776/W1255]. For open access purposes, the author has applied a CC BY public copyright license to any author accepted manuscript version arising from this submission.
Initial results on this work have been established during a research internship of Robin Coutelier at TU Wien, while he was a master student at the University of Liège, Belgium.

\bibliography{references}


\begin{thebibliography}{55}
\ifx \bisbn   \undefined \def \bisbn  #1{ISBN #1}\fi
\ifx \binits  \undefined \def \binits#1{#1}\fi
\ifx \bauthor  \undefined \def \bauthor#1{#1}\fi
\ifx \batitle  \undefined \def \batitle#1{#1}\fi
\ifx \bjtitle  \undefined \def \bjtitle#1{#1}\fi
\ifx \bvolume  \undefined \def \bvolume#1{\textbf{#1}}\fi
\ifx \byear  \undefined \def \byear#1{#1}\fi
\ifx \bissue  \undefined \def \bissue#1{#1}\fi
\ifx \bfpage  \undefined \def \bfpage#1{#1}\fi
\ifx \blpage  \undefined \def \blpage #1{#1}\fi
\ifx \burl  \undefined \def \burl#1{\textsf{#1}}\fi
\ifx \doiurl  \undefined \def \doiurl#1{\url{https://doi.org/#1}}\fi
\ifx \betal  \undefined \def \betal{\textit{et al.}}\fi
\ifx \binstitute  \undefined \def \binstitute#1{#1}\fi
\ifx \binstitutionaled  \undefined \def \binstitutionaled#1{#1}\fi
\ifx \bctitle  \undefined \def \bctitle#1{#1}\fi
\ifx \beditor  \undefined \def \beditor#1{#1}\fi
\ifx \bpublisher  \undefined \def \bpublisher#1{#1}\fi
\ifx \bbtitle  \undefined \def \bbtitle#1{#1}\fi
\ifx \bedition  \undefined \def \bedition#1{#1}\fi
\ifx \bseriesno  \undefined \def \bseriesno#1{#1}\fi
\ifx \blocation  \undefined \def \blocation#1{#1}\fi
\ifx \bsertitle  \undefined \def \bsertitle#1{#1}\fi
\ifx \bsnm \undefined \def \bsnm#1{#1}\fi
\ifx \bsuffix \undefined \def \bsuffix#1{#1}\fi
\ifx \bparticle \undefined \def \bparticle#1{#1}\fi
\ifx \barticle \undefined \def \barticle#1{#1}\fi
\bibcommenthead
\ifx \bconfdate \undefined \def \bconfdate #1{#1}\fi
\ifx \botherref \undefined \def \botherref #1{#1}\fi
\ifx \url \undefined \def \url#1{\textsf{#1}}\fi
\ifx \bchapter \undefined \def \bchapter#1{#1}\fi
\ifx \bbook \undefined \def \bbook#1{#1}\fi
\ifx \bcomment \undefined \def \bcomment#1{#1}\fi
\ifx \oauthor \undefined \def \oauthor#1{#1}\fi
\ifx \citeauthoryear \undefined \def \citeauthoryear#1{#1}\fi
\ifx \endbibitem  \undefined \def \endbibitem {}\fi
\ifx \bconflocation  \undefined \def \bconflocation#1{#1}\fi
\ifx \arxivurl  \undefined \def \arxivurl#1{\textsf{#1}}\fi
\csname PreBibitemsHook\endcsname

\bibitem[\protect\citeauthoryear{Leino}{2017}]{DBLP:journals/software/Leino17}
\begin{barticle}
\bauthor{\bsnm{Leino}, \binits{K.R.M.}}:
\batitle{{Accessible Software Verification with Dafny}}.
\bjtitle{{IEEE} Softw.}
\bvolume{34}(\bissue{6}),
\bfpage{94}--\blpage{97}
(\byear{2017})
\end{barticle}
\endbibitem

\bibitem[\protect\citeauthoryear{Clochard et~al.}{2020}]{DBLP:journals/pacmpl/ClochardMP20}
\begin{botherref}
\oauthor{\bsnm{Clochard}, \binits{M.}},
\oauthor{\bsnm{March{\'{e}}}, \binits{C.}},
\oauthor{\bsnm{Paskevich}, \binits{A.}}:
{Deductive Verification with Ghost Monitors}.
Proc. of POPL,
2--1226
(2020)
\end{botherref}
\endbibitem

\bibitem[\protect\citeauthoryear{Georgiou et~al.}{2020}]{DBLP:conf/fmcad/GeorgiouGK20}
\begin{bchapter}
\bauthor{\bsnm{Georgiou}, \binits{P.}},
\bauthor{\bsnm{Gleiss}, \binits{B.}},
\bauthor{\bsnm{Kov{\'{a}}cs}, \binits{L.}}:
\bctitle{{Trace Logic for Inductive Loop Reasoning}}.
In: \bbtitle{Proc. of FMCAD},
pp. \bfpage{255}--\blpage{263}
(\byear{2020})
\end{bchapter}
\endbibitem

\bibitem[\protect\citeauthoryear{Komuravelli et~al.}{2016}]{SPACER16}
\begin{barticle}
\bauthor{\bsnm{Komuravelli}, \binits{A.}},
\bauthor{\bsnm{Gurfinkel}, \binits{A.}},
\bauthor{\bsnm{Chaki}, \binits{S.}}:
\batitle{{SMT-Based Model Checking for Recursive Programs}}.
\bjtitle{Formal Methods Syst. Des.}
\bvolume{48}(\bissue{3}),
\bfpage{175}--\blpage{205}
(\byear{2016})
\end{barticle}
\endbibitem

\bibitem[\protect\citeauthoryear{Padon et~al.}{2016}]{IVY16}
\begin{bchapter}
\bauthor{\bsnm{Padon}, \binits{O.}},
\bauthor{\bsnm{McMillan}, \binits{K.L.}},
\bauthor{\bsnm{Panda}, \binits{A.}},
\bauthor{\bsnm{Sagiv}, \binits{M.}},
\bauthor{\bsnm{Shoham}, \binits{S.}}:
\bctitle{{Ivy: Safety Verification by Interactive Generalization}}.
In: \bbtitle{Proc. of PLDI},
pp. \bfpage{614}--\blpage{630}
(\byear{2016})
\end{bchapter}
\endbibitem

\bibitem[\protect\citeauthoryear{Asadi et~al.}{2020}]{DBLP:conf/fmcad/AsadiBHFS20}
\begin{bchapter}
\bauthor{\bsnm{Asadi}, \binits{S.}},
\bauthor{\bsnm{Blicha}, \binits{M.}},
\bauthor{\bsnm{Hyv{\"{a}}rinen}, \binits{A.E.J.}},
\bauthor{\bsnm{Fedyukovich}, \binits{G.}},
\bauthor{\bsnm{Sharygina}, \binits{N.}}:
\bctitle{{Incremental Verification by SMT-based Summary Repair}}.
In: \bbtitle{Proc. of FMCAD},
pp. \bfpage{77}--\blpage{82}
(\byear{2020})
\end{bchapter}
\endbibitem

\bibitem[\protect\citeauthoryear{Garcia{-}Contreras et~al.}{2023}]{DBLP:conf/cav/GarciaContrerasKSG23}
\begin{bchapter}
\bauthor{\bsnm{Garcia{-}Contreras}, \binits{I.}},
\bauthor{\bsnm{K.}, \binits{H.G.V.}},
\bauthor{\bsnm{Shoham}, \binits{S.}},
\bauthor{\bsnm{Gurfinkel}, \binits{A.}}:
\bctitle{{Fast Approximations of Quantifier Elimination}}.
In: \bbtitle{Proc. of {CAV}},
pp. \bfpage{64}--\blpage{86}
(\byear{2023}).
\doiurl{10.1007/978-3-031-37703-7\_4}
\end{bchapter}
\endbibitem

\bibitem[\protect\citeauthoryear{Pick et~al.}{2020}]{DBLP:conf/fmcad/PickFG20}
\begin{bchapter}
\bauthor{\bsnm{Pick}, \binits{L.}},
\bauthor{\bsnm{Fedyukovich}, \binits{G.}},
\bauthor{\bsnm{Gupta}, \binits{A.}}:
\bctitle{{Automating Modular Verification of Secure Information Flow}}.
In: \bbtitle{Proc. of FMCAD},
pp. \bfpage{158}--\blpage{168}
(\byear{2020})
\end{bchapter}
\endbibitem

\bibitem[\protect\citeauthoryear{Mart{\'{\i}}nez et~al.}{2019}]{DBLP:conf/esop/MartinezADGHHNP19}
\begin{bchapter}
\bauthor{\bsnm{Mart{\'{\i}}nez}, \binits{G.}},
\bauthor{\bsnm{Ahman}, \binits{D.}},
\bauthor{\bsnm{Dumitrescu}, \binits{V.}},
\bauthor{\bsnm{Giannarakis}, \binits{N.}},
\bauthor{\bsnm{Hawblitzel}, \binits{C.}},
\bauthor{\bsnm{Hritcu}, \binits{C.}},
\bauthor{\bsnm{Narasimhamurthy}, \binits{M.}},
\bauthor{\bsnm{Paraskevopoulou}, \binits{Z.}},
\bauthor{\bsnm{Pit{-}Claudel}, \binits{C.}},
\bauthor{\bsnm{Protzenko}, \binits{J.}},
\bauthor{\bsnm{Ramananandro}, \binits{T.}},
\bauthor{\bsnm{Rastogi}, \binits{A.}},
\bauthor{\bsnm{Swamy}, \binits{N.}}:
\bctitle{{Meta-F\(^\star\): Proof Automation with SMT, Tactics, and Metaprograms}}.
In: \bbtitle{Proc. of ESOP},
pp. \bfpage{30}--\blpage{59}
(\byear{2019})
\end{bchapter}
\endbibitem

\bibitem[\protect\citeauthoryear{Veronese et~al.}{2023}]{DBLP:conf/sp/VeroneseFBTSM23}
\begin{bchapter}
\bauthor{\bsnm{Veronese}, \binits{L.}},
\bauthor{\bsnm{Farinier}, \binits{B.}},
\bauthor{\bsnm{Bernardo}, \binits{P.}},
\bauthor{\bsnm{Tempesta}, \binits{M.}},
\bauthor{\bsnm{Squarcina}, \binits{M.}},
\bauthor{\bsnm{Maffei}, \binits{M.}}:
\bctitle{{WebSpec: Towards Machine-Checked Analysis of Browser Security Mechanisms}}.
In: \bbtitle{SP},
pp. \bfpage{2761}--\blpage{2779}
(\byear{2023}).
\doiurl{10.1109/SP46215.2023.10179465}
\end{bchapter}
\endbibitem

\bibitem[\protect\citeauthoryear{Brugger et~al.}{2023}]{DBLP:conf/ccs/BruggerKKR023}
\begin{bchapter}
\bauthor{\bsnm{Brugger}, \binits{L.S.}},
\bauthor{\bsnm{Kov{\'{a}}cs}, \binits{L.}},
\bauthor{\bsnm{Komel}, \binits{A.P.}},
\bauthor{\bsnm{Rain}, \binits{S.}},
\bauthor{\bsnm{Rawson}, \binits{M.}}:
\bctitle{{CheckMate: Automated Game-Theoretic Security Reasoning}}.
In: \bbtitle{CCS},
pp. \bfpage{1407}--\blpage{1421}
(\byear{2023}).
\doiurl{10.1145/3576915.3623183}
\end{bchapter}
\endbibitem

\bibitem[\protect\citeauthoryear{Biere}{2008}]{Picosat}
\begin{barticle}
\bauthor{\bsnm{Biere}, \binits{A.}}:
\batitle{{PicoSAT Essentials}}.
\bjtitle{J. Satisf. Boolean Model. Comput.}
\bvolume{4}(\bissue{2-4}),
\bfpage{75}--\blpage{97}
(\byear{2008})
\end{barticle}
\endbibitem

\bibitem[\protect\citeauthoryear{De~Moura and Bj{\o}rner}{2008}]{Z3}
\begin{bchapter}
\bauthor{\bsnm{De~Moura}, \binits{L.}},
\bauthor{\bsnm{Bj{\o}rner}, \binits{N.}}:
\bctitle{{Z3: An Efficient SMT Solver}}.
In: \bbtitle{Proc. of TACAS},
pp. \bfpage{337}--\blpage{340}
(\byear{2008})
\end{bchapter}
\endbibitem

\bibitem[\protect\citeauthoryear{Barbosa et~al.}{2022}]{CVC5}
\begin{bchapter}
\bauthor{\bsnm{Barbosa}, \binits{H.}},
\bauthor{\bsnm{Barrett}, \binits{C.W.}},
\bauthor{\bsnm{Brain}, \binits{M.}},
\bauthor{\bsnm{Kremer}, \binits{G.}},
\bauthor{\bsnm{Lachnitt}, \binits{H.}},
\bauthor{\bsnm{Mann}, \binits{M.}},
\bauthor{\bsnm{Mohamed}, \binits{A.}},
\bauthor{\bsnm{Mohamed}, \binits{M.}},
\bauthor{\bsnm{Niemetz}, \binits{A.}},
\bauthor{\bsnm{N{\"{o}}tzli}, \binits{A.}},
\bauthor{\bsnm{Ozdemir}, \binits{A.}},
\bauthor{\bsnm{Preiner}, \binits{M.}},
\bauthor{\bsnm{Reynolds}, \binits{A.}},
\bauthor{\bsnm{Sheng}, \binits{Y.}},
\bauthor{\bsnm{Tinelli}, \binits{C.}},
\bauthor{\bsnm{Zohar}, \binits{Y.}}:
\bctitle{{CVC5: {A} Versatile and Industrial-Strength {SMT} Solver}}.
In: \bbtitle{Proc. of TACAS},
pp. \bfpage{415}--\blpage{442}
(\byear{2022})
\end{bchapter}
\endbibitem

\bibitem[\protect\citeauthoryear{Weidenbach et~al.}{2009}]{Spass09}
\begin{bchapter}
\bauthor{\bsnm{Weidenbach}, \binits{C.}},
\bauthor{\bsnm{Dimova}, \binits{D.}},
\bauthor{\bsnm{Fietzke}, \binits{A.}},
\bauthor{\bsnm{Kumar}, \binits{R.}},
\bauthor{\bsnm{Suda}, \binits{M.}},
\bauthor{\bsnm{Wischnewski}, \binits{P.}}:
\bctitle{{SPASS} version 3.5}.
In: \bbtitle{Proc. of CADE},
pp. \bfpage{140}--\blpage{145}
(\byear{2009})
\end{bchapter}
\endbibitem

\bibitem[\protect\citeauthoryear{Kov{\'a}cs and Voronkov}{2013}]{kovacs2013first}
\begin{bchapter}
\bauthor{\bsnm{Kov{\'a}cs}, \binits{L.}},
\bauthor{\bsnm{Voronkov}, \binits{A.}}:
\bctitle{{First-Order Theorem Proving and Vampire}}.
In: \bbtitle{CAV},
pp. \bfpage{1}--\blpage{35}
(\byear{2013})
\end{bchapter}
\endbibitem

\bibitem[\protect\citeauthoryear{Schulz et~al.}{2019}]{E19}
\begin{bchapter}
\bauthor{\bsnm{Schulz}, \binits{S.}},
\bauthor{\bsnm{Cruanes}, \binits{S.}},
\bauthor{\bsnm{Vukmirovic}, \binits{P.}}:
\bctitle{{Faster, Higher, Stronger: {E} 2.3}}.
In: \bbtitle{Proc. of CADE},
pp. \bfpage{495}--\blpage{507}
(\byear{2019})
\end{bchapter}
\endbibitem

\bibitem[\protect\citeauthoryear{Cruanes}{2017}]{zipperposition}
\begin{bchapter}
\bauthor{\bsnm{Cruanes}, \binits{S.}}:
\bctitle{{Superposition with Structural Induction}}.
In: \bbtitle{Proc. of FroCoS},
pp. \bfpage{172}--\blpage{188}
(\byear{2017})
\end{bchapter}
\endbibitem

\bibitem[\protect\citeauthoryear{Buchberger}{2006}]{Buchberger06a}
\begin{barticle}
\bauthor{\bsnm{Buchberger}, \binits{B.}}:
\batitle{{Bruno Buchberger's PhD thesis 1965: An Algorithm for Finding the Basis Elements of the Residue Class Ring of a Zero Dimensional Polynomial Ideal}}.
\bjtitle{J. Symb. Comput.}
\bvolume{41}(\bissue{3-4}),
\bfpage{475}--\blpage{511}
(\byear{2006})
\doiurl{10.1016/j.jsc.2005.09.007}
\end{barticle}
\endbibitem

\bibitem[\protect\citeauthoryear{Nieuwenhuis and Rubio}{2001}]{Rubio01}
\begin{bchapter}
\bauthor{\bsnm{Nieuwenhuis}, \binits{R.}},
\bauthor{\bsnm{Rubio}, \binits{A.}}:
\bctitle{{Paramodulation-Based Theorem Proving}}.
In: \bbtitle{Handbook of Automated Reasoning},
pp. \bfpage{371}--\blpage{443}
(\byear{2001}).
\doiurl{10.1016/b978-044450813-3/50009-6} .
\burl{https://doi.org/10.1016/b978-044450813-3/50009-6}
\end{bchapter}
\endbibitem

\bibitem[\protect\citeauthoryear{Robinson}{1965}]{Robinson65}
\begin{barticle}
\bauthor{\bsnm{Robinson}, \binits{J.A.}}:
\batitle{{A Machine-Oriented Logic Based on the Resolution Principle}}.
\bjtitle{J. {ACM}}
\bvolume{12}(\bissue{1}),
\bfpage{23}--\blpage{41}
(\byear{1965})
\doiurl{10.1145/321250.321253}
\end{barticle}
\endbibitem

\bibitem[\protect\citeauthoryear{Bachmair and Ganzinger}{1994}]{BG94}
\begin{barticle}
\bauthor{\bsnm{Bachmair}, \binits{L.}},
\bauthor{\bsnm{Ganzinger}, \binits{H.}}:
\batitle{{Rewrite-Based Equational Theorem Proving with Selection and Simplification}}.
\bjtitle{J. Log. Comput.}
\bvolume{4}(\bissue{3}),
\bfpage{217}--\blpage{247}
(\byear{1994})
\end{barticle}
\endbibitem

\bibitem[\protect\citeauthoryear{Biere}{2004}]{DBLP:conf/sat/Biere04}
\begin{bchapter}
\bauthor{\bsnm{Biere}, \binits{A.}}:
\bctitle{Resolve and expand}.
In: \bbtitle{Proc. of {SAT}}
(\byear{2004}).
\doiurl{10.1007/11527695\_5}
\end{bchapter}
\endbibitem

\bibitem[\protect\citeauthoryear{Sekar et~al.}{2001}]{handbook-indexing}
\begin{bchapter}
\bauthor{\bsnm{Sekar}, \binits{R.}},
\bauthor{\bsnm{Ramakrishnan}, \binits{I.V.}},
\bauthor{\bsnm{Voronkov}, \binits{A.}}:
\bctitle{{Term Indexing}}.
In: \bbtitle{Handbook of Automated Reasoning},
pp. \bfpage{1853}--\blpage{1964}
(\byear{2001})
\end{bchapter}
\endbibitem

\bibitem[\protect\citeauthoryear{Nieuwenhuis et~al.}{2001}]{voronkov-evaluation-indexing}
\begin{bchapter}
\bauthor{\bsnm{Nieuwenhuis}, \binits{R.}},
\bauthor{\bsnm{Hillenbrand}, \binits{T.}},
\bauthor{\bsnm{Riazanov}, \binits{A.}},
\bauthor{\bsnm{Voronkov}, \binits{A.}}:
\bctitle{{On the Evaluation of Indexing Techniques for Theorem Proving}}.
In: \bbtitle{Proc. of IJCAR},
pp. \bfpage{257}--\blpage{271}
(\byear{2001})
\end{bchapter}
\endbibitem

\bibitem[\protect\citeauthoryear{Schulz}{2013}]{schulz2013simple}
\begin{bchapter}
\bauthor{\bsnm{Schulz}, \binits{S.}}:
\bctitle{{Simple and Efficient Clause Subsumption with Feature Vector Indexing}}.
In: \bbtitle{Automated Reasoning and Mathematics - Essays in Memory of William W. McCune},
pp. \bfpage{45}--\blpage{67}
(\byear{2013})
\end{bchapter}
\endbibitem

\bibitem[\protect\citeauthoryear{Kapur and Narendran}{1986}]{matching-np-complete}
\begin{bchapter}
\bauthor{\bsnm{Kapur}, \binits{D.}},
\bauthor{\bsnm{Narendran}, \binits{P.}}:
\bctitle{{{NP}-Completeness of the Set Unification and Matching Problems}}.
In: \bbtitle{Proc. of CADE},
pp. \bfpage{489}--\blpage{495}
(\byear{1986})
\end{bchapter}
\endbibitem

\bibitem[\protect\citeauthoryear{Rath et~al.}{2022}]{rath2022first}
\begin{bchapter}
\bauthor{\bsnm{Rath}, \binits{J.}},
\bauthor{\bsnm{Biere}, \binits{A.}},
\bauthor{\bsnm{Kov{\'a}cs}, \binits{L.}}:
\bctitle{{First-Order Subsumption via {SAT} Solving}}.
In: \bbtitle{FMCAD},
p. \bfpage{160}
(\byear{2022})
\end{bchapter}
\endbibitem

\bibitem[\protect\citeauthoryear{Coutelier et~al.}{2023}]{DBLP:conf/cade/CoutelierKRR23}
\begin{bchapter}
\bauthor{\bsnm{Coutelier}, \binits{R.}},
\bauthor{\bsnm{Kov{\'{a}}cs}, \binits{L.}},
\bauthor{\bsnm{Rawson}, \binits{M.}},
\bauthor{\bsnm{Rath}, \binits{J.}}:
\bctitle{{SAT-Based Subsumption Resolution}}.
In: \bbtitle{Proc. of {CADE}},
pp. \bfpage{190}--\blpage{206}
(\byear{2023}).
\doiurl{10.1007/978-3-031-38499-8\_11}
\end{bchapter}
\endbibitem

\bibitem[\protect\citeauthoryear{Gleiss et~al.}{2020}]{gleiss-sd}
\begin{bchapter}
\bauthor{\bsnm{Gleiss}, \binits{B.}},
\bauthor{\bsnm{Kov{\'{a}}cs}, \binits{L.}},
\bauthor{\bsnm{Rath}, \binits{J.}}:
\bctitle{{Subsumption Demodulation in First-Order Theorem Proving}}.
In: \bbtitle{Proc. of IJCAR},
pp. \bfpage{297}--\blpage{315}
(\byear{2020})
\end{bchapter}
\endbibitem

\bibitem[\protect\citeauthoryear{E{\'{e}}n and S{\"{o}}rensson}{2003}]{DBLP:conf/sat/EenS03}
\begin{bchapter}
\bauthor{\bsnm{E{\'{e}}n}, \binits{N.}},
\bauthor{\bsnm{S{\"{o}}rensson}, \binits{N.}}:
\bctitle{An extensible {SAT}-solver}.
In: \bbtitle{Proc. of {SAT}},
pp. \bfpage{502}--\blpage{518}
(\byear{2003}).
\doiurl{10.1007/978-3-540-24605-3\_37}
\end{bchapter}
\endbibitem

\bibitem[\protect\citeauthoryear{Biere et~al.}{}]{DBLP:conf/sat/BiereFW23}
\begin{botherref}
\oauthor{\bsnm{Biere}, \binits{A.}},
\oauthor{\bsnm{Froleyks}, \binits{N.}},
\oauthor{\bsnm{Wang}, \binits{W.}}:
{CadiBack: Extracting Backbones with CaDiCaL}.
In: Proc. of {SAT},
pp. 3--1312.
\doiurl{10.4230/LIPICS.SAT.2023.3}
\end{botherref}
\endbibitem

\bibitem[\protect\citeauthoryear{Fleury and Biere}{2022}]{DBLP:journals/fmsd/FleuryB22}
\begin{barticle}
\bauthor{\bsnm{Fleury}, \binits{M.}},
\bauthor{\bsnm{Biere}, \binits{A.}}:
\batitle{Mining definitions in kissat with kittens}.
\bjtitle{Formal Methods Syst. Des.}
\bvolume{60}(\bissue{3}),
\bfpage{381}--\blpage{404}
(\byear{2022})
\doiurl{10.1007/S10703-023-00421-2}
\end{barticle}
\endbibitem

\bibitem[\protect\citeauthoryear{Marques-Silva et~al.}{2021}]{Marques-SilvaLynceMalik:2021}
\begin{bchapter}
\bauthor{\bsnm{Marques-Silva}, \binits{J.}},
\bauthor{\bsnm{Lynce}, \binits{I.}},
\bauthor{\bsnm{Malik}, \binits{S.}}:
\bctitle{{Conflict-Driven Clause Learning {SAT} Solvers}}.
In: \bbtitle{Handbook of Satisfiability}.
\bsertitle{Frontiers in Artificial Intelligence and Applications},
vol. \bseriesno{336},
pp. \bfpage{133}--\blpage{182}
(\byear{2021}).
\bcomment{Chap. 4}
\end{bchapter}
\endbibitem

\bibitem[\protect\citeauthoryear{Moskewicz et~al.}{2001}]{MoskewiczMadiganZhaoZhangMalik:2001:Chaff}
\begin{bchapter}
\bauthor{\bsnm{Moskewicz}, \binits{M.W.}},
\bauthor{\bsnm{Madigan}, \binits{C.F.}},
\bauthor{\bsnm{Zhao}, \binits{Y.}},
\bauthor{\bsnm{Zhang}, \binits{L.}},
\bauthor{\bsnm{Malik}, \binits{S.}}:
\bctitle{{{Chaff}: Engineering an Efficient {SAT} Solver}}.
In: \bbtitle{Proc. of DAC},
pp. \bfpage{530}--\blpage{535}
(\byear{2001})
\end{bchapter}
\endbibitem

\bibitem[\protect\citeauthoryear{Frisch and Giannaros}{2010}]{FrischGiannaros:2010}
\begin{bchapter}
\bauthor{\bsnm{Frisch}, \binits{A.M.}},
\bauthor{\bsnm{Giannaros}, \binits{P.A.}}:
\bctitle{{{SAT} Encodings of the At-Most-k Constraint. Some Old, Some New, Some Fast, Some Slow}}.
In: \bbtitle{Proc. of WS on Constraint Modelling and Reformulation}
(\byear{2010})
\end{bchapter}
\endbibitem

\bibitem[\protect\citeauthoryear{McCune and Wos}{1997}]{otter}
\begin{barticle}
\bauthor{\bsnm{McCune}, \binits{W.}},
\bauthor{\bsnm{Wos}, \binits{L.}}:
\batitle{\textsc{Otter}--- the {CADE-13} competition incarnations}.
\bjtitle{Journal of Automated Reasoning}
\bvolume{18},
\bfpage{211}--\blpage{220}
(\byear{1997})
\end{barticle}
\endbibitem

\bibitem[\protect\citeauthoryear{Voronkov}{2014}]{Voronkov:2014:Avatar}
\begin{bchapter}
\bauthor{\bsnm{Voronkov}, \binits{A.}}:
\bctitle{{{AVATAR}: The Architecture for First-Order Theorem Provers}}.
In: \bbtitle{Proc. of CAV},
pp. \bfpage{696}--\blpage{710}
(\byear{2014}).
\doiurl{10.1007/978-3-319-08867-9\_46}
\end{bchapter}
\endbibitem

\bibitem[\protect\citeauthoryear{Biere et~al.}{2020}]{kissat}
\begin{bchapter}
\bauthor{\bsnm{Biere}, \binits{A.}},
\bauthor{\bsnm{Fazekas}, \binits{K.}},
\bauthor{\bsnm{Fleury}, \binits{M.}},
\bauthor{\bsnm{Heisinger}, \binits{M.}}:
\bctitle{\textsc{CaDiCaL}, \textsc{Kissat}, \textsc{Paracooba}, \textsc{Plingeling} and \textsc{Treengeling} entering the {SAT} competition 2020}.
In: \bbtitle{Proc. of {SAT} Competition 2020: Solver and Benchmark Descriptions},
pp. \bfpage{50}--\blpage{53}
(\byear{2020}).
\burl{http://hdl.handle.net/10138/318450}
\end{bchapter}
\endbibitem

\bibitem[\protect\citeauthoryear{Sutcliffe}{2017}]{Sutcliffe:2017:TPTP}
\begin{barticle}
\bauthor{\bsnm{Sutcliffe}, \binits{G.}}:
\batitle{{The TPTP Problem Library and Associated Infrastructure. From CNF to TH0, TPTP v6.4.0}}.
\bjtitle{J. of Automated Reasoning}
\bvolume{59}(\bissue{4}),
\bfpage{483}--\blpage{502}
(\byear{2017})
\end{barticle}
\endbibitem

\bibitem[\protect\citeauthoryear{Pedregosa et~al.}{2011}]{scikit-learn}
\begin{barticle}
\bauthor{\bsnm{Pedregosa}, \binits{F.}},
\bauthor{\bsnm{Varoquaux}, \binits{G.}},
\bauthor{\bsnm{Gramfort}, \binits{A.}},
\bauthor{\bsnm{Michel}, \binits{V.}},
\bauthor{\bsnm{Thirion}, \binits{B.}},
\bauthor{\bsnm{Grisel}, \binits{O.}},
\bauthor{\bsnm{Blondel}, \binits{M.}},
\bauthor{\bsnm{Prettenhofer}, \binits{P.}},
\bauthor{\bsnm{Weiss}, \binits{R.}},
\bauthor{\bsnm{Dubourg}, \binits{V.}},
\bauthor{\bsnm{Vanderplas}, \binits{J.}},
\bauthor{\bsnm{Passos}, \binits{A.}},
\bauthor{\bsnm{Cournapeau}, \binits{D.}},
\bauthor{\bsnm{Brucher}, \binits{M.}},
\bauthor{\bsnm{Perrot}, \binits{M.}},
\bauthor{\bsnm{Duchesnay}, \binits{E.}}:
\batitle{Scikit-learn: Machine learning in {P}ython}.
\bjtitle{Journal of Machine Learning Research}
\bvolume{12},
\bfpage{2825}--\blpage{2830}
(\byear{2011})
\end{barticle}
\endbibitem

\bibitem[\protect\citeauthoryear{Goodfellow et~al.}{2016}]{DBLP:books/daglib/0040158}
\begin{bbook}
\bauthor{\bsnm{Goodfellow}, \binits{I.J.}},
\bauthor{\bsnm{Bengio}, \binits{Y.}},
\bauthor{\bsnm{Courville}, \binits{A.C.}}:
\bbtitle{Deep Learning}.
\bsertitle{Adaptive computation and machine learning},
(\byear{2016}).
\burl{http://www.deeplearningbook.org/}
\end{bbook}
\endbibitem

\bibitem[\protect\citeauthoryear{Beyer et~al.}{2017}]{BeyerLoweWendler:2017:benchexec}
\begin{barticle}
\bauthor{\bsnm{Beyer}, \binits{D.}},
\bauthor{\bsnm{L{\"o}we}, \binits{S.}},
\bauthor{\bsnm{Wendler}, \binits{P.}}:
\batitle{{Reliable Benchmarking: Requirements and Solutions}}.
\bjtitle{J. on Software Tools for Technology Transfer}
\bvolume{21}(\bissue{1}),
\bfpage{1}--\blpage{29}
(\byear{2017})
\end{barticle}
\endbibitem

\bibitem[\protect\citeauthoryear{Tange}{2018}]{Tange:2018:GNUParallel}
\begin{bbook}
\bauthor{\bsnm{Tange}, \binits{O.}}:
\bbtitle{{GNU} Parallel 2018},
(\byear{2018})
\end{bbook}
\endbibitem

\bibitem[\protect\citeauthoryear{Tammet}{1998}]{Tammet:1998}
\begin{bchapter}
\bauthor{\bsnm{Tammet}, \binits{T.}}:
\bctitle{{Towards Efficient Subsumption}}.
In: \bbtitle{Proc. of CADE},
pp. \bfpage{427}--\blpage{441}
(\byear{1998})
\end{bchapter}
\endbibitem

\bibitem[\protect\citeauthoryear{Gottlob and Leitsch}{1985a}]{GottlobLeitsch:1985a}
\begin{barticle}
\bauthor{\bsnm{Gottlob}, \binits{G.}},
\bauthor{\bsnm{Leitsch}, \binits{A.}}:
\batitle{{On the Efficiency of Subsumption Algorithms}}.
\bjtitle{J. of the {ACM}}
\bvolume{32}(\bissue{2}),
\bfpage{280}--\blpage{295}
(\byear{1985})
\end{barticle}
\endbibitem

\bibitem[\protect\citeauthoryear{Gottlob and Leitsch}{1985b}]{GottlobLeitsch:1985b}
\begin{bchapter}
\bauthor{\bsnm{Gottlob}, \binits{G.}},
\bauthor{\bsnm{Leitsch}, \binits{A.}}:
\bctitle{{Fast Subsumption Algorithms}}.
In: \bbtitle{Proc. of {EUROCAL} {\textquotesingle}85},
pp. \bfpage{64}--\blpage{77}
(\byear{1985})
\end{bchapter}
\endbibitem

\bibitem[\protect\citeauthoryear{Kov{\'{a}}cs and Voronkov}{2013}]{DBLP:conf/cav/KovacsV13}
\begin{bchapter}
\bauthor{\bsnm{Kov{\'{a}}cs}, \binits{L.}},
\bauthor{\bsnm{Voronkov}, \binits{A.}}:
\bctitle{{First-Order Theorem Proving and Vampire}}.
In: \bbtitle{CAV},
pp. \bfpage{1}--\blpage{35}
(\byear{2013}).
\doiurl{10.1007/978-3-642-39799-8\_1} .
\burl{https://doi.org/10.1007/978-3-642-39799-8\_1}
\end{bchapter}
\endbibitem

\bibitem[\protect\citeauthoryear{Liffiton and Maglalang}{2012}]{LiffitonMaglalang:2012}
\begin{bchapter}
\bauthor{\bsnm{Liffiton}, \binits{M.H.}},
\bauthor{\bsnm{Maglalang}, \binits{J.C.}}:
\bctitle{{A Cardinality Solver: More Expressive Constraints for Free}}.
In: \bbtitle{Proc. of {SAT}},
pp. \bfpage{485}--\blpage{486}
(\byear{2012})
\end{bchapter}
\endbibitem

\bibitem[\protect\citeauthoryear{Gebser et~al.}{2009}]{Gebser2009a}
\begin{bchapter}
\bauthor{\bsnm{Gebser}, \binits{M.}},
\bauthor{\bsnm{Kaminski}, \binits{R.}},
\bauthor{\bsnm{Kaufmann}, \binits{B.}},
\bauthor{\bsnm{Schaub}, \binits{T.}}:
\bctitle{{On the Implementation of Weight Constraint Rules in Conflict-Driven ASP Solvers}}.
In: \bbtitle{Proc. of {ICLP}},
pp. \bfpage{250}--\blpage{264}
(\byear{2009})
\end{bchapter}
\endbibitem

\bibitem[\protect\citeauthoryear{Walsh}{2000}]{Walsh:2000}
\begin{bchapter}
\bauthor{\bsnm{Walsh}, \binits{T.}}:
\bctitle{{SAT} v {CSP}}.
In: \bbtitle{Proc. of CP},
pp. \bfpage{441}--\blpage{456}
(\byear{2000})
\end{bchapter}
\endbibitem

\bibitem[\protect\citeauthoryear{Ryan}{2004}]{ryan2004vmtf}
\begin{botherref}
\oauthor{\bsnm{Ryan}, \binits{L.}}:
Efficient algorithms for clause-learning {SAT} solvers.
Master's thesis,
Simon Fraser University
(2004)
\end{botherref}
\endbibitem

\bibitem[\protect\citeauthoryear{Heule et~al.}{2020}]{DBLP:journals/jar/HeuleKB20}
\begin{barticle}
\bauthor{\bsnm{Heule}, \binits{M.J.H.}},
\bauthor{\bsnm{Kiesl}, \binits{B.}},
\bauthor{\bsnm{Biere}, \binits{A.}}:
\batitle{{Strong Extension-Free Proof Systems}}.
\bjtitle{J. Autom. Reason.}
\bvolume{64}(\bissue{3}),
\bfpage{533}--\blpage{554}
(\byear{2020})
\doiurl{10.1007/S10817-019-09516-0}
\end{barticle}
\endbibitem

\bibitem[\protect\citeauthoryear{E{\'{e}}n and Biere}{2005}]{DBLP:conf/sat/EenB05}
\begin{bchapter}
\bauthor{\bsnm{E{\'{e}}n}, \binits{N.}},
\bauthor{\bsnm{Biere}, \binits{A.}}:
\bctitle{{Effective Preprocessing in {SAT} Through Variable and Clause Elimination}}.
In: \bbtitle{Proc. of {SAT}},
vol. \bseriesno{3569},
pp. \bfpage{61}--\blpage{75}
(\byear{2005}).
\doiurl{10.1007/11499107\_5}
\end{bchapter}
\endbibitem

\bibitem[\protect\citeauthoryear{Biere et~al.}{2021}]{DBLP:series/faia/BiereJK21}
\begin{bchapter}
\bauthor{\bsnm{Biere}, \binits{A.}},
\bauthor{\bsnm{J{\"{a}}rvisalo}, \binits{M.}},
\bauthor{\bsnm{Kiesl}, \binits{B.}}:
\bctitle{Preprocessing in {SAT} solving}.
In: \bbtitle{Handbook of Satisfiability - Second Edition}.
\bsertitle{Frontiers in Artificial Intelligence and Applications},
vol. \bseriesno{336},
pp. \bfpage{391}--\blpage{435}
(\byear{2021}).
\doiurl{10.3233/FAIA200992}
\end{bchapter}
\endbibitem

\end{thebibliography}

\end{document}